      \tcb@hack@currenvir\tabular{#1}},
\endtabular\arrayrulecolor{black}},
\newcolumntype{Y}{>{\raggedleft\arraybackslash}X}
\colorlet{verylightgray}{lightgray!50!white}
\numberwithin{equation}{section}
\def\scrcorr#1#2{%
	\def\subcorrlength{#1}%
	\def\supcorrlength{#2}%
	\futurelet\next\subcorrcheck%
}
\def\subcorrcheck{\ifx\next_\expandafter\subcorr\else\expandafter\supcorrcheck\fi}
\def\subcorr_#1{%
	_{\mkern\subcorrlength#1}%
	\futurelet\next\supcorrcheck%
}
\def\supcorrcheck{\ifx\next^\expandafter\supcorr\fi}
\def\supcorr^#1{%
	^{\mkern\supcorrlength#1}%
}
\newcommand*{\bmskew}{\skew4}
\newcommand*	{\?}			{\ensuremath{\:\cdot\:}} 	%
\let\oldtop\top				%
\renewcommand*	{\top}		{\oldtop\scrcorr{-4mu}{0mu}}
\let\from\colon				%
\let\with\colon				%
\let\epsilon\varepsilon		%
\let\phi\varphi				%
\renewcommand*	{\restriction}	{\mathord{\upharpoonright}}	
\let\oldexists\exists
\let\oldforall\forall
\renewcommand*{\exists}{\mathord{\oldexists\mkern2mu}}
\renewcommand*{\forall}{\mathord{\oldforall\mkern2mu}}
\let\mathbb\varmathbb		%
\let\bb\mathbb					%
\let\fr\mathfrak				%
\let\cal\mathcal				%
\let\sf\mathsf					%
\let\oldH\H
\xdef \csname \x \endcsname			{ \noexpand\cal{\x} }%
\xdef \csname \x\x \endcsname			{ \noexpand\bb{\x} }%
\xdef \csname \x\x\x \endcsname		{ \noexpand\fr{\x} }%
\xdef \csname \x\x\x\x \endcsname	{ \noexpand\bm{\x} }%
\DeclarePairedDelimiter	{\set}	{\lbrace}	{\rbrace}	%
\DeclarePairedDelimiter	{\event}	{\lbrace}	{\rbrace}	%
\DeclarePairedDelimiter	{\spn}	{\langle}	{\rangle} 	%
\DeclarePairedDelimiter	{\size}	{\lvert}		{\rvert}		%
\DeclarePairedDelimiter	{\abs}	{\lvert}		{\rvert}		%
\DeclarePairedDelimiter	{\pair}	{{\mathord{<}}}	
															{{\mathord{>}}}	%
\newcommand*	{\dummydelim}	[2]	{$\left#1\vphantom{#2}\right.$}
\newcommand		{\nbrace}		[2]	{\sbox0{\dummydelim{#1}{#2}}\hspace{\the\dimexpr -0.85\wd0 + 2pt\relax}}
\DeclarePairedDelimiterX	{\bag}[1]	{\lbrace}	{\rbrace}
{%
	\nbrace{\lbrace}{#1}\delimsize\lbrace\mathopen{}%
	#1%
	\mathclose{}\delimsize\rbrace\nbrace{\rbrace}{#1}%
}
\newcommand*{\powerset}{\mathcal{P}}	%
\newcommand*{\powerbag}{\mathcal{B}}	%
\newcommand*{\fin}{\ensuremath{\mathsf{fin}}}			%
\newcommand*	{\under}	{\mathbin{\vert}}		%
\DeclareMathOperator	{\Expectation}	{E}		%
\theoremstyle{plain}		%
\newtheorem		{theorem}						{Theorem}		[section]
\newtheorem		{proposition}	[theorem]	{Proposition}
\newtheorem		{lemma}			[theorem]	{Lemma}
\newtheorem		{corollary}		[theorem]	{Corollary}
\newtheorem		{fact}			[theorem]	{Fact}
\theoremstyle{definition} %
\newtheorem		{remark}			[theorem]	{Remark}
\newtheorem		{example}		[theorem]	{Example}
\newtheorem		{definition}	[theorem]	{Definition}
\crefname	{observation}	{Observation}	{Observations}
\crefname	{fact}			{Fact}			{Facts}
\crefname	{notation}		{Notation}		{Notations}
\crefname	{assumption}	{Assumption}	{Assumptions}
\crefname	{equivalence}	{Equivalence}	{Equivalences}
\crefname	{function}		{Function}		{Functions}
\crefname	{condition}		{Condition}		{Conditions}
	\AtEndEnvironment{\environment}{\popQED}
\newcommand*{\classstyle}[1]{\bm{\sf{#1}}}
\newcommand*{\setstyle}[1]{%
	\bm{#1}%
	\mathchoice{}{}{\mkern-1mu}{\mkern-1mu}%
}
\newcommand*{\problemstyle}[1]{\sf{#1}}
\newcommand*{\FO}{\classstyle{FO}}
\newcommand*{\true}{\ensuremath{\sf{true}}}
\newcommand*{\false}{\ensuremath{\sf{false}}}
\newcommand*{\PQE}{\problemstyle{PQE}}
\newcommand*{\EMPTY}{\problemstyle{EMPTY}}
\newcommand*{\RelNames}{\setstyle{R\mkern-.8mue\mkern-1.1mul}}
\newcommand*{\DB}{\setstyle{D\mkern-1muB}\scrcorr{-1mu}{1mu}}
\newcommand*{\Facts}{\setstyle{F}\scrcorr{-3mu}{0mu}}
\newcommand*{\NonEmpty}{\setstyle{N}\scrcorr{-3mu}{0mu}}
\newcommand*{\Block}{\setstyle{B}}
\newcommand*{\Blocks}{\fr{B}}
\newcommand*{\complete}[1]{ \smash{\widehat{#1}} }
\newcommand*{\SET}{\sf{set}}
\newcommand*{\SIMPLE}{\sf{simple}}
\newcommand*	{\PDB}					{\classstyle{PDB}}
\newcommand*	{\finsetPDB}			{\PDB_{<\omega}^{\SET}}
\newcommand*	{\ctblPDB}				{\PDB_{\leq\omega}}
\newcommand*	{\ctblsetPDB}			{\PDB_{\leq\omega}^{\SET}}
\newcommand*	{\TI}						{\classstyle{TI}}
\newcommand*	{\finTI}					{\TI_{<\omega}}
\newcommand*	{\finsetTI}				{\TI_{<\omega}^{\SET}}
\newcommand*	{\ctblTI}				{\TI_{\leq\omega}}
\newcommand*	{\ctblsetTI}			{\TI_{\leq\omega}^{\SET}}
\newcommand*	{\standardTI}			{\classstyle{StandardTI}}
\newcommand*	{\BID}					{\classstyle{BID}}
\newcommand*{\REL}[1]{\ensuremath{\mathsf{#1}}}				%
\newcommand*{\STR}[1]{\ensuremath{\text{\sffamily#1}}}	%
\newcommand*{\ATT}[1]{\ensuremath{\text{\sffamily#1}}}	%
\DeclareMathOperator	{\ar}		{ar}			%
\DeclareMathOperator	{\adom}	{adom}		%
\renewcommand*	{\c}	{\ensuremath{\scriptscriptstyle\complement}}	%
\newcommand*{\decode}[1]{\llbracket #1 \rrbracket_{2}}
\newcommand*{\mult}{%
	\mathord{%
	\vphantom{\#}\smash{\mathchoice
		{\text{\Large\#}}
		{\text{\Large\#}}
		{\text{\small\#}}
		{\text{\tiny\#}}
	}\mkern-2mu}%
	\scrcorr{0mu}{4mu}}
 \gdef\bcref{\hyperref[fac:bc]{Borel-Cantelli Lemma}}%
\title{Independence in Infinite Probabilistic Databases}
\author[1]{Martin Grohe}
\affil[1]{\bgroup\large\url{grohe@informatik.rwth-aachen.de}\egroup}
\author[2]{Peter Lindner}
\affil[2]{\bgroup\large\url{lindner@informatik.rwth-aachen.de}\egroup}
\affil[\bgroup\empty\egroup]{\bgroup\large RWTH Aachen University\egroup}
\begin{document}

\maketitle %

\begin{abstract}
  Probabilistic databases (PDBs) model uncertainty in data.  The
  current standard is to view PDBs as finite probability spaces over
  relational database instances. Since many attributes in typical
  databases have infinite domains, such as integers, strings, or real
  numbers, it is often more natural to view PDBs as infinite
  probability spaces over database instances. In this paper, we
  lay the mathematical foundations of infinite probabilistic databases.
  Our focus then is on independence assumptions. Tuple-independent PDBs
  play a central role in theory and practice of PDBs. Here, we study
  infinite tuple-independent PDBs as well as related models such as
  infinite block-independent disjoint PDBs. While the standard model
  of PDBs focuses on a set-based semantics, we also study
  tuple-independent PDBs with a bag semantics and independence in PDBs over
  uncountable fact spaces.

  We also propose a new approach to PDBs with an open-world
  assumption, addressing issues raised by Ceylan et al.~(Proc.\ KR
  2016) and generalizing their work, which is still rooted in finite
  tuple-independent PDBs.

  Moreover, for countable PDBs we propose an approximate query
  answering algorithm.

\end{abstract}

\section{Introduction}

Probabilistic (relational) databases (PDBs)
\cite{Suciu+2011,VanDenBroeckSuciu2017} extend the relational database model by
probability distributions in order to model uncertainty. Formally, a 
probabilistic database is a probability space over database instances of some
schema. The database instances of a probabilistic database are then called
its \emph{possible worlds}.

Applications of probabilistic databases are, for example, the management of
noisy sensor data \cite{Deshpande+2004}, information extraction
\cite{FuhrRolleke1997}, data integration, and data cleaning
\cite{Andritsos+2006, DeSa+2019}. Detailed discussions of these and other
applications of probabilistic databases can be found in
\cite{AggarwalYu2009,Suciu+2011}.

\subsection{Independence Assumptions}

The most extensively studied class of PDBs is the class of (finite)
\emph{tuple-independent} probabilistic databases \cite{DalviSuciu2004}.
Therein, all \emph{facts} (that is, events of the form \enquote{tuple $t$
appears in relation $R$}) are stochastically independent.  The probabilities of
these events are called \emph{marginal fact probabilities} or \emph{marginals}.
Due to the independence, joint probabilities of facts can easily be computed by
multiplying the respective marginals. Thus, the probability space is already
uniquely determined by the marginal probabilities of the individual facts. In
order to specify a tuple-independent PDB, it therefore suffices to give the
marginal probabilities of all relevant facts.

\begin{example}[Orders at an Online Retailer]\label{ex:order}
  This example is adapted from \cite{KennedyKoch2010}. Suppose we have a
  database with a single relation $\REL{Order}$ that stores information about
  the orders made at an online retailer, and further suppose that the data is
  subject to uncertainty. We model this with a probabilistic database. For
  simplicity, we assume that the presence of different tuples is stochastically
  independent. \Cref{fig:order} depicts a database instance (on the left) drawn
  from such a tuple-independent PDB, and the basic way to represent that PDB
  (on the right). The representation consistis of a list of all possible facts,
  here $\REL{Order}(\STR{Joe},\STR{New~York},\STR{99})$,
  $\REL{Order}(\STR{Emma},\STR{Austin},\STR{70})$ and $\REL{Order}(\STR{Dave},
  \STR{Atlanta}, \STR{19})$, together with their \emph{marginal} probability,
  i.\,e. the probability of the respective fact to be present (here $0.8$,
  $1.0$ and $0.2$, respectively).

	\begin{figure}[H]
		\centering%
		\begin{varwidth}[c]{.45\textwidth}
		\begin{tcolorbox}[hbox,fancy table,tabular={ccc},title={$\REL{Order}$},before upper app={\rowcolor{verylightgray}}]
			\ATT{Customer}	& \ATT{ShipTo} 	& \ATT{Price} [\textdollar] \\
			\hline\hline%
			\STR{Emma}		& \STR{Austin}		& \STR{70}						 \\
			\hline%
			\STR{Dave}		& \STR{Atlanta}	& \STR{19}							
		\end{tcolorbox}
		\end{varwidth}
		$\quad\sim\quad$
		\begin{varwidth}[c]{.45\textwidth}\raggedright%
		\begin{tcolorbox}[hbox,fancy table,tabular={ccc | c},title={$\REL{Order}$},before upper app={\rowcolor{verylightgray}}]
			\ATT{Customer}	& \ATT{ShipTo} 	& \ATT{Price} [\textdollar]	& \ATT{Prob.} \\
			\hline\hline%
			\STR{Joe}		& \STR{New~York} 	& \STR{99} 							& \STR{0.8} \\
			\hline%
			\STR{Emma}		& \STR{Austin}		& \STR{70}							& \STR{1.0} \\
			\hline%
			\STR{Dave}		& \STR{Atlanta}	& \STR{19}							& \STR{0.2}
		\end{tcolorbox}
		\end{varwidth}
		\caption{A representation of a (tuple-independent) \emph{probabilistic
		database} of orders at an online retailer. The \enquote{$\sim$} indicates
		that the instance shown on the left is drawn at random from the PDB
		specified on the right.}\label{fig:order}
	\end{figure}

	\Cref{fig:order} is thus an encoding of a probabilistic database with $2^3$
	possible worlds, and the probability of each world is given by the product 
	of the \emph{probabilities of the present facts} times the \emph{converse
	probabilities of the non-present facts}. For this example, the instance
	shown on the left-hand side has probability $(1-0.8) \cdot 1.0 \cdot 0.2 =
	0.04$.
\end{example}

The focus of theoretical work on independence assumptions, and on
tuple-independence in particular has multiple reasons. First of all, 
probabilistic databases are non-trivial to represent: for finite probabilistic
databases, if all facts are uncertain, then the number of possible worlds is
exponential in the number of facts. Resorting to independence assumptions
sacrifices expressive power in order to facilitate representation. Second of
all, as a byproduct of the previous point, with independence assumptions, the
probability spaces that need to be discussed have a very simple structure, and
they are therefore readily accessible for theoretical investigation.

We note that the loss in expressive power is generally not as severe of an issue 
as it might seem, and can be compensated for by adding additional mechanisms on
top of the representations. For example, every finite probabilistic database
can be represented as a first order (or relational calculus) view of a
tuple-independent PDB (see \cite{Suciu+2011}).  While such a representation is
infeasible for practical matters, we can often use constraints over
tuple-independent PDBs to describe complex correlations more succinctly
\cite{JhaSuciu2012,VanDenBroeckSuciu2017}.  In practice, some systems working
with large amounts of uncertain data directly operate under the
tuple-independence assumption, for example, Knowledge Vault \cite{Dong+2014},
NELL \cite{Mitchell+2018} and DeepDive \cite{Zhang2015}.  Even beyond
tuple-independence, most existing PDB systems use independence assumptions at
some point to span large probability spaces from independent building blocks
\cite{Dalvi+2009}, cf.
\cite{GreenTannen2006,Aggarwal2009,Suciu+2011,VanDenBroeckSuciu2017}. This
includes block-independent disjoint PDBs \cite{DalviSuciu2007b} but also more
sophisticated representations \cite{Antova+2009,Widom2009}. 

The uncertainty in probabilistic databases can come in various flavors (cf.
\cite{Suciu+2011,VanDenBroeckSuciu2017}). \Cref{ex:order} exemplifies a PDB
with \emph{tuple-level uncertainty}, where there is a number
of possible tuples, but the presence of the individual tuples is subject to
uncertainty. Under \emph{attribute-level uncertainty}, there is a fixed number
of present tuples, but for each of them one (or more) attribute values are
subject to uncertainty (cf. the discussion in \cref{ex:temprec} below).
In general, both types of uncertainty can co-occur.

\subsection{From Finite To Infinite PDBs}

In the literature on probabilistic databases, whenever PDBs are formally 
introduced, they are usually defined as a probability space over
\emph{finitely} many possible worlds. Then, only finitely many instances can
have a non-zero probability and, in particular, only finitely many facts have a
non-zero marginal probability.

This assumption is clearly not natural if facts are uncertain, but involve
attributes ranging over infinite domains like the integers, real numbers, or
strings. Even if implementations put restrictions on these domains (64-bit
integers or floating point numbers, 256-character strings), the mathematical
abstraction that we use to reason about such systems is based on the ideal
infinite domains. For example, if a database records temperature measurements
(or other sensor data), then the values are reals. We may want to allow for
some noise in the data, naturally modeled by a normal distribution around the
measured values, and this already gives us a simple example of a PDB with an
infinite (even uncountable) sample space. We illustrate this situation with
our next example, and point out several problems that occur when we restrict
ourselves to finite PDBs.

\begin{example}[Temperature Measurements]\label{ex:temprec}
	Again, consider a database recording noisy temperature measurements as an
	example. Part of such a database instance is shown in \cref{fig:temprec}. 

	\begin{figure}[h]
		\centering%
		\begin{tcolorbox}[hbox,fancy table, tabular={c c c},title={$\REL{TempRec}$},before upper app={\rowcolor{verylightgray}}]
			\ATT{RoomNo}	& \ATT{Time}		& \ATT{Temp [\textdegree{}C]}\\\hline\hline
			\STR{4108}		& \STR{2021-07-01 8:00}		& \STR{21.2}\\\hline
			\STR{4108}		& \STR{2021-07-01 14:00}	& \STR{22.2}\\\hline
			\STR{4109}		& \STR{2021-07-01 8:00}		& \STR{22.1}\\\hline
			\STR{4109}		& \STR{2021-07-01 14:00}	& \STR{22.4}\\\hline
			\scriptsize$\vdots$	& \scriptsize$\vdots$		& \scriptsize$\vdots$
		\end{tcolorbox}
		\caption{A database of temperature recordings. The vertical dots are to
		indicate that the database contains many more (although finitely many)
		such entries.}\label{fig:temprec}
	\end{figure}

	Consider the following two queries against this (non-probabilistic)
	database:
	\begin{enumerate}[label=(Q\arabic*)]
		\item \emph{Has the temperature ever been between 20.2\,\textdegree{}C and
			20.5\,\textdegree{}C?} Suppose that none of our finitely many facts
			represents a recorded temperature in this range. Then the query will
			return \enquote{\STR{false}}.
		\item \emph{Is the temperature in office 4108 always lower than the one
			in office 4109?} If this is supported by our finitely many facts,
			then the query will return \enquote{\STR{true}}.
	\end{enumerate}
	Given that the underlying data is noisy, we might want to model our database
	of temperature recordings as a \emph{probabilistic} database, for
	simplicity, say, a tuple-independent one. Now reconsider (Q1) and (Q2) under
	the assumption that the data is modeled by such a PDB with finitely many
	possible facts.
	\begin{enumerate}[label=(Q\arabic*)]
		\item If the finitely many facts occurring in our PDB do not 
			contain a recorded temperature between 20.2 and 20.5\,\textdegree{}C,
			then the query (Q1) returns \enquote{\STR{false}} with probability
			$1$ (respectively \enquote{\STR{true}} with probability $0$).
		\item If for all temperature recordings occurring among the instances of,
			the values recorded in 4108 are always lower than all recordings from
			4109, then the query (Q2) returns \enquote{\STR{true}} with
			probability $1$ (respectively \enquote{\STR{false}} with probability
			$1$).
	\end{enumerate}
	However, both of these answers do not reflect what we would expect of the
	respective query on a stochastic model of the underlying uncertainty of the
	data. Instead, it seems more reasonable to assume that the events discussed
	above may have a high probability (after all, this is what is supported by
	the available data) but not exactly $1$, accounting for imprecisions in 
	measurement and potentially missing facts.

	Note that even among the technically \emph{impossible} events, there are
	distinctions in terms of plausibility. Similar to above, it is not
	reasonable to assume that the temperature in room 4108 never exceeds
	23\,\textdegree{}C \emph{with certainty}. What would be reasonable instead
	is, for example, that a temperature slightly above 
	23\,\textdegree{}C is more likely than the temperature exceeding 
	35\,\textdegree{}C. 

	In general, conclusions drawn from the probabilistic database alone, due to
	the finite (closed-world) setting, diverge from what we would infer from the
	true underlying data, even in terms of possibility and impossibility of
	events.

	One particular solution to the issues sketched here would be to replace the
	concrete temperature recordings with random variables that are distributed
	with a normal distribution whose mean is the original recording, together
	with a small variance. For example, we could replace the fact $\REL{TempRec}(
	\STR{4108}, \STR{2021-07-01~8:00}, \STR{21.2} )$ with $\REL{TempRec}(
	\STR{4108}, \STR{2021-07-01~8:00}, \mathcal N( 21.2, 0.1 ) )$,
        where
	$\mathcal N( 21.2, 0.1 )$ indicates a normally distributed random variable
	with mean $21.2$ and variance $0.1$. Note that such a PDB is basically an
	uncountably infinite block-independent disjoint PDB, with each block
	corresponding to one of the original facts.
\end{example}

Ceylan et al. \cite{Ceylan+2016} have already pointed out such issues with 
respect to the closed-world assumption of PDBs (that facts not mentioned by the
representation have probability $0$) before in the context of finite PDBs. Yet,
their approach towards tackling them is still confined to a finite setting and
therefore still exhibits the problems that we face in \cref{ex:temprec}.

\medskip

Up to date, there already exist a variety of practical probabilistic database
systems, some of which are especially designed to support infinite domains.
This includes MCDB / SimSQL \cite{Jampani+2011,Cai+2013}, PIP
\cite{KennedyKoch2010}, Orion \cite{Singh+2008a} and Trio
\cite{AgrawalWidom2009,Widom2009}. In particular, these systems can describe,
and work with PDBs as the one we propose in \cref{ex:temprec}. From the
theoretical point of view however, infinite probabilistic databases have lacked
a general framework for a long time compared to the long history of their
finite counterparts. To our knowledge, we were the first to lay such
foundations in our work \cite{GroheLindner2019,GroheLindner2020}.

Having a sound formal foundation for infinite PDBs is important,
especially because its mathematical
development proves to be far from trivial. A model of infinite
PDBs needs to be consistent with our intuitions on
the behavior of queries, and it should to include the usual finite
model as a special case. The probability spaces we obtain may be
uncountably infinite if the underlying attribute domains are.  While
in general the idea of query semantics is the same as in the finite
setting, we then also have to pay attention to whether they are still
well-defined because of measurability issues. When building database
systems, such foundational issues may seem remote, because in practice
we are always dealing with finite approximations of the infinite
space. Yet, it is desirable to have a semantics for such systems that
goes beyond a specific implementation on a specific machine, and such
a semantics will naturally refer to idealized infinite domains. Once
we have such an idealized semantics, we can argue that a specific
system adheres to it, approximately.

Our focus on tuple-independence stems from the central role of independence
assumptions in theory literature for finite PDBs. Independence assumptions make
PDBs easy to work with mathematically. Even more so, this is the case for
infinite PDBs. This makes tuple-independent and block-independent disjoint PDBs
a natural starting point for rigorous investigations of infinite PDBs.

Formally, infinite probabilistic databases are probability spaces whose sample
space consists of infinitely many database instances. Even in an infinite PDB,
each individual instance is finite; it is best here to think of database
instances as finite sets or finite bags (a.\,k.\,a. multisets) of facts. Thus,
abstractly, PDBs are probability spaces over finite sets or finite bags. In
probability theory, such probability spaces are known as point processes
\cite{DaleyVere-Jones2003,DaleyVere-Jones2008}.

\subsection{Contributions}

After carefully introducing the mathematical framework of infinite PDBs, in
this article we focus on tuple-independence in infinite PDBs and on various
generalizations of the tuple-independence assumption. The goal of our 
contribution is to broaden the understanding of independence assumptions in
infinite PDBs by identifying their abstract structure, and discussing it in 
settings of infinite set and bag PDBs.

We start by looking at countably infinite tuple-independent PDBs, which, like
finite tuple-independent PDBs, can be specified by giving all the marginal fact
probabilities $P(f)$. We show that for a countable family $\Facts$ of facts, a
tuple-independent PDB with fact probabilities $\big(P(f)\big)_{f\in\Facts}$
exists if and only if $\sum_{f\in\Facts}P(f)<\infty$ (\cref{thm:ti_series}).
We identify independent superpositions (a notion from point process theory) as
the mathematical abstraction underlying the construction of PDBs from smaller,
independent PDBs. This facilitates reasoning about infinite PDBs with
independence assumptions. We illustrate this by casting both countable 
tuple-independent and countable block-independent disjoint PDBs (PDBs made up
from independent blocks of facts such that the facts within each block are
mutually exclusive) as superposition constructions. This allows us to reobtain
our characterization for tuple-independent PDBs, and obtain an analogous
characterization of the existence of block-independent disjoint PDBs just by
using the properties of superpositions.

Rather as a side note to our main story, our discussion of countable set PDBs
is complemented by a few additional insights into the role of tuple-independent
PDBs in terms of expressiveness and query answering. We show that in countably
infinite PDBs, independence assumptions are more restrictive than their
counterpart in the finite setting: While every finite probabilistic database is
a first-order view over a tuple-independent one, this is not true in the
countably infinite case. 
We also give insights into the computability of approximate query evaluation in
countable tuple-independent PDBs: we show that query answering can be 
approximated with additive error by approximating infinite tuple-independent
PDBs with finite ones. We prove that there can be no algorithm that achieves
multiplicative approximations.

We use our countable tuple-independent PBDs in order to incorporate an
open-world assumption. Extending the ideas of \cite{Ceylan+2016}, we construct
potentially infinite open-world completions of a finite or infinite PDB. The
key-requirement is that the probability measure is faithfully extended: in an
open-world completion of a PDB, the probability measure should coincide with
the original one, when conditioned over the sample space of the original PDB. 

\smallskip

Up to this point, we have treated PDBs with a set semantics, but we
can extend our definitions and results to PDBs with a bag semantics. %
With the machinery of superpositions, it is relatively easy to prove
a general existence result for tuple-independent PDBs with a bag semantics,
where we simply combine the distributions of individual fact multiplicities
using a superposition. %
When it comes to a treatment of PDBs with an uncountable sample space,
a bag semantics turns out to be easier to handle. Even if one is only
interested in PDBs with a set semantics, bag semantics is a usueful
(and to some extent necessary) intermediate step.

Note that a generalization PDBs to uncountable spaces is important,
because in many applications we have real-valued attributes. The generalization
of tuple-independence is not completely straightforward, because typically in
an uncountable setting the individual fact probabilities will be zero. 
When dealing with uncountable PDBs, we build on the notion of
\emph{standard PDBs} introduced in \cite{GroheLindner2020,GroheLindner2022}.
Our treatment heavily draws from the mathematical theory of finite
point processes \cite{DaleyVere-Jones2003} and the notion of completely random measures
\cite{Kingman1967}.

\bigskip 

A conference version of this article, which contains the basic results for
countable infinite PDBs, has been presented at the 38th ACM SIGMOD-SIGACT-SIGAI
Symposium on Principles of Database Systems (PODS 2019)
\cite{GroheLindner2019}. However, the uniform construction principle of PDBs
from independent building blocks based on superpositions is new here.  In a
subsequent paper \cite{GroheLindner2020,GroheLindner2022}, we developed a
generic framework for uncountable PDBs, the so-called standard PDBs, and this
framework allowed us to extend our theory of independent PDBs to the
uncountable case in this article.  Building on our work, Carmeli et
al.~\cite{Carmeli+2021} studied the power of tuple-independent PDBs as a
representation system for countably infinite PDBs.

\subsection{Related Work}

The foundation of our work is the extensive literature on models for finite
probabilistic databases
\cite{GreenTannen2006,Aggarwal2009,Suciu+2011,VanDenBroeckSuciu2017}. Dalvi et
al. identified three \emph{facets} of research in probabilistic data management
\cite{Dalvi+2009}: \emph{semantics and representation}; \emph{query
evaluation}; and \emph{user interface} principles. Among these, our 
contribution mainly addresses semantics of probabilistic data in an infinite 
setting, with emphasis on independence assumptions. Concrete representation
systems \cite{GreenTannen2006,Suciu+2011}, query evaluation and user interfaces 
are not the focus of our work, but only occasionally touched upon.
Note that discussions of independence assumptions are abundant
in PDB literature (cf.
\cite{GreenTannen2006,Suciu+2011,VanDenBroeckSuciu2017}). In the finite
setting, such assumptions are exploited for concise representations of large
probability spaces over database instances.

\emph{Incomplete databases} \cite{ImielinskiLipski1984,vanderMeyden1998} are a 
non-probabilistic model of uncertain databases. As opposed to probabilistic
databases, models of incomplete databases typically do not assume finite
domains \cite{ImielinskiLipski1984,vanderMeyden1998,GreenTannen2006}.
Essentially, a PDB augments an incomplete database with probabilities and thus,
the problem of their representation is closely related to that of PDBs
\cite{GreenTannen2006}. Most of the PDB literature assumes this model, referred 
to as the \emph{possible worlds semantics} \cite{Dalvi+2009,Suciu+2011}. 

There exist system-oriented approaches to PDBs that are able to handle
continuous data, such as MCDB / SimSQL \cite{Jampani+2011,Cai+2013}, PIP
\cite{KennedyKoch2010}, Orion \cite{Singh+2008a} and Trio
\cite{AgrawalWidom2009,Widom2009}. Earlier, Dalvi et al. \cite{Dalvi+2009}
noted the insufficient understanding of models for uncountable PDBs in terms of
possible worlds semantics. Despite being over 10 years old, this statement is
for the most part still valid today. The data model of Orion 
\cite{Singh+2008a,Singh+2008b} is worth mentioning because it explicitly draws 
a connection from uncountable PDBs to the notion of possible worlds but only
allows a bounded number of tuples per PDB. Measure theoretic approaches to PDBs
in terms of possible worlds semantics are scarce. %

Also for \emph{probabilistic XML} \cite{Abiteboul+2009,KimelfeldSenellart2013}, 
an infinite model covering continuous distributions has been introduced 
\cite{Abiteboul+2011}. Yet, this approach leaves the document structure finite.
On the contrary, in \cite{Benedikt+2010}, the authors propose an unbounded
model of probabilistic XML that does not support continuously distributed data.

Query answering in PDBs is closely related to the problem of \emph{weighted 
model counting (WMC)} \cite{VanDenBroeckSuciu2017}, that is, to counting the
models of a logical sentence in a weighted way. In recent work, the WMC problem
was extended to infinite domains as well \cite{Belle2017}. The idea of 
completions of probabilistic databases is introduced in OpenPDBs
\cite{Ceylan+2016,FriedmanVanDenBroeck2019}, as a means to overcome problems 
arising from the closed-world assumption in probabilistic databases
\cite{Reiter1981,ZimanyiPirotte1997} as lined out before. The work
\cite{Stoyanovich+2011} essentially describes a model of block-independent
disjoint completions of a given \emph{incomplete} database where the
probabilities of the missing facts are inferred from the existing ones.
In \cite{Borgwardt+2017,Borgwardt+2018}, ontologies are used to improve query
results on PDBs.

Some related fields of research (in particular, artificial intelligence and
machine learning \cite{DeRaedt+2016,Belle2020} and probabilistic programming
\cite{Gordon+2014}) have developed approaches towards infinite probabilistic
data models before as modeling languages or systems such as BLOG
\cite{Milch+2005} or Markov Logic Networks
\cite{RichardsonDomingos2006,SinglaDomingos2007,WangDomingos2008}, as abstract
data types \cite{Faradjian+2002}, and various probabilistic programming
languages as, for example, ProbLog \cite{DeRaedt+2007,Gutmann+2011} and others
\cite{Goodman+2008,Pfeffer2009,Tolpin+2015}. On the side of programming
languages, Probabilistic Programming Datalog \cite{Barany+2017,Grohe+2020} has
direct ties to PDBs. Programming with point processes for practical stochastic
models has been recently cast into a monadic framework with category-theoretic
foundations \cite{DashStaton2021}. From a very abstract point of view the basic
model presented there is the model from \cite{GroheLindner2020} that we use for
uncountable PDBs in \cref{s:beyond}. A construction of probability spaces over
sets of facts from a countably infinite product space of facts already appeared
in the probabilistic programming community as \emph{distribution semantics}
\cite{Sato1995}.

Infinite relational structures also play a role in the discussion of limit
probabilities in asymptotic combinatorics \cite{Bollobas2001,Spencer2001}. For
example, the classical Erd\oldH{o}s-Rényi model $\G(n,p)$ is essentially 
\enquote{tuple-independent}: it describes a probability distribution over 
$n$-vertex graphs where every possible edge is drawn independently with 
probability $p$. Usually, the focus lies on studying properties of this model
as $n \to \infty$. In such a model, properties of large graphs (or databases)
dominate the observed behavior. This contrasts the infinite tuple-independence
model we discuss here, that is dominated by instances in the vicinity of its
expected instance size (which is always finite for tuple-independent PDBs).
There exists work studying limit probabilities in PDBs \cite{Dalvi+2004} and
incomplete databases \cite{Libkin2018,Console+2020}.

\subsection{Organization of the Article}

After giving the necessary preliminaries in \cref{sec:prel} (with additional 
background material being contained in the appendix), we introduce
probabilistic databases (both finite and infinite) in
Section~\ref{s:relpdb}. Section~\ref{s:countable} contains our
results on countable tuple-independent PDBs with a set semantics and
generalizations such as block-independent disjoint PDBs. Specifically,
in Subsection~\ref{ss:sp} we introduce countable superpositions.
Section~\ref{s:bag_instances} is devoted to countable tuple-independent PDBs
with a bag semantics. Finally, in Section~\ref{s:beyond} we consider
tuple-independence in the general setting of (potentially uncountable) PDBs. We
conclude with a few remarks and open questions in Section~\ref{s:conclusion}.

\section{Notation and Mathematical Background}
\label{sec:prel}

Throughout this paper, $\NN$ denotes the set of non-negative integers and
$\NN_{>0}$ denotes the set of positive integers. The set of real numbers is
denoted by $\RR$. We denote open, half-open, and closed intervals of reals by
$(a,b)$, $[a,b)$, $(a,b]$, and $[a,b]$.

We call a set or collection \emph{countable} if it is finite or countably
infinite. We denote the powerset of a set $S$ (that is, the set of subsets of
$S$) by $\powerset(S)$. The \emph{finitary powerset} of $S$ (that is, the set
of finite subsets of $S$) is denoted by $\powerset_{\fin}(S)$.

\subsection{Bags and Sets}

A \emph{bag} (or \emph{multiset}) is a pair $B = (S_B,\mult_B)$ where $S_B$ is 
a set and $\mult_B$ is a function $\mult_B \from S_B \to \NN \cup \set{ \infty
}$. We call $B$ a \emph{bag over $S_B$} and $\mult_B$ its \emph{multiplicity
function}. For $S \subseteq S_B$, we let $\mult_B( S ) \coloneqq \sum_{ s \in S
} \mult_B(s)$. The cardinality $\size{B}$ of a bag $B = (S_B, \mult_B )$ is the
sum of all its multiplicities, that is, $\size{B} = \mult_B(S_B) = \sum_{ s \in
S_B } \mult_B(s) \in \NN \cup \set{ \infty }$. A bag $B$ is called
\emph{finite} if $\size{B} < \infty$. The sets of all bags over some set $S$ is
denoted by $\powerbag(S)$.  The set of all finite bags over $S$ is denoted by
$\powerbag_{\fin}( S )$. We identify any bag $B = ( S_B, \mult_B )$ where
$\mult_B$ is $\set{ 0, 1 }$-valued with the set $\set{ s \in S_B \with \mult_B(
s ) = 1 }$. 

Let $B_1 = (S_1, \mult_1)$ and $B_2 = (S_2, \mult_2)$ be bags. The
\emph{additive union} of $B_1$ and $B_2$ is the bag
\begin{equation*}
	B_1 \uplus B_2 \coloneqq (S_1 \cup S_2, \mult_1 + \mult_2)
\end{equation*}
with $( \mult_1 + \mult_2 )( s ) = \mult_1( s ) + \mult_2( s )$ and the
convention that $\mult_i( s ) = 0$ if $s \notin S_i$ for $i \in \set{ 1, 2 }$.
The additive union of bags is associative and commutative. Moreover, for every
bag $B$ it holds that $B \uplus \emptyset = B$ where $\emptyset$ is the empty
bag. We write $\biguplus_{ i=1 }^n B_i$ for $B_1 \uplus \dots \uplus B_n$ for
all $n \in \NN_{ >0 }$ and bags $B_1,\dots,B_n$. Any additive union of finitely
many finite bags is finite.

\subsection{Infinite Sums and Products}\label{sec:infprod}

In comparison with finite probabilistic databases, the discussion of infinite
probabilistic databases brings with it some new analytic challenges, as even in
the most simple examples, we need to take sums and products over infinite index
sets. We will therefore frequently encounter series $\sum_{ i = 0 }^{ \infty }
a_i$ and infinite products $\prod_{ i = 0 }^{ \infty } a_i$ with terms $a_i \in
[0,1]$.  All series as above attain well-defined values in $[0, \infty]$. If
the value is finite, the series is called \emph{convergent} (and
\emph{divergent} otherwise). All products as above attain well-defined values
in the interval $[0,1]$. Moreover, all series and products in this paper have
the property that their value is independent of the order of terms.
\Cref{app:infsumprod} contains some more formal background on infinite series
and products.

\subsection{Probability and Measure Theory}

Here in the preliminaries, we only cover the bare minimum background from
probability theory. We refer to \cref{app:prob} for the formal definitions and
statements, and pointers to textbooks.

A \emph{measure space} is a tuple $( \Omega, \AAA, \mu )$ where $\Omega$ is
some non-empty set, $\AAA$ is a $\sigma$-algebra\footnote{A $\sigma$-algebra on
$\Omega$ is a family of subsets of $\Omega$ that contains $\Omega$ and is
closed under countable unions, and under taking complements.} on $\Omega$, and
$\mu \from \AAA \to [0,\infty]$ is a function with the property that $\mu(
\emptyset ) = 0$ and $\mu\big( \bigcup_{ i = 0 }^{ \infty } \AAAA_i \big) =
\sum_{ i = 0 }^{ \infty } \mu( \AAAA_i )$ whenever $\AAAA_1, \AAAA_2, \dots \in
\AAA$ are pairwise disjoint. This latter property of $\mu$ is called
\emph{$\sigma$-additivity}. The elements of $\AAA$ are called \emph{measurable
sets}. Measure spaces with $\mu( \Omega ) = 1$ are called \emph{probability
spaces}. In this case, $\Omega$ is called the \emph{sample space}, the sets in
$\AAA$ are called \emph{events}, and $\mu$ is called a \emph{probability
measure}. Probability measures are usually denoted by $P$.

Let $(\Omega, \AAA, P)$ be a probability space. We write
\begin{equation*}
	\Pr_{ X \sim ( \Omega, \AAA, P ) } \event{ X \in \AAAA } \coloneqq
	P( \AAAA )
\end{equation*}
for the probability of the event $\AAAA \in \AAA$. If $\Phi \from \Omega \to
\set{ \true, \false }$ is a measurable Boolean property (for example, given by
a sentence in some logic, if $\Omega$ contains relational structures), then we
write
\begin{equation*}
	\Pr_{ X \sim ( \Omega, \AAA, P ) } \event{ \Phi( X ) }
	\coloneqq \Pr_{ X \sim ( \Omega, \AAA, P ) } \event{ X \text{ has property }
	\Phi }
	= P\big( \set{ \omega \in \Omega \with \Phi( \omega ) = \true } \big)\text.
\end{equation*}

\phantomsection\label{fac:bc}
Events $\AAAA_0, \AAAA_1, \dots$ in a probability space $(\Omega,\AAA,P)$ are
called \emph{independent}, if the joint probability of any finite subset of
these events is the product of their individual probabilities, that is, if 
$P\big( \bigcap_{i\in I} \big) = \prod_{i\in I} P( A_i )$ for all finite
$I\subseteq \NN$. A family of events is independent if and
only if their complements are. An important statement we need is the \bcref{}
\cite[Theorem~2.7]{Klenke2014}, which states that if $\AAAA_1, \AAAA_2, \dots
\in \AAA$ are events in a probability space $( \Omega, \AAA, P )$, then 
\[
	\sum_{ i = 1 }^{ \infty } P( \AAAA_i ) < \infty
		\quad
		\Rightarrow
		\quad
	P\bigg( \bigcap_{ i = 1 }^{ \infty } \bigcup_{ j = i }^{ \infty } \AAAA_j \bigg)
		= \Pr_{ X \sim ( \Omega, \AAA, P ) } \event{ X \in \AAAA_i \text{ for inf.\ many }i } = 0
	\text.
\]
If the $\AAAA_i$ are additionally pairwise independent, then 
\[
	\sum_{ i = 1 }^{ \infty } P( \AAAA_i ) = \infty
		\quad
		\Rightarrow
		\quad
	P\bigg( \bigcap_{ i = 1 }^{ \infty } \bigcup_{ j = i }^{ \infty } \AAAA_i \bigg)
		= \Pr_{ X \sim ( \Omega, \AAA, P ) } \event{ X \in \AAAA_i \text{ for inf.\ many }i } = 1
	\text.
\]

Given probability spaces $( \Omega_i, \AAA_i, P )$, $i = 1,2, \dots$, there
exists a unique probability measure $P$ on $\prod_{ i = 1 }^{ \infty }
\Omega_i$ (equipped with a suitable $\sigma$-algebra, that is, the product
$\sigma$-algebra) such that the events $\big(\Omega_1 \times \dots \times
\Omega_{i-1} \times \AAAA_i \times \Omega_{i+1} \times \dots\big)_{ i \in
\NN_{>0}}$ are independent, and have probability $P_i( \AAAA_i )$ where
$\AAAA_i \in \AAA_i$.

\subsection{Relational Databases}

In this article, we consider the unnamed perspective on the relational model
\cite{Abiteboul+1995}. Let $\RelNames$ be some countably infinite set. The
elements of $\RelNames$ are called \emph{relation symbols}. We fix a function
$\ar \from \RelNames \to \NN$. Then, for all $R \in \RelNames$, $\ar(R)$ is
called the \emph{arity} of $R$.

A \emph{database schema} $\tau$ is a finite set of relation symbols. Let $\UU
\neq \emptyset$ be some set of arbitrary (positive) cardinality, called the
\emph{universe} or \emph{domain}. Then a \emph{$ ( \tau, \UU ) $-fact} is an
expression of the shape $R ( u )$ where $u \in \UU^{ \ar(R) }$. We let
$\Facts[\tau,\UU]$ denote the set of $( \tau, \UU )$-facts.

A \emph{database instance} $D$ of schema $\tau$ over $\UU$ (or, 
\emph{$( \tau, \UU )$-instance}) is a finite bag of $( \tau, \UU )$-facts. We 
let $\DB[\tau, \UU]$ denote the set of all $( \tau, \UU )$-instances. If $D 
\in \DB[\tau,\UU]$ has only $\set{0,1}$-valued fact multiplicities, we call $D$ 
a \emph{set instance}. Otherwise, $D$ is called a \emph{(proper) bag instance}.
We let $\DB^{\SET}[\tau,\UU]$ denote the set of all set instances of schema
$\tau$ over $\UU$. The active domain $\adom(D)$ of a database instance $D$ is
the restriction of $\UU$ to the elements appearing in $D$.

\section{Probabilistic Databases}
\label{s:relpdb}

The following definition of possibly infinite, even uncountable probabilistic
databases is the straightforward generalization of the traditional definition of
PDBs as finite probability spaces. Besides the possibility of infinite
probability spaces, it also allows for bag instances and treats set instances as
a special case.

\begin{definition}
	\label{def:pdb}
	A \emph{probabilistic database (PDB)} of schema $\tau$ over $\UU$ (or, 
	\emph{$(\tau,\UU)$-PDB}) is a probability space $\D = (\DB,\DDD,P)$ with
	$\DB \subseteq \DB[\tau,\UU]$. The PDB $\D$ is called
	\begin{itemize}
		\item a \emph{set PDB}, if $\DB \subseteq \DB^{\SET}[\tau,\UU]$;
		\item a \emph{simple PDB}, if it holds that $\Pr_{ D \sim \D } 
			\event{ D \in \DB^{\SET}[\tau,\UU] } = 1$.
	\end{itemize}
	A PDB $\D = (\DB,\DDD,P)$ is called \emph{finite},  
	\emph{countable}, \emph{countably infinite} or \emph{uncountable} if 
	$\DB$ has the corresponding cardinality. The class of all PDBs is denoted by
	$\PDB$. The classes of set PDBs and simple PDBs are denoted by $\PDB^{\SET}$ 
	and $\PDB^{\SIMPLE}$, respectively. We use the subscripts \enquote{$ < 
	\omega$} and \enquote{$ \leq \omega $} to refer to respective subclasses of
	finite and countably infinite PDBs.
\end{definition}

If $\D = (\DB,\DDD,P)$ is countable, then we always assume that $\DDD =
\powerset(\DB)$ is the powerset $\sigma$-algebra on $\DB$ and write $(\DB,P)$
instead of $(\DB,\powerset(\DB),P)$. In principle, we can always restrict the
sample spaces to an arbitrary subspace that carries all the probability mass.
In particular, for countable PDBs, there is no need for any distinction between
set and simple PDBs. In the uncountable, where measurability is not trivial
anymore, it is mathematically more convenient to keep the general sample
spaces. There, although happening with probability $0$, an outcome drawn from
a simple PDB may contain duplicates.

We note that for most applications involving uncountable PDBs, additional
conditions on the $\sigma$-algebras are needed. For example, we would want that
queries of typical database query languages have a well-defined semantics. In
order for this to be the case, the queries need to be measurable functions
between PDBs, requiring the measurability of various kinds of events. While the
discussion of such is generally beyond the scope of this paper (and discussed
in detail in \cite{GroheLindner2020}), these issues will be revisited in
\cref{s:beyond} when we explicitly discuss uncountable PDBs.

From the point of view of probability theory, PDBs are special finite point
processes\footnote{Point process theory is also from where we borrow the term 
\enquote{simple} for PDBs whose outcomes are set instances with probability
$1$.}. A point process is a random collection of points (usually allowing
duplicates) in some measurable space \cite{DaleyVere-Jones2003}. A finite point
process is a random \emph{finite} collection of points. Point processes can be
equivalently described as random integer-valued measures \cite{Kallenberg2017}.

\subsection{Marginal Probabilities}
\label{ss:marginal-probabilities}
Let $\D = (\DB, \DDD, P)$ be a PDB. The set of facts appearing in $\DB$ is
denoted by $\Facts(\D)$. We say that $\D$ is a PDB \emph{with fact set} (or,
\emph{over}) $\Facts(\D)$. For every fact $f \in \Facts(\D)$ with the property
that $\set{ D \in \DB \with \mult_D( f ) > 0 } \in \DDD$, the \emph{marginal
probability} of $f$ in $\D$ is given by
\begin{equation*}
	P( f )
	\coloneqq \Pr_{D \sim \D} \event{ f \in D }
	= P \big( \set{ D \in \DB \with \mult_D( f ) > 0 } \big)
	\text.
\end{equation*}
Potentially, we also want to include facts of marginal probability $0$ in our
sample spaces. As soon as we move to non-discrete distributions on uncountable 
spaces, this is necessary anyway. In fact, for uncountable PDBs it may happen
naturally that \emph{all} facts have marginal probability $0$.

\begin{example}\label{ex:single_fact_pdf}
	\Cref{fig:temprecprob} depicts a representation of an uncountable set PDB 
	for the temperature records example (\cref{ex:temprec}). Formally, the fact 
	space is
	\[ 
		\Facts[\tau, \UU] = \set{ \REL{TempRec}(r,t,c) \with r,t \in 
	\Sigma^*\text{ and }c \in \RR}\text,
	\]
	where $\Sigma$ is some alphabet.
	\begin{figure}[h]
		\centering%
		\begin{tcolorbox}[hbox,fancy table, tabular={c c c},title={$\REL{TempRec}$},before upper app={\rowcolor{verylightgray}}]
			\ATT{RoomNo}	& \ATT{Time}		& \ATT{Temp [\textdegree{}C]}	\\\hline\hline
			\STR{4108}		& \STR{2021-07-01 8:00}		& $\mathcal N(21.2,0.1)$\\\hline
			\STR{4108}		& \STR{2021-07-01 14:00}	& $\mathcal N(21.2,0.1)$\\\hline
			\STR{4109}		& \STR{2021-07-01 8:00}		& $\mathcal N(22.1,0.1)$\\\hline
			\STR{4109}		& \STR{2021-07-01 14:00}	& $\mathcal N(22.4,0.1)$
		\end{tcolorbox}
		\caption{Representation of an (uncountably) infinite
		PDB.}\label{fig:temprecprob}
	\end{figure}

	The probability measure of this PDB is given by the joint distribution of
	the four normally distributed random variables explicitly listed in
	\cref{fig:temprecprob}. Then, the marginal probability of any fact, say, for
	example, of the fact 
	\[
		\REL{TempRec}( \STR{4108}, \STR{2021-07-01 8:00}, \STR{21.2} )
	\]
	is the probability of drawing a particular temperature value (here $21.2$)
	from a normal distribution, hence, $0$. Yet, with probability $1$, there are
	exactly $3$ facts in a randomly drawn instance.
\end{example}

\Cref{ex:single_fact_pdf} is closely related to PDB models based on attribute
level uncertainty \cite{Singh+2008b,AgrawalWidom2009} that allow for
continuously distributed attributes.

\subsection{Expected Instance Size}
\label{sss:expected_instance_size}

Let $\D = ( \DB, \DDD, P )$ be a PDB. The \emph{instance size function} of $\D$
is the function $\size{ \? } \from \DB \to \NN$ that maps every instance $D \in
\DB$ to its cardinality $\size{ D }$. If $\size{ \? }$ is measurable, that is,
if $\set{ D \in \DB \with \size{ D } = k } \in \DDD$ for all $k \in \NN$, then
$\size{ \? }$ is a random variable. Note that whether this is the case depends 
on the $\sigma$-algebra $\DDD$ that $\DB$ is equipped with in $\D$. If $\D =
( \DB, P )$ is a countable PDB (equipped with the powerset $\sigma$-algebra),
then $\size{ \? }$ is always a random variable.

If $\size{ \? }$ is a random variable, then its expectation $\Expectation_{ \D
}\big( \size{ \? } \big)$ in $\D$ is given as
\begin{equation}\label{eq:expect_n}
	\Expectation_{ \D }\big( \size{ \? } \big)
	= \sum_{ n = 0 }^{ \infty } n \cdot 
	\Pr_{ D \sim \D } \event[\big]{ \size{ D } = n }
	= \sum_{ n = 0 }^{ \infty } n \cdot
		P \event[\big]{ D \in \DB \with \size{ D } = n }
	\text.
\end{equation}
Note that the above also holds if $\D$ is uncountable. (All we do is partition
the range of the discrete size random variable.)

Note that for all $D \in \DB$, it holds that $\size{ D } = \mult_{ D }\big(
\Facts( \D ) \big) = \sum_{ f \in \Facts( \D ) } \mult_{D}( f )$, where the
latter sum has countably (indeed, finitely) many non-zero terms. If $\Facts( \D
)$ is countable, then\footnote{The expectation of a sum of countably many non-negative random variables is equal to the sum of the expectations, see \cite[Theorem 5.3(vi)]{Klenke2014}.} it holds that
\begin{equation}\label{eq:ctbl_expectation_measure}
	\Expectation_{ \D }\big( \size{ \? } \big)
	= \Expectation_{ \D }\Big( \mult_{ ( \? ) }\big( \Facts(\D) \big) \Big)
	= \Expectation_{ \D }\bigg( \sum_{ f \in \Facts( \D ) } \mult_{ (\?) }( f ) \bigg)
	= \sum_{ f \in \Facts( \D ) } \Expectation_{ \D }\big( \mult_{ (\?) }( f )\big)\text.
\end{equation}
Moreover, for set PDBs, $\mult_{ (\?) }( f )$ is exactly the indicator random
variable of the event $\event{ f \in D }$, $D \sim \D$. Thus, for countable
set PDBs, \labelcref{eq:ctbl_expectation_measure} entails that
\[
	\Expectation_{ \D }\big( \size{ \? } \big)
	= \sum_{ f \in \Facts( \D ) }
		\Expectation_{ \D }\big( \mult_{ (\?) }( f ) \big)
	= \sum_{ f \in \Facts( \D ) }
		P( f )
\]
If $\Expectation_{\D}\big( \size{ \? } \big) = m$ with $m \in \RR_{ \geq 0 }
\cup \set{ \infty }$, we say that $\D$ is \emph{of expected size $m$}. Although
all database instances are finite, it is very easy to construct PDBs of
infinite expected size. Intuitively, in such PDBs the sizes of instances grow
too fast to be compensated by their probabilities.

\begin{example}\label{ex:infiniteexpectation}
	Let $P$ be any probability distribution on $\NN$ with $P( n ) > 0$ for
	infinitely many $n \in \NN$. Suppose $\UU$ is some countably infinite
	universe and $\tau = \set{ R }$ with $\ar( R ) = 1$.  Let $D_0, D_1, D_2,
	\dots$ be any sequence of pairwise distinct database instances over $\tau$
	and $\UU$ such that $\size{ D_n } > \frac1{ P( n ) }$ and consider the PDB
	$\D$ with $\DB = \set{ D_0, D_1, D_2, \dots }$, $\DDD = \powerset( \DB )$
	and $\Pr_{ D \sim \D }\event{ D = D_n } = P( n )$. Then
	\[
		\Expectation_{ \D }\big( \size{ \? } \big) 
		= \sum_{ n = 0 }^{ \infty } \: \size{ D_n } \cdot P( n )
		> \sum_{ n = 0 }^{ \infty } 1 = \infty \text.
	\]
	As a concrete example of this construction, take $P( n ) = \frac6{ \pi^2 }{
	n^2 }$ for all $n \in \NN$. Note that this is indeed a probability
	distribution as $\sum_{ n = 0 }^{ \infty } \frac1{n^2} = \frac{\pi^2}6$. Let
	$\UU = \NN$ and consider the instances $D_n = \set{ R(1), \dots, R(2^n) }$,
	$n \in \NN$. Then $\Expectation_{ \D }\big( \size{ \? } \big) = \sum_{ n = 0
	}^{ \infty } 2^n \cdot \frac6{\pi^2n^2} = \infty$.
\end{example}

\subsection{Superpositions}
\label{sss:fin_sp}

(Independent) superposition is a standard operation of point processes 
\cite{DaleyVere-Jones2003,DaleyVere-Jones2008,LastPenrose2017} that we apply to 
PDBs. They provide a useful abstract tool that we can use to model how
independence assumptions are cast into probabilistic databases from independent
building blocks. In this section, for getting started, we give the basic idea
for a superposition of two countable PDBs.

Suppose $\D_1 = ( \DB_1, P_1 )$ and $\D_2 = ( \DB_2, P_2 )$ are countable PDBs
over $\tau$ and $\UU$ (equipped with the powerset $\sigma$-algebras). The
additive union $\uplus$ of bags can be lifted to a function between PDBs in a
straightforward way via $\D_1 \uplus \D_2 \coloneqq ( \DB[ \tau, \UU ], P )$ 
where $P$ is defined by
\begin{equation*}
	P\big( \set{ D } \big) \coloneqq 
	\mkern6mu\smashoperator{\sum_{ \substack{ 
		D_1 \in \DB_1\text,\\
		D_2 \in \DB_2\text,\\
		D_1 \uplus D_2 = D
	}}}\mkern6mu P_1 \big( \set{ D_1 } \big) \cdot P_2 \big( \set{ D_2 } \big)
	\text.
\end{equation*}
This indeed defines a probability measure on $\DB[ \tau, \UU ]$, as
\begin{align*}
	P\big( \DB[ \tau, \UU ] \big)
	= \mkern12mu\smashoperator{\sum_{ D \in \DB[ \tau, \UU ] }}\mkern16mu 
		P\big( \set{ D } \big)
	&= \mkern6mu\smashoperator{\sum_{ D_1 \in \DB_1 }}\mkern24mu
		\smashoperator{\sum_{ D_2 \in \DB_2 }}\mkern8mu
		P_1\big( \set{ D_1 } \big) \cdot P_2\big( \set{ D_2 } \big)\\
	&= \mkern6mu\smashoperator{\sum_{ D_1 \in \DB_1 }}\mkern8mu 
		P_1\big( \set{ D_1 } \big) \cdot
		\mkern6mu\smashoperator{\sum_{ D_2 \in \DB_2 }}\mkern8mu 
		P_2\big( \set{ D_2 } \big)
	= \mkern6mu\smashoperator{\sum_{ D_1 \in \DB_1 }}\mkern8mu 
		P_1\big( \set{ D_1 } \big)
	= 1\text.
\end{align*}
Note that this relied on both $\D_1$ and $\D_2$ being discrete probability 
spaces. For uncountable PDBs, defining $P$ on singletons would not suffice
(in fact, it could be that all singletons in both $\D_1$ and $\D_2$ have
probability $0$). The idea to extend this is to derive $P$ from the product
measure of $P_1$ and $P_2$. 

The PDB $\D_1 \uplus \D_2$ is called the \emph{(independent) superposition} of
$\D_1$ and $\D_2$. The superposition of PDBs is associative. We write
$\biguplus_{ i = 1 }^{ n } \D_i$ for $\D_1 \uplus \dots \uplus \D_n$.

\begin{example}
	Consider the two PDBs $\D_1$ and $\D_2$ depicted below.
	\begin{figure}[H]
		\centering%
		\begin{tabular}{c ccc}
			\toprule
			$D$
				& $\bag{ f }$
				& $\bag{ f, f }$\\
			\midrule
			$P_1\big(\set{ D }\big)$
				& $\tfrac34$
				& $\tfrac14$\\
			\bottomrule
		\end{tabular}
		\qquad
		\begin{tabular}{c ccc}	
			\toprule
			$D$
				& $\emptyset$
				& $\bag{ f }$
				& $\bag{ f' }$\\
			\midrule
			$P_2\big(\set{D}\big)$
				& $\tfrac38$
				& $\tfrac38$
				& $\tfrac28$\\
			\bottomrule
		\end{tabular}
	\end{figure}
	The following is the independent superposition of $\D_1$ and $\D_2$:
	\begin{figure}[H]
		\centering%
		\begin{tabular}{c ccccc}
			\toprule
			$D$
				& $\bag{ f }$
				& $\bag{ f, f }$
				& $\bag{ f, f' }$
				& $\bag{ f, f, f }$
				& $\bag{ f, f, f' }$\\
			\midrule
			$P\big(\set{D}\big)$
				& $\tfrac9{32}$
				& $\tfrac{12}{32}$
				& $\tfrac6{32}$
				& $\tfrac3{32}$
				& $\tfrac2{32}$\\
			\bottomrule
		\end{tabular}%
	\end{figure}
	Obviously, this is again a PDB.
\end{example}

Point process theory naturally considers also superpositions of countably many
point processes \cite{LastPenrose2017} and they will naturally appear
throughout this article. However, in general, the superposition
\enquote{$\biguplus_{i = 1}^\infty \D_i$} of countably infinitely many PDBs
$\D_i$ may fail to be a PDB itself: In the \enquote{result}, the multiplicity
of a single fact could be infinite or there could be infinitely many different
facts in a single \enquote{instance} with positive probability. This is
investigated in detail in \cref{ss:sp}.

\section{Countable Probabilistic Databases}
\label{s:countable}

The main subject of this section is the investigation of generalizations of
independence assumptions as they are known from finite PDBs \cite{Suciu+2011} 
to countably infinite ones.

Throughout the whole section, we make the following assumptions:
\begin{enumerate}[label=(\Roman*)]
\item\label[assumption]{ass:ctbl1} 
  We only consider a fixed database schema $\tau$ and a fixed universe 
  $\UU$ of countable size.
\item\label[assumption]{ass:ctbl2}
  Whenever we consider sets $\Facts$ of facts, then $\Facts \subseteq
  \Facts[\tau, \UU]$. Consequentially, \enquote{facts} always means 
  $(\tau, \UU)$-facts.
\end{enumerate}
Moreover, in Subsections~\ref{ss:ti}--\ref{ss:approx_eval} we make the
following assumption.
\begin{enumerate}[label=(\Roman*)]
  \addtocounter{enumi}{2}
\item\label[assumption]{ass:ctbl3} 
		If not explicitly stated otherwise, all PDBs that occur in this 
		section have sample space $\powerset_{ \fin }\big(
      \Facts( \D ) \big)$ and are equipped with the powerset $\sigma$-algebra. 
\end{enumerate}
That is, unless explicitly
stated otherwise, we use a \emph{set} semantics.

\subsection{Countable Tuple-Independence}
\label{ss:ti}

With this subsection, we begin our investigation of independence assumptions
in infinite PDBs by extending the well-known notion of tuple-independent PDBs
towards PDBs over infinite sets of facts. We discuss the circumstances under
which such PDBs exist and, if so, how to construct them by generalizing the
finite construction in the natural way.

\begin{definition}[Countable $\TI$-PDBs]\label{def:ti_pdb}
	A PDB $\D = ( \DB, P )$ is called \emph{tuple-independent} (or, a
	\emph{$\TI$-PDB}) if the events $\set{ D \in \DB \with f \in D }$, for
	$f\in\Facts(\D)$, are independent, that is, if for all $k = 1, 2, \dots$ and
	all pairwise different $f_1, \dots, f_k \in \Facts( \D )$ it holds that
	\begin{equation}\label{eq:independence_across_facts}
		\Pr_{ D \sim \D } \event[\big]{ f_1, \dots, f_k \in D }
		= P( f_1 ) \cdot \dotso \cdot P( f_k )
		\text.\qedhere
	\end{equation}
\end{definition}

The property from \cref{eq:independence_across_facts} is referred to as
\emph{independence across} (or, \emph{of}) \emph{facts}. We denote the subclass
of countable tuple-independent set PDBs of $\ctblsetPDB$ by $\ctblsetTI$. The
class of \emph{finite} tuple-independent set PDBs is denoted by $\finsetTI$.

In the finite setting, a $\TI$-PDB $\D$ may be specified by providing its facts
$\Facts \coloneqq \Facts( \D )$ together with their marginal probabilities
$\big( P( f ) \big)_{ f \in \Facts }$, and any such pair $( \Facts, P )$ 
\emph{spans} a unique finite $\TI$-PDB. This is no longer the case in a 
countably infinite setting:

\begin{example}\label{ex:ti_inadmissible}
	Let $\Facts \subseteq \Facts[ \tau, \UU ]$ be countably infinite and let
	$P \from \Facts \to [0, 1] \with f \mapsto \frac12$. Then there exists no
	$\TI$-PDB with marginals according to $P$: Assume that $\D = ( \DB, P )$ is
	a $\TI$-PDB with marginals $P$ and let $D \in \DB$ be arbitrary. Suppose
	that the facts in $\Facts$ are $f_1, f_2, \dots$ and that $D \subseteq
	\set{f_1,\dots,f_k}$ for some $k \in \NN$. Recall that as the facts are
	independent, so are their complements. Then for all $n \geq 1$ we have 
	\begin{equation*}
		P\big( \set{ D } \big)
		\leq \Pr_{ D \sim \D } \big( f_{k+1}, \dots, f_{k+n} \notin D \big)
		= \prod_{ i = k+1 }^{ k+n } \big( 1 - P( f_i ) \big)
		= \tfrac{1}{2}^n\text,
	\end{equation*}
	implying $P\big( \set{ D } \big) = 0$. But then 
	\begin{equation*}
		1 = P( \DB ) = \sum_{ D \in \DB } P\big( \set{ D } \big) = 0\text,
	\end{equation*}
	a contradiction.

	We take the opportunity to comment on two subtleties with infinite PDBs in
	the light of this example:
	\begin{enumerate}
		\item In the above situation we have $\Expectation_{\D}( \size{\?} ) =
			\sum_{ f \in \Facts } P(f) = \infty$. We recall that this is, by
			itself, no contradiction to the requirement that all instances of the
			PDB be finite, see \cref{ex:infiniteexpectation}.
		\item We argued that every single database instance has probability $0$,
			and therefore, what we have is no probability space. This argument is
			only possible because $\DB$ is countable, because then the probability
			of every event is given by the sum of the probabilities of the
			elementary events $\set{ D }$ (due to $\sigma$-additivity). This does
			not hold for uncountable PDBs, such as the one sketched in
			\cref{ex:temprec}.\qedhere
	\end{enumerate}
\end{example}

In case of existence however, the $\TI$-PDB spanned by $\Facts$ and $P$ is
unique:

\begin{proposition}\label{pro:ti_unique}
	Let $\Facts \subseteq \Facts[ \tau, \UU ]$ be a countable set of facts and
	let $P \from \Facts \to [0, 1]$ with the property that there exists a 
	$\TI$-PDB $\D$ over $\Facts$ with marginals $P$. Then $\D = ( \DB, P_{ \D } 
	)$ is uniquely determined by $P$ and it holds that
	\begin{equation}\label{eq:ti_single_instance}
		P_{ \D }\big( \set{ D } \big) =
		\prod_{ f \in D } P( f )
		\cdot \prod_{ f \in \Facts \setminus D } \big( 1 - P( f ) \big)
	\end{equation}
	for all $D \in \DB = \powerset_{\fin}( \Facts )$. 
\end{proposition}

Note that the infinite product $\prod_{ f \in \Facts \setminus D }
\big( 1 - P( f ) \big)$ in \eqref{eq:ti_single_instance} is well-defined, 
because all its factors are from the interval $[0,1]$ (see \cref{sec:infprod}).

\begin{proof}
	If $\Facts$ is finite, then the statement is clear. Let $\Facts$ be 
	countably infinite and let $D \in \DB$. Suppose that $\Facts = \set{ f_1,
	f_2, \dots }$ where $f_i \neq f_j$ for all $i \neq j$. For all $i = 1, 2,
	\dots$ let $\AAAA_i$ be the set of database instances $D' \in \DB$ with the
	property that $f_j \in D'$ if and only if $f_j \in D$ for all $j \leq i$.
	That is, $\AAAA_i$ is the set of instances that coincide with $D$ when
	restricted to $\set{ f_1, f_2, \dots, f_i }$. Then it holds that $\AAAA_1
	\supseteq \AAAA_2 \supseteq \dots$ and, moreover, that $\bigcap_{ i = 1 }^{
	\infty } \AAAA_i = \set{ D }$. Since $\D$ is tuple-independent with
	marginals $P$, it holds that 
	\[
		P_{\D}\big( \AAAA_i \big) = 
		\prod_{ \substack{ j \in \set{1, \dots, i } \colon \\ f_j \in D } } 
			P(f_j) \cdot
		\prod_{ \substack{ j \in \set{1, \dots, i } \colon \\ f_j \notin D } }
			\big( 1 - P(f_j) \big)
		\text.
	\]
	Note that here we used, in particular, that in an independent family of
	events, the events are still independent when arbitrarily many of them are
	replaced by their complement. As $D$ is finite, it then follows from
	$\AAAA_1 \supseteq \AAAA_2 \supseteq \dots$ that
	\[
		P_{\D}\big( \set{ D } \big) =
		P_{\D}\bigg( \bigcap_{ i = 1 }^{ \infty } \AAAA_i \bigg) =
		\prod_{ \substack{ j \in \NN_{>0} \colon \\ f_j \in D } } 
			P( f_j )
			\cdot
			\prod_{ \substack{ j \in \NN_{>0} \colon \\ f_j \notin D } }
			\big( 1 - P( f_j ) \big)
	\]
	(see \cref{fac:semicontinuity}). That is, \cref{eq:ti_single_instance} holds
	for all $D \in \DB$. Due to the $\sigma$-additivity of $P_{\D}$, and since
	$\DB = \powerset_{\fin}( \Facts)$, \cref{eq:ti_single_instance} implies that
	$P_{\D}$ is uniquely determined.
\end{proof}

We denote the unique $\TI$-PDB over $\Facts$ with marginal probabilities $P$ 
(in case of existence) by $\D_{ \spn{ \Facts, P } }$. This notation is slightly
redundant as $\Facts$ is already provided by the domain of $P$ but we deem it
beneficial to make the set of facts explicit. We say that $\D_{ \spn{ \Facts, 
P } }$ is \emph{spanned} by $\Facts$ and $P$. Slightly abusing notation, we
also denote the probability measure of $\D_{ \spn{ \Facts, P } }$ by $P$. 

It remains to characterize which marginal probability assignments span
countably infinite $\TI$-PDBs.

\begin{theorem}\label{thm:ti_series}
	Let $\Facts \subseteq \Facts[ \tau, \UU ]$ be a countable set of facts and
	let $P \from \Facts \to [0, 1]$. Then there exists a unique $\TI$-PDB 
	spanned by $\Facts$ and $P$ if and only if
	\begin{equation*}
		\sum_{ f \in \Facts } P( f ) < \infty
		\text.\qedhere
	\end{equation*}
\end{theorem}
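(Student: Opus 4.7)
Since uniqueness has already been settled by \cref{pro:ti_unique}, I only need to characterize when a $\TI$-PDB over $\Facts$ with marginals $P$ exists. My plan is to handle the two directions separately, in both cases via the \bcref{} applied to the events $E_f \coloneqq \set{D \in \DB \with f \in D}$, $f \in \Facts$.

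For the \emph{only if} direction I would argue by contraposition. Assume $\D = (\DB, P)$ is a $\TI$-PDB over $\Facts$ with marginals $P$ and, for contradiction, that $\sum_{f \in \Facts} P(f) = \infty$. By \cref{def:ti_pdb} the family $(E_f)_{f \in \Facts}$ is stochastically independent, hence in particular pairwise independent. The second part of the \bcref{} then forces $\Pr_{D \sim \D}\event{f \in D \text{ for infinitely many } f \in \Facts} = 1$. But by \cref{ass:ctbl3} every $D \in \DB$ is a finite bag of facts, so this event is empty, yielding the required contradiction.

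For the \emph{if} direction I would construct $\D$ from a product measure via a push-forward. For each $f \in \Facts$ fix the Bernoulli probability space $(\set{0,1}, \powerset(\set{0,1}), Q_f)$ with $Q_f(\set{1}) = P(f)$, and apply \cref{fac:product_measure} to obtain the product probability space $(\Omega, \AAA, Q)$ on $\Omega \coloneqq \prod_{f \in \Facts} \set{0,1}$. Define $g \from \Omega \to \powerset(\Facts)$ by $g(\omega) \coloneqq \set{f \in \Facts \with \omega_f = 1}$. The cylinder events $\set{\omega \with \omega_f = 1}$ have $Q$-probabilities $P(f)$ summing to a finite value, so the first part of the \bcref{} gives $Q\event{g(\omega) \in \powerset_{\fin}(\Facts)} = 1$; after redefining $g$ arbitrarily on a $Q$-nullset, I may treat it as a map into $\powerset_{\fin}(\Facts)$. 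As $\Facts$, and hence $\powerset_{\fin}(\Facts)$, is countable, measurability with respect to the powerset $\sigma$-algebra reduces to checking that each preimage $g^{-1}(\set{D}) = \bigcap_{f \in D}\set{\omega \with \omega_f = 1} \cap \bigcap_{f \in \Facts \setminus D}\set{\omega \with \omega_f = 0}$ lies in $\AAA$, which it does as a countable intersection of cylinder sets. Let $P_\D$ be the push-forward of $Q$ under $g$; it is a probability measure on $\powerset_{\fin}(\Facts)$. Tuple-independence with marginals $P$ then follows immediately from \cref{fac:product_ind}: for pairwise distinct $f_1, \dots, f_k$, $P_\D(\set{D \with f_1, \dots, f_k \in D}) = Q(\set{\omega \with \omega_{f_1} = \dots = \omega_{f_k} = 1}) = P(f_1) \cdots P(f_k)$.

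The entire substantive content in both directions is the \bcref{}: its convergent half in the \emph{if} direction guarantees that $g$ almost surely lands in $\powerset_{\fin}(\Facts)$, so that the push-forward lives on the right sample space; its divergent, pairwise-independent half in the \emph{only if} direction is the only thing ruling out the finite-instance requirement. The remaining work is routine bookkeeping about the product and powerset $\sigma$-algebras; I do not anticipate any obstacle beyond these standard verifications.
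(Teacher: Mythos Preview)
Your proof is correct. The \emph{only if} direction is essentially identical to the paper's argument (Borel--Cantelli, divergent half, applied to the pairwise independent events $E_f$).

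For the \emph{if} direction you take a genuinely different route. The paper gives an ad-hoc construction: it writes down the candidate point masses $P(\{D\}) = \prod_{f\in D} P(f)\prod_{f\notin D}(1-P(f))$ directly, then shows $P(\DB)=1$ by computing $P(\DB_n)$ for the finite truncations $\DB_n = \powerset(\{f_1,\dots,f_n\})$ and invoking the Weierstrass-type inequality \eqref{eq:prodsum}, and finally verifies the marginals by a separate calculation. You instead build the infinite product of Bernoulli spaces via \cref{fac:product_measure}, use the convergent half of the \bcref{} to get that the support map $g$ lands in $\powerset_{\fin}(\Facts)$ almost surely, and push $Q$ forward; independence and the marginals then come for free from \cref{fac:product_ind}. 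Your approach is cleaner and more conceptual---it replaces the explicit limit computation and inequality \eqref{eq:prodsum} by standard measure-theoretic machinery---and in fact the paper later adopts exactly this viewpoint: the superposition construction of \cref{ss:sp,ss:ti_via_sp} (see especially \cref{lem:sp_pdb} and the remark following it) is your argument in slightly different clothing, and the paper explicitly notes there that the two proofs are ``at their core very similar'' but that the product-measure route ``enables us to apply the \bcref{}'' where the direct proof needed the ad-hoc estimate.
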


\begin{proof}
	The result is trivial for finite $\Facts$. Thus, we let $\Facts$ be 
	countably infinite and fix an enumeration $f_1, f_2, \dots$ of $\Facts$.
	We start with the direction from left to right. Suppose $\D = (\DB, P)$ is
	the $\TI$-PDB over $\Facts$ with marginals $P$ and assume that 
	$\sum_{ i = 1 }^{ \infty } P\big( f_i \big) = \infty$. Because every
	instance in $\DB = \powerset_{ \fin }( \Facts )$ is a \emph{finite} set of
	facts, it holds that
	\begin{equation*}
		\bigcap_{ i = 1 }^{ \infty } \bigcup_{ j = i }^{ \infty }
		\set[\big]{ D \in \DB \with f_j \in D } = \emptyset\text.
	\end{equation*}
	But then, according to the \bcref{},
	\begin{equation*}
		0 
		= P( \emptyset ) 
		= P\bigg( \bigcap_{ i = 1 }^{ \infty } \bigcup_{ j = i }^{ \infty }
			\set[\big]{ D \in \DB \with f_j \in D } \bigg)
		= 1\text,
	\end{equation*}
	a contradiction. Thus, $\sum_{ i = 1 }^{ \infty } P\big( f_i \big) <
	\infty$.

	\bigskip
	For the other direction, let $P \from \Facts \to [0, 1]$ such that 
	$\sum_{ f \in \Facts } P( f ) < \infty$. Again, let $f_1,f_2, \dots$ be
    an enumeration of $\Facts$. For all $D \in
	\DB \coloneqq \powerset_{ \fin }( \Facts )$ we define
	\begin{equation*}
		P\big( \set{ D } \big)
		= \prod_{ f \in D } P( f ) \cdot
			\prod_{ f \in \Facts \setminus D } \big( 1 - P( f ) \big)
		\text.
	\end{equation*}
	Then $P( \set{ D } ) \in [0, 1]$ is well-defined. Moreover, $P$ uniquely
	extends to a measure on $(\DB, \powerset( \DB ))$ defined by $P( \DDDD ) =
	\sum_{ D \in \DDDD } P( \set{ D } )$. Abusing notation, we denote
	this extended measure by $P$ and consider the measure space $(\DB, P)$.
	
	It remains to show that $P$ is a probability measure and has the correct
	marginals.  Now for $n = 0, 1, 2, \dots$, consider $\DB_n \coloneqq
	\powerset\big( \set{ f_1, \dots, f_n } \big)$, i.\,e. the set of instances
	made up exclusively from the facts in $\Facts_n \coloneqq \set{ f_1, \dots,
	f_n }$.  Note that for all $n \geq 0$, $\DB_n \subseteq \DB_{ n + 1 }$ and
	that $\bigcup_{ n = 0 }^{ \infty } \DB_n = \DB$, so in particular
	$P(\DB) = \lim_{n\to\infty} P( \DB_n )$ (cf. \cref{fac:semicontinuity}). We
	have
\begin{align*}
		P\big( \DB_n \big) 
		&= P \big( \set{ D \in \DB \with D \subseteq \Facts_n } \big)
		\\
		&= \sum_{ F \subseteq \Facts_n }
			\prod_{ f \in F } P( f ) \cdot
			\prod_{ f \in \Facts_n \setminus F } \big( 1 - P( f ) \big) \cdot
			\prod_{ f \notin \Facts_n } \big( 1 - P( f ) \big)
		\\
		&= \prod_{ f \notin \Facts_n } \big( 1 - P( f ) \big) \cdot
			\underbrace{
				\bigg( \sum_{ F \subseteq \Facts_n } \prod_{ f \in F } P( f ) \cdot
					\prod_{ f \in \Facts_n \setminus F } \big( 1 - P( f ) \big)
				\bigg)
			}_{ = 1 }
			\\
		&= \prod_{ f \notin \Facts_n } \big( 1 - P( f ) \big)\text.
	\end{align*}
	By \eqref{eq:prodsum}, it follows that
	\begin{equation*}
		P( \DB )
		= \lim_{ n \to \infty } P\big( \DB_n \big)
		= \lim_{ n \to \infty } \prod_{ f \notin \Facts_n }\big( 1 - P( f ) \big)
		\geq 1 - \lim_{ n \to \infty } \sum_{ f \notin \Facts_n } P( f )
		= 1\text.
	\end{equation*}
	As $P( \DB_n ) = \prod_{ f \notin \Facts_n }\big( 1 - P( f ) \big) \leq 1$
	for all $n \geq 0$, this implies $P( \DB ) = 1$.

	Thus, $P$ is a probability measure on $\DB$. It remains to show that
	$(\DB, P)$ has the correct marginals. Recalling that $\DB=\powerset_{\fin}(
   \Facts)$, we see that
	\begin{align*}
		\Pr_{ D \sim \D} \event{ f \in D } 
		&= \sum_{ \substack{ D \in \DB \with\\
				f \in D } } 
			\prod_{ g \in D } P( g ) \cdot
			\prod_{ g \in \Facts \setminus D } \big( 1 - P( g ) \big)\\
		&= P( f ) \cdot
			\bigg(
				\sum_{ D \in \powerset_{\fin}( \Facts \setminus \set{ f } ) }
				\prod_{ g \in D } P( g ) \cdot
				\prod_{ g \in \Facts \setminus D } \big( 1 - P( g ) \big)
			\bigg)
			\cdot \smash{
				\underbrace{ \Big( P( f ) + \big( 1 - P( f ) \big) \Big) }_{ = 1 }
			}\\
		&= P( f ) \cdot \sum_{ D \in \DB } P\big( \set{ D } \big)\\
		&= P( f )\text,
	\end{align*}
	as desired.
\end{proof}

Recall from \cref{sss:expected_instance_size} that $\sum_{ f \in \Facts( \D ) }
P( f )$ is exactly the expected instance size $\Expectation_{ \D }\big(
\size{ \? } \big)$ in $\D$. This immediately yields the following:

\begin{corollary}\label{cor:ti_finite_expectation}
	If\/ $\D = ( \DB, P )$ is a $\TI$-PDB then
	\begin{equation*}
		\Expectation_{ \D }\big( \size{ \? } \big)
		= \sum_{ D \in \DB } \size{ D } \cdot P\big( \set{ D } \big) 
		< \infty
		\text.\qedhere
	\end{equation*}
\end{corollary}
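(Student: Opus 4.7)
The plan is to combine the existence criterion from \cref{thm:ti_series} with the expectation formula from \cref{obs:ctbl_expectation_measure}. Under \cref{ass:ctbl1,ass:ctbl2,ass:ctbl3}, the PDB $\D$ in question is a countable set PDB, so \cref{obs:ctbl_expectation_measure} applies and gives
\begin{equation*}
    \Expectation_{\D}\big(\size{\?}\big) = \sum_{f \in \Facts(\D)} P(f).
\end{equation*}

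Since $\D$ exists as a $\TI$-PDB spanned by $\Facts(\D)$ and the marginal assignment $f \mapsto P(f)$, the direction from left to right in \cref{thm:ti_series} immediately yields $\sum_{f \in \Facts(\D)} P(f) < \infty$. Plugging this into the expectation formula gives $\Expectation_{\D}(\size{\?}) < \infty$, and the equality $\Expectation_{\D}(\size{\?}) = \sum_{D \in \DB} \size{D} \cdot P(\set{D})$ is just the definition of the expectation of the instance size random variable on the countable sample space $\DB$.

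There is no real obstacle here; the corollary is essentially a one-line combination of the two preceding results, and the main work (the \bcref-based argument showing that divergence of the marginal series prevents $P$ from being a probability measure) has already been done inside the proof of \cref{thm:ti_series}. The only point worth being explicit about is why \cref{obs:ctbl_expectation_measure} applies in its set-PDB form, which is immediate from \cref{ass:ctbl3}.
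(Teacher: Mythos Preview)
Your proposal is correct and matches the paper's approach exactly: the paper states just before the corollary that $\sum_{f\in\Facts(\D)}P(f)$ equals the expected instance size (from \cref{obs:ctbl_expectation_measure}) and that the corollary follows immediately from \cref{thm:ti_series}. Your write-up spells out precisely these two ingredients and the justification for applying the set-PDB form of \cref{obs:ctbl_expectation_measure}, so there is nothing to add.
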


In particular, if $\D$ has \emph{infinite} expected instance size, then $\D$ is
\emph{not} tuple-independent. We remark that \cref{cor:ti_finite_expectation}
can be generalized to show that countable $\TI$-PDBs have \emph{all} moments 
$\Expectation_{ \D }\big( \size{ \? }^{ k }\big)$ of the instance size random
variable finite
\cite{Carmeli+2021}.

\bigskip

A nice structural property of $\TI$-PDBs is their modular setup: When 
conditioning a $\TI$-PDB on a smaller set of facts, one again obtains a 
$\TI$-PDB.

Let $\D = (\DB, P)$ be a PDB and let $\DDDD \subseteq \DB$ such that $P( \DDDD 
) > 0$. Then $\D \under \DDDD$ is the PDB with sample space $\DDDD$ and
probability measure $P_{ \DDDD } \from \DDDD \to [0, 1]$ with
\begin{equation*}
	P_{ \DDDD }\big( \set{ D } \big) 
	= P\big( \set{ D } \under \DDDD \big)
	= \frac{ P\big( \set{ D } \big) }{ P( \DDDD ) }
\end{equation*}
for all $D \in \DDDD$.

\begin{lemma}\label{lem:restr_ti}
	Let $\D = \D_{ \spn{ \Facts, P } }$ be a\/ $\TI$-PDB. Let $\Facts' \subseteq
	\Facts$ such that $P\big( \powerset_{ \fin }( \Facts' ) \big) > 0$ and let 
	$P'$ denote the restriction of\/ $P$ to\/ $\Facts'$.
       
	Then it holds that
	\begin{equation*}
		\D \under \powerset_{ \fin }( \Facts' )
		= \D_{ \spn{ \Facts', P' } }
		\text.
	\end{equation*}
	In particular, $\D \under \powerset_{ \fin }( \Facts' )$ is a $\TI$-PDB.
\end{lemma}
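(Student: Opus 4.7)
My plan is to invoke the uniqueness statement from \cref{pro:ti_unique}: the candidate PDB $\D_{\spn{\Facts',P'}}$ is characterized among all PDBs over $\Facts'$ by the explicit single-instance formula
\begin{equation*}
	P_{\D_{\spn{\Facts',P'}}}\big(\set{D}\big)
	= \prod_{f\in D} P(f)\cdot\prod_{f\in \Facts'\setminus D}\big(1-P(f)\big)
\end{equation*}
for every $D\in\powerset_{\fin}(\Facts')$. So it suffices to show that $\D\under\powerset_{\fin}(\Facts')$ assigns exactly these probabilities to singletons; tuple-independence and the correct marginals then follow automatically.

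Writing $\Facts''\coloneqq\Facts\setminus\Facts'$, I first expand the numerator via the explicit formula from \cref{pro:ti_unique} applied to $\D$. For $D\in\powerset_{\fin}(\Facts')$, the set $\Facts\setminus D$ splits as $(\Facts'\setminus D)\cup\Facts''$, and since all involved factors lie in $[0,1]$ the corresponding infinite product factors (see \cref{sec:infprod}):
\begin{equation*}
	P\big(\set{D}\big)
	= \prod_{f\in D} P(f)\cdot\prod_{f\in \Facts'\setminus D}\big(1-P(f)\big)
		\cdot\prod_{f\in \Facts''}\big(1-P(f)\big).
\end{equation*}

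Next I need the denominator $P\big(\powerset_{\fin}(\Facts')\big)$. This is the probability of the event $\bigcap_{f\in\Facts''}\set{D\in\DB\with f\notin D}$. By \cref{fac:exchangeindependent} these complementary events are independent, and writing $\Facts''=\set{g_1,g_2,\dots}$ (if countably infinite; the finite case is trivial), continuity of measures from above (\cref{fac:semicontinuity}) applied to the decreasing sequence $\bigcap_{i=1}^{n}\set{D\with g_i\notin D}$ gives
\begin{equation*}
	P\big(\powerset_{\fin}(\Facts')\big)
	= \lim_{n\to\infty}\prod_{i=1}^{n}\big(1-P(g_i)\big)
	= \prod_{f\in \Facts''}\big(1-P(f)\big).
\end{equation*}
(Alternatively one can mimic the finite-truncation computation of $P(\DB_n)$ in the proof of \cref{thm:ti_series}.) Note this product is strictly positive since we assumed $P\big(\powerset_{\fin}(\Facts')\big)>0$, so the conditional probabilities are well-defined.

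Dividing, the $\prod_{f\in\Facts''}\bigl(1-P(f)\bigr)$ factors cancel and I obtain precisely the target formula, completing the proof by uniqueness. The only place that requires care is the computation of $P\big(\powerset_{\fin}(\Facts')\big)$ when $\Facts''$ is countably infinite, since it involves a countable intersection of independent events rather than a finite one; this is exactly what \cref{fac:exchangeindependent} together with \cref{fac:semicontinuity} handle, and it is the main (mild) obstacle of the argument.
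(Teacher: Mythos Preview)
Your proof is correct and follows essentially the same route as the paper: compute $P(\{D\})$ via the explicit single-instance formula of \cref{pro:ti_unique}, identify $P\big(\powerset_{\fin}(\Facts')\big)$ as $\prod_{f\in\Facts\setminus\Facts'}(1-P(f))$, cancel, and conclude by uniqueness. The only cosmetic difference is that you spell out the justification for the denominator via \cref{fac:exchangeindependent} and \cref{fac:semicontinuity}, whereas the paper simply attributes it to \cref{pro:ti_unique} (whose proof already invokes those two facts).
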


\begin{proof}
	Let $\DDDD = \powerset_{ \fin }( \Facts' )$, let $\D = ( \DB, P )$ and let
	$\D \under \DDDD = ( \DB_{ \DDDD }, P_{ \DDDD } )$. It follows from
	\cref{pro:ti_unique} that $P( \DDDD ) = \prod_{ f \in \Facts \setminus
	\Facts' } \big( 1 - P( f ) \big)$ and thus, for all $D \in \DDDD$ it holds
	that
	\begin{equation*}
		P_{ \DDDD }\big( \set{ D } \big)
		= \frac{ P\big( \set{ D } \big) }{ P( \DDDD ) }
		= \frac{
				\prod_{ f \in D } P( f ) 
				\cdot \prod_{ f \in \Facts \setminus D } \big( 1 - P( f ) \big)
			}{
				\smashoperator{\prod_{ f \in \Facts \setminus \Facts' }} 
					\big( 1 - P( f ) \big)
			}
		= \prod_{ f \in D } P( f ) \cdot 
			\mkern6mu\smashoperator{\prod_{ f \in \Facts' \setminus D }}\mkern6mu 
			\big( 1 - P( f ) \big)
		= P'\big( \set{ D }\big)
		\text.
	\end{equation*}
	The claim then follows, since $\D_{ \spn{\Facts', P'}}$ is unique by
	\cref{pro:ti_unique}.
\end{proof}

The aforementioned \enquote{modularity} of $\TI$-PDBs will be the central
point of view of \cref{ss:ti_via_sp} where we investigate this property in
detail.

\begin{remark}
In \cite[Definition 4.1]{GroheLindner2019}, we introduced the following
definition of tuple-independence in infinite PDBs: A countable PDB $\D$ be
called tuple-independent if for all families $\F$ of pairwise disjoint
(measurable) sets of facts it holds that $\big( \event{ D \cap \Facts \neq 
\emptyset } \big)_{ \Facts \in \F }$ is independent, where $D \sim \D$.  For
countable PDBs, this is equivalent to \cref{def:ti_pdb} (see \cite[Lemma
4.2]{GroheLindner2019}; on a high level, this easily follows from \cite[Theorem
2.13(iv)]{Klenke2014}). Definition 4.1 in \cite{GroheLindner2019}
only yields benefit when discussing uncountable PDBs, as there the marginal
events associated to individual facts do not suffice to describe the
probability spaces. We revisit the more general definition when we discuss
uncountable PDBs in \cref{s:beyond}.
\end{remark}

\subsection{Countable Superpositions}
\label{ss:sp}

In this and the following section we present a more abstract view on the
mechanisms of independence assumptions by explaining $\TI$-PDBs via
superpositions (see \cref{sss:fin_sp}). The significance of this is that it 
allows us to discuss classes of PDBs that are constructed from independent
parts already with the properties of superpositions at hand.

In this section in particular, we start by extending the discussion from
\cref{sss:fin_sp} to countable superpositions with a strong focus on their
application for PDBs.

Note that the results presented in this subsection are special cases of
well-known propositions in point process theory \cite{DaleyVere-Jones2003,%
DaleyVere-Jones2008,LastPenrose2017}---our contribution here is to
adapt these mechanisms for casting probabilistic databases into a unified
framework.

\bigskip

Throughout this subsection, we fix a family $( \D_i )_{ i \in \NN_{ > 0 } }$ of
PDBs $\D_i = ( \DB_i, P_i ) \in \ctblPDB$. Recall that by \cref{ass:ctbl1,%
ass:ctbl2} from the beginning of the section, we have that $\DB_i \subseteq
\DB[ \tau, \UU ]$ for all $i = 1, 2, ...$ and some common database schema
$\tau$ and countable universe $\UU$. Yet, we temporarily deviate from
\cref{ass:ctbl3} and allow the occurring PDBs to be \emph{bag} PDBs. 

We now present an example that can serve as a running example
throughout this section and that nicely exhibits the way we present
$\TI$-PDBs using superpositions in \cref{ss:ti_via_sp}.

\begin{example}[A Coin Flip PDB]\label{exa:coinflip}
	We want to accentuate the intuition on how a tuple-independent PDB is made
	up from a large number of small independent events by connecting them to a 
	coin flip experiment. For this, suppose that each fact in a $\TI$-PDB acts
	as a biased coin. If the coin flips head, it means the fact is present, and
	if it flips tail, the fact is omitted.\footnote{If the number of coins is 
	finite (as would be the analogy for finite $\TI$-PDBs), such experiments
	are called \emph{Poisson trials} in probability theory literature \cite[p.
	218]{Feller1968} and the distribution of the number of successes is called
	\emph{Poisson-binomial distribution} or \emph{Poisson's binomial
	distribution}, cf.~\cite{Wang1993}. Here, we consider infinitely many
	coins.}

	We cast this intuition into a concrete $\TI$-PDB of coin flips. Suppose we
	flip a countably infinite number of independent, biased coins such that the
	$i$th coin flips head with probability $p_i$. Let $\tau$ be the relational
	schema consisting of a single relation $H$ of arity $1$ and let $\UU = \NN$.
	We interpret the fact $f_i \coloneqq H(i)$ as the event that the $i$th coin
	comes up heads. Note that $\Facts \coloneqq \Facts[\tau,\UU] = \set{ f_i\
	\with i\in\NN }$. We define $P_i \from \Facts \to [0,1]$ by $P_i(f_i)
	\coloneqq p_i$ and $P_i(f)\coloneqq 0$ for $f\neq f_i$, and we let $\D_i =
	\D_{ \spn{ \Facts, P_i } }$. 

	Our goal is to express the $\TI$-PDB $\D_{ \spn{ \Facts,P }}$ with $P$
	defined by $P(f_i) = P_i(f_i) = p_i$ as an independent superposition of the
	$\D_i$. This intuitively corresponds to the observation that we obtain the
	same probability space, regardless of whether we flip the coins 
	individually, or all coins together as a whole.
\end{example}

We write $\DB^{ \otimes }$ for the set of all sequences $\vec D=( D_1, D_2,
\dots )$ where $D_i\in\DB_i$, and consider the product measure space
$\bigotimes_{ i = 1 }^{ \infty } \D_i = ( \DB^{ \otimes }, \DDD^{ \otimes },
P^{ \otimes } )$ where $\DDD^{ \otimes } = \bigotimes_{ i = 1 }^{ \infty }
\powerset( \DB_i )$. Note that this is a probability space over sequences of
database instances. We now use this space to construct a measure space on
database instances. Let $\DB_{ \fin }^{ \otimes }$ denote the set of sequences
$( D_1, D_2, \dots ) \in \DB^{ \otimes }$ where \emph{all but finitely many}
$D_i$ are the empty instance. Note that $\DB_{ \fin }^{ \otimes }$ only
contains countably many sequences.\footnote{As the $\DB_i$ are countable, for
every $n \in \NN$ it holds that the set $\DB_{\fin}^{\otimes}$ contains only
countably many sequences where exactly $n$ instances are non-empty. The
(countable) union over all these sets of instances is $\DB_{\fin}^{\otimes}$.}
For all $( D_1, D_2, \dots ) \in \DB_{ \fin }^{ \otimes }$ it holds that
\begin{equation}\label{eq:singletons_measurable}
	\set[\big]{ ( D_1, D_2, \dots ) }
	= \bigcap_{ i = 1 }^{ \infty } 
		\pi_{ i }^{ -1 }\big( \set{ D_i } \big) \in \DDD^{ \otimes }
\end{equation}
where $\pi_i$ denotes the canonical projection to the $i$th
component.
As
$\DB_{ \fin }^{ \otimes }$ is countable, \eqref{eq:singletons_measurable}
implies $\DDD_{ \fin }^{ \otimes } \coloneqq \powerset\big( \DB_{ \fin }^{ 
  \otimes } \big) \subseteq \DDD^{ \otimes }$.
Thus, $\D_{ \fin }^{ \otimes }
\coloneqq \big( \DB_{ \fin }^{ \otimes }, \powerset\big( \DB_{ \fin }^{ \otimes 
} \big), P_{ \fin }^{ \otimes }\big)$, where $P_{ \fin }^{ \otimes }$ is the
restriction of $P^{ \otimes }$ to $\DDD_{ \fin }^{ \otimes }$, is a measure
space. Note that it is not necessarily a probability space.

\begin{example}\label{exa:coinflip2}
  Let us revisit the setting of \cref{exa:coinflip}. To define the product
  measure, we consider events $\AAAA$ of the form $\prod_{ i = 1 }^{ \infty } 
  \AAAA_i$, where $\AAAA_i \neq \DB_i = \set{ \emptyset, \set{ f_i } }$ for
  only finitely many $i$ and define
  \begin{equation}
    \label{eq:prodmass}
    P^\otimes( \AAAA )
	 = \prod_{\substack{i \in \NN_{ >0 } \with \!\!\\ \AAAA_i = \set{\set{f_i}}}}p_i
	 	\cdot 
		\prod_{\substack{i \in \NN_{ >0 } \with \!\!\\ \AAAA_i = \set{\emptyset}}}(1-p_i)
	 \text.
  \end{equation}
  The sets $\AAAA$ generate the product $\sigma$-algebra $\DDD^\otimes$ on
  $\DB^\otimes$, and \eqref{eq:prodmass} uniquely determines the product
  measure $P^\otimes$. The definition \eqref{eq:prodmass} implies that for all
  $\vec D = (D_1, D_2, \ldots)\in\DB^\otimes$ we have 
  \begin{equation}\label{eq:prodmass2}
	  P^\otimes \big( \set{ \vec D } \big)
	  = \prod_{ \substack{ i \in \NN_{ >0 } \with \!\!\\D_i = \set{f_i} } } p_i
		\cdot
		\prod_{ \substack{ i \in \NN_{ >0 } \with \!\!\\D_i = \emptyset} } 
			(1 - p_i)
		\text.
  \end{equation}
	 It may well be that $P^\otimes \big( \set{ \vec D } \big) = 0$ for all
	 $\vec D \in \DB^\otimes$. By an application of the \bcref{} or by a direct
	 calculation, it can be shown that $P^\otimes \big( \set{ \vec D } \big) =
	 0$ for all $\vec D \in \DB_\fin^\otimes$ if and only if 
	 $\sum_{ i = 1 }^\infty p_i = \infty$.
\end{example}

Recall that $\powerbag( S )$ denotes the set of all bags over some set $S$.
Consider the additive union 
\begin{equation}\label[function]{eq:g_uplus}
	(\?)^\uplus \from \DB^{ \otimes } \to 
	\powerbag\bigg( \bigcup_{ i = 1 }^{ \infty } \Facts_i \bigg)
	\with
	( D_1, D_2, \dots ) \mapsto 
	( D_1, D_2, \dots )^{ \uplus } \coloneqq
	\biguplus_{ i = 1 }^{ \infty } D_i\text,
\end{equation}
where $\Facts_i \coloneqq \Facts( \D_i ) \subseteq \Facts[ \tau, \UU ]$ for all
$i = 1, 2, \dots$. 

We note that the construction presented above would work exactly the same way, 
if the PDBs $\D_1, \D_2, \dots$ were uncountable. The only real change is that
$\DDD^{\otimes}$ then becomes the product $\sigma$-algebra of the
$\sigma$-algebras of $\D_1,\D_2,\dots$ (and needing that the singleton
containing the empty instance is measurable in all of them). Due to our
application on countable PDBs here (and some intricacies in the uncountable
with respect to set semantics), we refrain from this greater generality.

\begin{lemma}\label{lem:g_uplus_db}
	Let\/ $\vec{ D } \in \DB^{ \otimes }$. 
	\begin{enumerate}
		\item The bag $\vec{ D }^{ \uplus }$ is a database instance if and
			only if\/ $\vec{ D } \in \DB_{ \fin }^{ \otimes }$.
			\label{itm:g_uplus_db_i}
		\item If\/ $\vec{ D }^{ \uplus }$ is a database instance, then it is
			a set instance if and only if\/ $\vec{ D }$ consists of pairwise
			disjoint set instances.
			\qedhere
	\end{enumerate}
\end{lemma}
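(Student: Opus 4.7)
The plan is to unpack the definition of the additive union in terms of multiplicity functions and reduce both parts to elementary arithmetic.

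For part \cref{itm:g_uplus_db_i}, recall that a database instance is a \emph{finite} bag of facts. By the definition of $\uplus$, for every $f \in \bigcup_i \Facts_i$ we have $\mult_{\vec{D}^\uplus}(f) = \sum_{i=1}^\infty \mult_{D_i}(f)$, hence
\begin{equation*}
    \sum_{f} \mult_{\vec{D}^\uplus}(f)
    = \sum_{f} \sum_{i=1}^\infty \mult_{D_i}(f)
    = \sum_{i=1}^\infty \size{D_i}\text,
\end{equation*}
where the interchange of summations is legal because all terms are non-negative. If $\vec{D} \in \DB_\fin^\otimes$, only finitely many of the $\size{D_i}$ are non-zero, each $D_i$ is itself a database instance (hence finite), so the sum is finite and $\vec{D}^\uplus$ is a database instance. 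Conversely, if infinitely many $D_i$ are non-empty, each contributes at least $1$ to the sum, making $\size{\vec{D}^\uplus} = \infty$ so $\vec{D}^\uplus$ is not a database instance.

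For the second part, assume $\vec{D}^\uplus$ is a database instance (so by part \cref{itm:g_uplus_db_i}, $\vec D \in \DB_\fin^\otimes$). The bag $\vec{D}^\uplus$ is a set instance iff $\mult_{\vec D^\uplus}(f) \leq 1$ for every fact $f$, which, using $\mult_{\vec{D}^\uplus}(f) = \sum_i \mult_{D_i}(f)$ and the fact that all summands are non-negative integers, is equivalent to: for every $f$ there is at most one index $i$ with $\mult_{D_i}(f) > 0$, and for that $i$ we have $\mult_{D_i}(f) = 1$. The first condition expresses that the supports $S_{D_i}$ of the $D_i$ are pairwise disjoint (on the relevant facts), and combined with the second, this says exactly that each $D_i$ is a set instance and the collection $(D_i)_{i \geq 1}$ is pairwise disjoint.

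I do not anticipate a real obstacle; the only point of care is justifying the exchange $\sum_f \sum_i = \sum_i \sum_f$ in the multiplicity sum, which follows from non-negativity (Tonelli for counting measures, or more elementarily, the fact noted in \cref{sec:infprod} that rearrangement is harmless for non-negative series), and noting that in part \cref{itm:g_uplus_db_i} the implication from finiteness of $\size{\vec D^\uplus}$ to $\vec D \in \DB_\fin^\otimes$ uses that each $D_i$ is itself finite but contributes at least $1$ to $\size{\vec D^\uplus}$ when non-empty.
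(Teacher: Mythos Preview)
Your proof is correct and follows essentially the same approach as the paper's: both argue directly from the definition of the additive union, with the paper phrasing things in terms of \enquote{$D_i \neq \emptyset$ for infinitely many $i$} and \enquote{have a common fact} where you spell out the multiplicity arithmetic explicitly. Your version is slightly more detailed (making the Tonelli-type rearrangement explicit), but the underlying argument is the same.
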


\begin{proof}
  \begin{enumerate}
  \item If
    $\vec{ D } = ( D_1, D_2, \dots ) \in \DB_{ \fin }^{ \otimes }$,
    then $D_i \neq \emptyset$ for at most finitely many
    $i = 1, 2, \dots$.  Thus, $\vec{ D }^{ \uplus }$ is a database
    instance.
    
    If, on the contrary,
    $\vec{ D } = ( D_1, D_2, \dots ) \in \DB^{ \otimes } \setminus
	 \DB_{ \fin }^{ \otimes }$, then $D_i \neq \emptyset$ for infinitely many $i
	 \in \NN_{ > 0 }$. Thus, $\size{ \vec{ D }^{ \uplus } } = \infty$. In
	 particular, $\vec{ D }^{ \uplus }$ is not a database instance.  

  \item Now suppose $\vec{ D }^{ \uplus }$ is a database
    instance. Then by \labelcref{itm:g_uplus_db_i},
    $\vec{ D } \in \DB_{ \fin }^{ \otimes }$. On the one hand, if
    $D_1, D_2, \dots$ are pairwise disjoint set instances among which
    only finitely many are non-empty, then their additive union is
    clearly a set instance as well. If, on the other hand,
    $D_1, D_2, \dots$ contains either a proper bag instance or at
    least two of its components have a common fact, then
    $\vec{ D }^{ \uplus }$ is a proper bag instance.
	 \qedhere
  \end{enumerate}
\end{proof}

We now define $\biguplus_{ i = 1 }^{ \infty } \D_i$ as the image measure space
of $\D_{ \fin }^{ \otimes }$ under $( \? )^{ \uplus }$. That is, 
$\biguplus_{ i = 1 }^{ \infty } \D_i = ( \DB, P )$ with
\begin{align*}
	\DB 
	&= \powerbag_{ \fin } \Bigg( \bigcup_{ i = 1 }^{ \infty } \Facts_i \Bigg)
	\quad\text{and}\\
	P\big( \set{ D } \big)
	&= P^{ \otimes }\big( \set{ \vec{ D } \in \DB_{ \fin }^{ \otimes } \with
	\vec{ D }^{ \uplus } = D } \big)
	\text.
\end{align*}
Because the restriction of $( \? )^{ \uplus }$ to $\DB_{ \fin }^{ \otimes }$ is
measurable, this also makes $\biguplus_{ i = 1 }^{ \infty } \D_i$ a measure
space. Again, however, it is not necessarily a probability space.

\begin{example}\label{exa:coinflip3}
	We continue from \cref{exa:coinflip2}.  As in this example, for $i\neq j$
	and $D_i\in\DB_i$, $D_j\in\DB_j$ we have $D_i\subseteq\{f_i\}$,
	$D_j\subseteq\{f_j\}$ and therefore $D_i\cap D_j=\emptyset$, by
	Lemma~\ref{lem:g_uplus_db} all instances in $\biguplus_{ i = 1 }^{ \infty }
	\D_i$ are set instances, and we have $\DB=\powerset_{ \fin }( \Facts)$. Note
	that for every instance $D\in\DB$ the set $\{\vec D\mid \vec D^\uplus=D\}$
	consists of a single sequence $\vec D=(D_1,D_2,\ldots)$ with $D_i=\{f_i\}$
	if $f_i\in D$ and $D_i=\emptyset$ otherwise. Thus by \eqref{eq:prodmass2},
	\begin{equation*}
		P(\set{ D })
		= \prod_{\substack{ i \in \NN_{ >0 } \with \!\!\\ f_i \in D}} p_i \cdot
		\prod_{\substack{ i \in \NN_{ >0 } \with \!\!\\f_i\not\in D}} (1-p_i)
		\text.
  \end{equation*}
  Note that this is exactly the probability $\{D\}$ would have in a
  $\TI$-PDB with fact probabilities $P(f_i)=p_i$. Of course such a $\TI$-PDB
  only exists if $\sum_i p_i<\infty$.
\end{example}

\begin{remark}
	If the measure space $\biguplus_{ i = 1 }^{ \infty } \D_i$ is a probability
	space, then it is a bag PDB. If this is the case, but additionally all 
	$\D_i$ are set PDBs with disjoint fact sets, then we can proceed as follows
	to really obtain \emph{set} PDBs instead of simple bag PDBs: We treat $( \?
	)^{ \uplus }$ as a function into $\powerset\big( \bigcup_{ i = 1 }^{ \infty
	} \Facts_i \big)$ and let $\DB = \powerset_{ \fin }\big( \bigcup_{ i = 1 }^{
	\infty } \Facts_i \big)$. Then $\biguplus_{ i = 1 }^{ \infty } \D_i$ is a
	set PDB.
\end{remark}

Generalizing the previous example, our tentative goal is to express arbitrary
$\TI$-PDBs as the superposition of single fact PDBs. In particular, the
superposition should have the desired independence properties, and the correct
marginal probabilities. This is prepared with the following lemma.

\begin{lemma}\label{lem:sp_ind}
	Suppose $\D \coloneqq \biguplus_{ i = 1 }^{ \infty } \D_i = ( \DB, P )$ is
	a PDB. For all $i = 1,2, \dots$ let $\DDDD_i \subseteq \DB_i$ such that $D
	\cap \Facts_j = \emptyset$ for all $D \in \DDDD_i$ and all $j \neq i$.
	Let 
	\[
		\widetilde{ \DDDD }_i 
			\coloneqq
		\DB_{\fin}^{\otimes} \cap \pi_i^{-1}( \DDDD_i )
			= 
		\biguplus \bigg( \DB_{ \fin }^{ \otimes } \cap \Big( \prod_{ j = 1 }^{ i-1 }
			\DB_j \times \DDDD_i \times \prod_{ j = i+1 }^{ \infty } \DB_j \Big) \bigg)
		\text.
	\]
	Then
	\begin{enumerate}
		\item $P( \widetilde{\DDDD}_i ) = P_i( \DDDD_i )$ for all
			$i = 1, 2, \dots$, and
		\item the family $( \widetilde{\DDDD}_i )_{ i \in \NN_{>0} }$ is
			independent in $\biguplus_{ i = 1 }^{ \infty } \D_i$.
		\qedhere
	\end{enumerate}
\end{lemma}

Note that the precondition of \cref{lem:sp_ind} states that the instances in
$\DDDD_i$ do not appear among the instances of the other PDBs $\D_j$, nor can
they be assembled from them (apart from the empty instance).  That is, the
preimage of $\widetilde{\DDDD}_i$ under $( \? )^{\uplus}$ is exactly
$\DB_{\fin}^{\otimes} \cap \big(\prod_{ j = 1 }^{ i - 1 } \DB_j \times \DDDD_i
\times \prod_{ j = i + 1 }^{\infty} \DB_j\big)$.

\begin{proof}\leavevmode
	\begin{enumerate}
	\item Let $i \in \NN_{ > 0 }$. Then
		\begin{align*}
			P\big( \widetilde{ \DDDD }_i \big) 
			&= P^{ \otimes }\big( \set{ \vec D \in \DB_{\fin}^{\otimes} \with
			\vec D^{\uplus} \in \widetilde{ \DDDD }_i } \big)\\
			&= P^{ \otimes }\big( \set{ (D_1,D_2,\dots) \in \DB_{\fin}^{\otimes} \with D_i \in \DDDD_i } 
			\big)\\
			&= P^{ \otimes }\big( \DB_{\fin}^{\otimes} \cap \pi_{i}^{-1}( \DDDD_i
			) \big) = P^{\otimes}\big( \pi_i^{-1}( \DDDD_i ) \big) = P_i( \DDDD_i
			)
			\text.
		\end{align*}
		For the second equality, we used the property discussed right before the
		proof. The last line uses that $P^{\otimes}( \DB^{\otimes} \setminus
		\DB_{\fin}^{\otimes} ) = 0 $.
  \item Let $k \in \NN_{ > 0 }$ and let $i_1, \dots, i_k \in \NN_{> 0}$. Let 
    $\biguplus_{ i = 1 }^{ \infty } \D_i = (\DB, P)$. It holds that
    \begin{equation*}
      P
      \Bigg( 
      \bigcap_{ j = 1 }^{ k } 
		\widetilde{ \DDDD }_{ i_j }
      \Bigg)
      = P^{ \otimes }
      \Bigg(
      \bigcap_{ j = 1 }^{ k } 
      \pi_{ i_j }^{ -1 }\big( \DDDD_{ i_j } \big)
      \Bigg)
      = \prod_{ j = 1 }^k P_{i_j}( \DDDD_{i_j} )
      = \prod_{ j = 1 }^k 
		P \big( \widetilde{ \DDDD }_{i_j} \big)
      \text,
    \end{equation*}
	 where the first equality again uses the property from before the proof, and
	 the middle equality is due to the properties of the product measure
	 (\cref{fac:product_measure}). Because the above holds for arbitrary $k$ and
	 $i_1,\dots,i_k$ it follows that $\big( \widetilde{ \DDDD }_{ i } \big)_{i
	 \in \NN_{ > 0 } }$ is independent in $\biguplus_{ i = 1 }^{ \infty } \D_i$.\qedhere
  \end{enumerate}
\end{proof}

We close the discussion of countable superpositions with the probabilistic
counterpart of \cref{lem:g_uplus_db}.

\begin{lemma}\label{lem:sp_pdb}\leavevmode
	\begin{enumerate}
		\item The measure space\/ $\biguplus_{ i = 1 }^{ \infty } \D_i$ is a PDB
			if and only if\/ $\sum_{ i = 1 }^{ \infty } \Pr_{ D_i \sim \D_i }
			\event{ D_i \neq \emptyset } < \infty$.
			\label{itm:sp_pdb_i}
		\item If\/ $\biguplus_{ i = 1 }^{ \infty } \D_i$ is a PDB, then it is a
			set PDB if and only if all $\D_i$ are set PDBs with pairwise disjoint
			fact sets.\qedhere
	\end{enumerate}
\end{lemma}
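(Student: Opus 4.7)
\medskip

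\textbf{Plan.} The key observation is that $\biguplus_{i=1}^\infty\D_i$ was defined as the image measure of $\D_\fin^\otimes$ under $(\?)^\uplus$, and the restriction of $(\?)^\uplus$ to $\DB_\fin^\otimes$ maps into $\powerbag_\fin(\bigcup_i\Facts_i)=\DB$. So $P(\DB)=P^\otimes(\DB_\fin^\otimes)$ and the question whether $\biguplus_{i=1}^\infty\D_i$ is a probability space reduces to whether $P^\otimes(\DB_\fin^\otimes)=1$. I will translate both parts of the lemma into statements about the product space and settle them there.

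For part~\labelcref{itm:sp_pdb_i}, define the events $\AAAA_i\coloneqq\pi_i^{-1}\bigl(\set{D_i\in\DB_i\with D_i\neq\emptyset}\bigr)$ in $\bigotimes_{i=1}^\infty\D_i$. By construction of the product measure we have $P^\otimes(\AAAA_i)=P_i\set{D_i\in\DB_i\with D_i\neq\emptyset}=\Pr_{D_i\sim\D_i}\event{D_i\neq\emptyset}$, and by \cref{fac:product_ind} the family $(\AAAA_i)_{i\in\NN_{>0}}$ is independent. Moreover, $\DB_\fin^\otimes$ is precisely the complement of $\bigcap_{n=1}^\infty\bigcup_{i=n}^\infty\AAAA_i$ (the event that infinitely many components are non-empty). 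Applying the two parts of the \bcref{} then yields the dichotomy: if $\sum_i P^\otimes(\AAAA_i)<\infty$, the limsup event has measure $0$ and $P^\otimes(\DB_\fin^\otimes)=1$; if $\sum_i P^\otimes(\AAAA_i)=\infty$, independence forces the limsup event to have measure $1$ and hence $P^\otimes(\DB_\fin^\otimes)=0$ (in which case $\biguplus_{i=1}^\infty\D_i$ is not even a probability space).

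For part~(2), assume $\biguplus_{i=1}^\infty\D_i$ is a PDB and recall that being a set PDB amounts to almost surely obtaining a set instance. Pulling this back along $(\?)^\uplus$, this is equivalent to $P^\otimes\bigl(\set{\vec D\in\DB_\fin^\otimes\with\vec D^\uplus\in\DB^\SET[\tau,\UU]}\bigr)=1$. By \cref{lem:g_uplus_db}\labelcref{itm:g_uplus_db_i}, this set coincides with the set of sequences $\vec D$ whose components are pairwise disjoint set instances. The backward direction is then immediate: if each $\D_i$ is a set PDB with the $\Facts_i$ pairwise disjoint, then every $\vec D\in\DB_\fin^\otimes$ consists of pairwise disjoint set instances, so $\vec D^\uplus\in\DB^\SET[\tau,\UU]$. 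For the forward direction, I will argue componentwise: for each $i$, projecting onto the $i$th coordinate shows that $\D_i$ is almost surely a set instance, and for $i\neq j$, independence of the coordinates under $P^\otimes$ together with the almost-sure disjointness $D_i\cap D_j=\emptyset$ forces $P_i(f)\cdot P_j(f)=0$ for every $f\in\Facts_i\cap\Facts_j$; hence the effective fact sets (those carrying positive mass) are pairwise disjoint, as required. The one thing to be careful about is that \cref{lem:g_uplus_db} talks about genuine disjointness of components whereas the probabilistic statement gives only almost-sure disjointness — handling this translation via \cref{fac:product_ind} is the main, though routine, obstacle.
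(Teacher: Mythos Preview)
Your argument for part~\labelcref{itm:sp_pdb_i} is essentially the paper's proof: define the non-empty events $\AAAA_i=\pi_i^{-1}\bigl(\DB_i\setminus\set{\emptyset}\bigr)$, identify $\DB_\fin^\otimes$ with the complement of $\bigcap_n\bigcup_{i\ge n}\AAAA_i$, and apply both halves of the \bcref{} (using independence from \cref{fac:product_ind} for the divergent case). Nothing to add there.

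For part~(2) your route diverges from the paper's, and the gap you flag at the end is not merely cosmetic. The paper disposes of (2) in one line as a direct consequence of the \emph{second} part of \cref{lem:g_uplus_db} (note: you cite \labelcref{itm:g_uplus_db_i}, but you are using the content of the second item). The point is that ``set PDB'' in \cref{def:pdb} is a condition on the \emph{sample space}, not an almost-sure statement, so the equivalence is purely structural: $\vec D^{\uplus}$ is a set instance for every $\vec D\in\DB_\fin^\otimes$ iff every $D_i\in\DB_i$ is a set and $D_i\cap D_j=\emptyset$ for all $D_i\in\DB_i$, $D_j\in\DB_j$ with $i\neq j$; the latter is exactly ``each $\D_i$ is a set PDB and the $\Facts(\D_i)$ are pairwise disjoint''. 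No probabilities enter.

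Your probabilistic reading (``set PDB'' $=$ almost surely a set) instead proves the \emph{simple} version of the statement: each $\D_i$ is simple and $P_i(f)\cdot P_j(f)=0$ for all $f$. That is strictly weaker than ``$\Facts(\D_i)\cap\Facts(\D_j)=\emptyset$'' --- a fact $f$ can lie in both $\Facts(\D_i)$ and $\Facts(\D_j)$ while $P_i(f)=0$, and your argument cannot rule this out. So the translation you call ``routine'' via \cref{fac:product_ind} does not close the gap; under the paper's structural reading it is not there to begin with.
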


\begin{proof}
  Before we turn to the proof, for all $i = 1, 2, \dots$ we define 
  $\NonEmpty_i$ as the set of sequences in $\DB^{ \otimes }$ where the $i$th
  instance is non-empty. That is, for all $i = 1, 2, \dots$ it holds that
  \begin{equation*}
    \NonEmpty_i 
    = \pi_i^{ - 1 }\big( \DB_i \setminus \set{ \emptyset } \big)
    = \prod_{ j = 1 }^{ i - 1 } \DB_j
    \times \big( \DB_i \setminus \set{ \emptyset } \big)
    \times \prod_{ j = i + 1 }^{ \infty } \DB_j
    \text.
  \end{equation*}
  Then by construction, $P^{ \otimes }( \NonEmpty_i ) = \Pr_{ D_i \sim
    \D_i } \event{ D_i \neq \emptyset }$ for all $i = 1, 2, \dots$. Note that
  \begin{equation}\label{eq:nonempty_bc}
    \DB_{ \fin }^{ \otimes } 
    = \DB^{ \otimes }
    \setminus
    \bigg( 
    \bigcap_{ i = 1 }^{ \infty } 
    \bigcup_{ j = i }^{ \infty }
    \NonEmpty_j 
    \bigg)\text.
  \end{equation}
  
  \begin{enumerate}
  \item First suppose that $\sum_{ i = 1 }^{ \infty } P^{ \otimes }
    (\NonEmpty_i) = \sum_{ i = 1 }^{ \infty } \Pr_{ D_i \sim \D_i }
    \event{ D_i \neq \emptyset } < \infty$. Then by the \bcref{} it follows
    from \cref{eq:nonempty_bc} that $P( \DB ) = P^{ \otimes } \big(
    \DB_{ \fin }^{ \otimes } \big) = 1$. That is, $P$ is a probability
    measure on $\DB_{ \fin }^{ \otimes }$, so $\biguplus_{ i = 1 }^{ 
      \infty } \D_i$ is a PDB.
    
    For the other direction let $\sum_{ i = 1 }^{ \infty } P^{ \otimes }
    ( \NonEmpty_i ) = \sum_{ i = 1 }^{ \infty } \Pr_{ D_i \sim \D_i }
    \event{ D_i \neq \emptyset } = \infty$. 
	 Note that the events $\NonEmpty_i$ are (in particular pairwise) independent
	by the properties of the product measure (\cref{fac:product_ind}).
 Thus, from 
    \eqref{eq:nonempty_bc} and the \bcref{} we get that
    $P^{ \otimes }\big( \DB_{ \fin }^{ \otimes } \big) = P( \DB ) = 0$ and
    therefore, $\biguplus_{ i = 1 }^{ \infty } \D_i$ is no PDB.
  \item This is a direct consequence of the second part of 
    \cref{lem:g_uplus_db}.\qedhere
  \end{enumerate}
\end{proof}

\begin{remark}
	\cref{thm:ti_series} can be regarded as a special case of
	\cref{lem:sp_pdb}\labelcref{itm:sp_pdb_i} for the $\D_i$ being single-fact
	PDBs (cf. \crefrange{exa:coinflip}{exa:coinflip3}). Comparing the proofs, we
	find that the proof \cref{lem:sp_pdb}\labelcref{itm:sp_pdb_i} is based on
	the more advanced machinery of product measures, which enables us to apply
	the \bcref{} where in the proof of \cref{thm:ti_series} we gave an ad-hoc
	construction of the finitary part of the product space. However, at their
	core the two proofs are very similar.
\end{remark}

As with finite superpositions, where it can easily be checked by hand, the 
independent superposition does not depend on the order of the involved spaces
whatsoever, because $\uplus$ is commutative. Thus, we allow superpositions
$\biguplus_{ i \in I } \D_i$ for arbitrary countable index sets $I$. Finally 
note that finite superpositions could have been introduced equivalently with a
product space construction instead of the explicit description in 
\cref{sss:fin_sp}.

\subsection{Tuple-Independence via Superpositions}
\label{ss:ti_via_sp}

Let us now carry out the argument describing $\TI$-PDBs in terms of
superpositions in detail. We return to our assumptions
\cref{ass:ctbl1,ass:ctbl2,ass:ctbl3} to the full extent. Recall that
for every $\Facts \subseteq \Facts[ \tau, \UU ]$ and
$P \from \Facts \to [0, 1]$, the (unique) $\TI$-PDB spanned by
$\Facts$ and $P$ is denoted by $\D_{ \spn{ \Facts, P } }$, provided
that it exists. We show that $\TI$-PDBs can be decomposed into
arbitrary independent components.

\begin{theorem}\label{thm:ti_decomposition}
  Let $\Facts \subseteq \Facts[ \tau, \UU ]$ be a (countable) set of facts and
  $P \from \Facts \to [0, 1]$. Suppose that $\big( \Facts_i \big)_{ i \in I }$
  is a partition of\/ $\Facts$ and let $P_i$ denote the restriction of $P$ to
  $\Facts_i$.
	\begin{enumerate}
		\item The $\TI$-PDB $\D_{ \spn{ \Facts, P } }$ exists if and only if\/
			$\biguplus_{ i \in I } \D_{ \spn{ \Facts_i, P_i } }$ is a PDB.
			\label{itm:ti_decomposition_i}
		\item If\/ $\D_{ \spn{ \Facts, P } }$ exists, then 
			\begin{equation*}
				\D_{ \spn{ \Facts, P } } 
				= \biguplus_{ i \in I } \D_{ \spn{ \Facts_i, P_i } }
				\text.\qedhere
			\end{equation*}
			\label{itm:ti_decomposition_ii}
	\end{enumerate}
\end{theorem}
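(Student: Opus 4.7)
My plan is to reduce each part of the theorem to a comparison of summability conditions and then to a uniqueness argument. Part (1) follows from combining \cref{thm:ti_series} (for the left-hand side) with \cref{lem:sp_pdb} (for the right-hand side); together these translate the claim into the assertion that $\sum_{f \in \Facts} P(f) < \infty$ if and only if each $\alpha_i := \sum_{f \in \Facts_i} P_i(f)$ is finite \emph{and} $\sum_{i \in I} \beta_i < \infty$, where $\beta_i := 1 - \prod_{f \in \Facts_i}(1 - P_i(f)) = \Pr_{D_i \sim \D_{\spn{\Facts_i, P_i}}} \event{D_i \neq \emptyset}$. The extra finiteness assumption on the $\alpha_i$ is needed because $\biguplus_{i \in I} \D_{\spn{\Facts_i, P_i}}$ is only well-defined when every component PDB exists, which by \cref{thm:ti_series} is equivalent to $\alpha_i < \infty$.

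The forward direction of this summability equivalence is immediate from the Weierstrass bound \eqref{eq:prodsum}, which yields $\beta_i \leq \alpha_i$ and hence $\sum_i \beta_i \leq \sum_i \alpha_i = \sum_f P(f)$. The backward direction is the main obstacle: a priori $\beta_i$ might stay small while $\alpha_i$ blows up. I plan to rule this out using the elementary bound $\beta_i \geq 1 - e^{-\alpha_i}$ (from $1 - x \leq e^{-x}$), splitting $I$ into $I_{\leq 1} := \set{i \in I \with \alpha_i \leq 1}$ and $I_{>1} := I \setminus I_{\leq 1}$. On $I_{\leq 1}$, the inequality $1 - e^{-x} \geq x/2$ (valid for $x \in [0,1]$) gives $\beta_i \geq \alpha_i/2$, so $\sum_{i \in I_{\leq 1}} \alpha_i \leq 2 \sum_i \beta_i < \infty$. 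On $I_{>1}$, $\beta_i \geq 1 - 1/e > 1/2$ forces $I_{>1}$ to be finite; its finitely many (each finite) $\alpha_i$ contribute a finite sum. Adding both contributions yields $\sum_f P(f) = \sum_i \alpha_i < \infty$, as required.

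For part (2), part (1) already guarantees that $\D^{\uplus} := \biguplus_{i \in I} \D_{\spn{\Facts_i, P_i}}$ is a PDB over $\Facts$. By the uniqueness assertion in \cref{pro:ti_unique}, it suffices to verify that $\D^{\uplus}$ is tuple-independent with marginal probabilities $P$. For a single fact $f \in \Facts_i$, the event $\event{f \in D}$ in $\D^{\uplus}$ pulls back under $(\?)^{\uplus}$ to $\set{D_i \in \DB_i \with f \in D_i} \subseteq \DB_i$, which is disjoint from every $\DB_j$ with $j \neq i$ since $f \notin \Facts_j$; \cref{lem:sp_ind} then yields the marginal $P_i(f) = P(f)$. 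For pairwise distinct $f_1,\ldots,f_k \in \Facts$, I group them by blocks, setting $G_i := \set{f_1,\ldots,f_k} \cap \Facts_i$, and rewrite $\bigcap_{j=1}^{k} \event{f_j \in D} = \bigcap_{i : G_i \neq \emptyset} \event{G_i \subseteq D}$. Each factor pulls back to $\set{D_i \in \DB_i \with G_i \subseteq D_i} \subseteq \DB_i$; these sets again satisfy the disjointness hypothesis of \cref{lem:sp_ind}. That lemma simultaneously gives the per-block probabilities $\prod_{f \in G_i} P_i(f)$ (using tuple-independence within $\D_{\spn{\Facts_i, P_i}}$) and independence across blocks, so the overall probability equals $\prod_{j=1}^{k} P(f_j)$. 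This completes the verification; the only genuinely delicate step in the whole argument is the case analysis in the backward direction of part (1).
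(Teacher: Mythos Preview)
Your proof is correct, and your treatment of part~(2) is essentially the paper's argument: verify via \cref{lem:sp_ind} that the superposition has marginals $P$ and independent fact events, then invoke uniqueness from \cref{pro:ti_unique}.

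The difference lies in the backward direction of part~(1). You take an analytic route, translating everything into the summability criterion of \cref{thm:ti_series} and then proving that $\sum_i \beta_i < \infty$ together with each $\alpha_i < \infty$ forces $\sum_i \alpha_i < \infty$, via the bound $\beta_i \geq 1 - e^{-\alpha_i}$ and a case split on $\alpha_i \lessgtr 1$. This is perfectly valid, but it is also unnecessary: your own argument for part~(2) already does the job. Once you have shown that the superposition $\D^\uplus$ is a $\TI$-PDB over $\Facts$ with marginals $P$, the existence of $\D_{\spn{\Facts,P}}$ is immediate---it \emph{is} $\D^\uplus$. This is exactly how the paper proceeds: it proves the backward direction of (1) and (2) simultaneously by identifying the superposition as the desired $\TI$-PDB. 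So the ``genuinely delicate step'' you flag can simply be dropped; what you gain from your detour is an independent quantitative comparison of $\alpha_i$ and $\beta_i$, which is a nice inequality but not needed for the theorem.
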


\begin{proof}
	We first show the left-to-right direction of
	\labelcref{itm:ti_decomposition_i} and then handle the other direction and
	\labelcref{itm:ti_decomposition_ii} simultaneously.

	Suppose that $\D_{ \spn{ \Facts, P } }$ exists. Then by
	\cref{thm:ti_series} it holds that $\sum_{ f \in \Facts } P( f) <
	\infty$. In particular, $\sum_{ f \in \Facts_i } P_i( f ) < \infty$ for all
	$i \in I$, so by \cref{thm:ti_series}, all $\D_{ \spn{ \Facts_i, P_i
	} } \eqqcolon \D_i$ exist. Then it holds that
	\begin{align*}
		\sum_{ i \in I } \Pr_{ D \sim \D_i } \event{ D \neq \emptyset }
		&= \sum_{ i \in I } \Pr_{ D \sim \D_i } \bigcup_{ f \in \Facts_i } 
			\event{ f \in D }\\
		&\leq \sum_{ i \in I } \sum_{ f \in \Facts_i } \Pr_{ D \sim \D_i } 
			\event{ f \in D }
		= \sum_{ i \in I } \sum_{ f \in \Facts_i } P_i ( f )
		= \sum_{ f \in \Facts } P( f )
		< \infty
		\text.
	\end{align*}
	Thus, by \cref{lem:sp_pdb}, $\biguplus_{ i \in I } \D_i$ is a PDB. 

	If $\biguplus_{ i \in I } \D_{ \spn{ \Facts_i, P_i } } \eqqcolon \D$ is a
	PDB, then in particular, $\D_{ \spn{ \Facts_i, P_i } }$ is a PDB for all $i
	\in I$. By \cref{lem:sp_ind}, the marginal probability of $f \in \Facts_i$
	in $\D$ coincides with $P_i( f ) = P( f )$. Moreover, the events $\big(
	\event{ f \in D } \big)_{ f \in \Facts_i }$ are independent in $\D_{
	\spn{ \Facts_i, P_i } }$ for all $i \in I$. From \cref{lem:sp_ind} it
	follows that the events $\big( \event{ f \in D } \big)_{ f \in \Facts }$ are
	independent in $\D$. Together, $\D$ is a $\TI$-PDB with fact set $\Facts( \D
	) = \Facts$ and marginals according to $P$. By uniqueness
	(\cref{pro:ti_unique}), it follows that $\D = \D_{ \spn{ \Facts, P } }$.
\end{proof}

Now for a set of facts $\Facts$, a function $P \from \Facts \to [0 ,1]$ and 
$f \in \Facts$ we let $\D_{ f } = \big( \DB_{ f }, P_f \big)$ denote the single
fact PDB with
\begin{align*}
	\DB_f &= \set[\big]{ \emptyset, \set{ f } } \quad\text{ and }\\
	P_f\big( \set{ D } \big) &=
	\begin{dcases}
		P( f ) 
		&\text{if } D = \set{ f } \text{ and}\\
		1 - P( f )
		&\text{if } D = \emptyset
	\end{dcases}
\end{align*}
for all $D \in \set[\big]{ \emptyset, \set{ f } }$. Applying 
\cref{thm:ti_decomposition} to these single-fact PDBs yields the
desired connection between infinite $\TI$-PDBs and superpositions.

\begin{corollary}
	Let $\Facts$, $P$ and $\big( \D_f \big)_{ f \in \Facts }$ as above.
	\begin{enumerate}
		\item The $\TI$-PDB $\D_{ \spn{ \Facts, P } }$ exists if and only if\/
			$\biguplus_{ f \in \Facts } \D_f$ is a PDB.
			\label{itm:ti_span_sp_i}
		\item If\/ $\D_{ \spn{ \Facts, P } }$ exists, then
			\begin{equation*}
				\D_{ \spn{ \Facts, P } } 
				= \biguplus_{ f \in \Facts } \D_f
				\text.\qedhere
			\end{equation*}
	\end{enumerate}
\end{corollary}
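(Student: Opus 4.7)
The plan is to apply \cref{thm:ti_decomposition} directly to the finest possible partition of $\Facts$, namely the partition into singletons. First I would verify that the single-fact PDB $\D_f$ coincides with the spanned $\TI$-PDB $\D_{\spn{\set{f}, P_{\set{f}}}}$, where $P_{\set{f}}$ is the restriction of $P$ to $\set{f}$. This is essentially by definition: over a one-element fact set there is exactly one marginal constraint, the sample space $\set{\emptyset, \set{f}}$ has size two, and the probabilities prescribed in the definition of $\D_f$ match formula \eqref{eq:ti_single_instance} of \cref{pro:ti_unique}. Note that these single-fact PDBs trivially always exist because $\sum_{g \in \set{f}} P(g) = P(f) \leq 1 < \infty$, so the existence hypothesis of \cref{thm:ti_series} is satisfied.

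Next, I would instantiate \cref{thm:ti_decomposition} with the partition $\big(\Facts_f\big)_{f \in \Facts}$ where $\Facts_f \coloneqq \set{f}$. With this partition, the restricted marginal function $P_f$ on $\Facts_f$ coincides with the marginals of $\D_f$, so $\D_{\spn{\Facts_f, P_f}} = \D_f$. Part \labelcref{itm:ti_decomposition_i} of the theorem then immediately gives \labelcref{itm:ti_span_sp_i}: the PDB $\D_{\spn{\Facts, P}}$ exists if and only if $\biguplus_{f \in \Facts} \D_f$ is a PDB. Part \labelcref{itm:ti_decomposition_ii} of the theorem gives the claimed equality $\D_{\spn{\Facts, P}} = \biguplus_{f \in \Facts} \D_f$ whenever the spanned PDB exists.

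There is no real obstacle here; the corollary is a pure specialisation of \cref{thm:ti_decomposition}. The only point that requires a sentence of care is the identification $\D_f = \D_{\spn{\set{f}, P_{\set{f}}}}$, which reduces to checking \cref{eq:ti_single_instance} on the two-element sample space (using the convention that the empty product equals one for the ``other facts'' factor). Everything else is a mechanical rewriting of the singleton-partition instance of \cref{thm:ti_decomposition}.
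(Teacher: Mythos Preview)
Your proposal is correct and matches the paper's approach exactly: the corollary is presented there as the immediate specialisation of \cref{thm:ti_decomposition} to the partition of $\Facts$ into singletons, with the only small observation being the identification $\D_f = \D_{\spn{\set{f}, P\restriction\set{f}}}$, which you handle correctly.
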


Explaining $\TI$-PDBs by superpositions suggests natural generalizations
of independence assumptions for $\TI$-PDBs. In \cref{thm:ti_decomposition},
what would happen if we were to replace the PDBs $\D_{ \spn{ \Facts_i, P_i } }$
with PDBs that are not tuple-independent? In this case, we still have the
independence properties of the superposition (\cref{lem:sp_ind}). One example
of such PDBs are block-independent disjoint PDBs, which are the subject of the
next subsection.

\subsection{Block-Independence and Beyond}
\label{ss:bid}

Finite probabilistic databases are referred to as being \emph{block-independent
disjoint} if the space of facts can be partitioned into stochastically
independent \enquote{blocks} such that database instances almost surely contain
at most one fact from each block. The following example illustrates that key
constraints naturally lead to block-independent disjoint PDBs.

\begin{example}\label{ex:order2}
	We recall the setting of \cref{ex:order} of a database of orders in a single 
	relation $\REL{Order}$. Now the attribute $\ATT{OrderID}$ is the key of the
	relation. Worlds violating the key constraint, that is, containing multiple
	tuples with the same key, should have probability $0$. The grouping and the
	additional horizontal lines indicate the blocks of the PDB.

	\begin{figure}[H]
		\centering%
		\begin{tcolorbox}[hbox,fancy table,tabular={r llr r},title={$\REL{Order}$},before upper app={\rowcolor{verylightgray}}]
			\underline{OrderID}
						& Customer	& ShipTo 		& Price [\textdollar]	& $P$	  \\
			\hline\hline%
			\multirow{2}{*}{0000001}	
						& Joe			& New York 		& 99 							& 0.8	  \\
						& Bob			& Los Angeles 	& 199							& 0.2	  \\
			\hline%
			0000002	& Emma		& Austin			& 70							& 1.0	  \\
			\hline%
			\multirow{3}{*}{0000003}	
						& Dave		& Atlanta		& 19							& 0.2	  \\
						& Sophia		& Bakersfield	& 25							& 0.6	  \\
						& Isabella	& Boston			& 100							& 0.1	  
		\end{tcolorbox}
		\caption{Representation of a finite block-independent disjoint PDB.}
	\end{figure}

	The positive probability outcomes of the above PDBs have two or three facts.
	The first one is either the fact belonging to Joe, or the one belonging to
	Bob. The second fact will always be the one belonging to Emma.  The third
	fact is either one of the facts belonging to Dave, Sophia and Isabella or
	not present at all (with probability $1 - 0.2 - 0.6 - 0.1 = 0.1$).

	Recall the coin flip interpretation of a $\TI$-PDB. In a $\BID$-PDB, instead
	of flipping a coin, we independently roll a distinguished (multi-sided, 
	unfair) die per block. The outcome of a die roll determines the fact drawn
	from the block (or whether not to include a fact from the block).
\end{example}

\begin{definition}[Countable $\BID$-PDBs]\label{def:bid_pdb}
	A PDB $\D = ( \DB, P )$ is called \emph{block-independent disjoint} (or, a 
	\emph{$\BID$-PDB}) if there exists a partition $\Blocks$ of  $\Facts( \D )$ 
	such that
	\begin{enumerate}
		\item for all families $( f_{ \Block } )_{ \Block \in \Blocks }$ of facts
			from distinct blocks $\Block \in \Blocks$, the events $f_{ \Block }
			\in D$ are independent, that is,
			\[
				\Pr_{ D \sim \D } \event{ f_1, \dots, f_k \in D }
				= P( f_1 ) \cdot \dotsc \cdot P( f_k )
			\]
			for all $k = 1, 2, \dots$ where $f_1, \dots, f_k$ are from pairwise
			distinct blocks; and \label{itm:bid_pdb_i}
    \item for all $\Block \in \Blocks$, the events $f \in \Block$ are pairwise 
		disjoint, that is ,
		\begin{equation*} 
			\Pr_{ D \sim \D } \event{ f, f' \in D } = 0
		\end{equation*}
      for all $\Block \in \Blocks$ and all $f, f' \in \Block$.
      \label{itm:bid_pdb_ii}
      \qedhere
    \end{enumerate}
  \end{definition}

Note that the blocks of a $\BID$-PDB are unique up to facts of marginal 
probability $0$. The \emph{canonical block partition} of a $\BID$-PDB is the
block partition, where all facts with marginal probability $0$ are grouped in
a single distinguished block. We denote this partition $\Blocks( \D )$.
Similar to the $\TI$-PDB case, the probability space of a $\BID$-PDB is uniquely determined by
the marginal probabilities. In case of existence, we denote the $\BID$-PDB with
blocks $\Blocks$ and marginal probabilities $P$ by $\D_{ \spn{ \Blocks, P } }$.
The probability measure of this PDB is as follows:
\begin{equation}\label{eq:bid_probs}
	\Pr_{ D \sim \D_{ \spn{ \Blocks, P } } }\big( \set{ D } \big)
	=
	\begin{dcases}
		\prod_{ \substack{ \Block \in \Blocks \with \mkern-2mu \\ 
			\size{ D \cap \Block } = 1
		} } P( f_{ \Block } ) \cdot
		\prod_{ \substack{ \Block \in \Blocks \with \mkern-2mu \\
			\size{ D \cap \Block } = 0
		} } \big( 1 - {\textstyle\sum_{ f \in \Block }} P( f ) \big)
		& \text{if }\size{ D \cap \Block } \leq 1 \text{ for all } 
			\Block \in \Blocks\\
		0 & \text{otherwise,}
	\end{dcases}
\end{equation}
where in the first case $f_{ \Block }$ denotes the unique fact in $D \cap 
\Block$.

\bigskip

A (countable) PDB $\D$ is called \emph{functional}, if $\Pr_{ D \sim \D }
\event{ f, f' \in D } = 0$ for all facts $f\neq f'$.\footnote{In other words,
a PDB is functional, if all possible worlds with positive probability are
singletons.} Two PDBs $\D_1, \D_2$ are called \emph{non-intersecting} if
$\Facts( \D_1 ) \cap \Facts( \D_2 ) = \emptyset$.  We say an instance $D$
\emph{intersects} a block $\Block$, if $D$ contains a fact from $\Block$.

\begin{theorem}\label{thm:ctblbid}\leavevmode
	\begin{enumerate}
		\item Every (countable) $\BID$-PDB $\D$ is a superposition of
			non-intersecting functional PDBs, and the sum of all block
			probabilities is finite, that is
			\[
				\sum_{ \Block \in \Blocks( \D ) } \Pr_{ D \sim \D } 
				\event{ D \text{\normalfont{} intersects } \Block } 
					< \infty\text.
			\]
		\item For every family of functional, non-intersecting PDBs $\D_1, \D_2, 
			\dots$ with $\sum_{ i = 1 }^{ \infty } \Pr_{ D \sim \D_i } 
			\event{ D \neq \emptyset } < \infty$, we can construct a $\BID$-PDB
			$\D$ with the marginal probabilities from $\D_1, \D_2, \dots$, that
			is, with
			\[
				\Pr_{ D \sim \D } \event{ f \in D } = 
				\Pr_{ D \sim \D_i } \event{ f \in D }
			\]
			for all $f \in \Facts( \D_i )$ and all $i = 1, 2, \dots$.
			\qedhere
	\end{enumerate}
\end{theorem}

\begin{proof}\leavevmode
	\begin{enumerate}
		\item Suppose $\D = \D_{ \spn{ \Blocks, P } }$ where $\Blocks$ is a block
			partition of some fact set $\Facts \subseteq \Facts[\tau,\UU]$ and $P
			\from \Facts \to [0, 1]$ with $\sum_{ f \in \Block } P( f ) \leq 1$
			for all $\Block \in \Blocks$. For all $\Block \in \Blocks$, recall
			that $\D_{ \Block } \coloneqq \D \under \powerset_{ \fin }( \Block )$
			has $\Facts( \D_{ \Block } ) = \Block$ and
			\[
				\Pr_{ D \sim \D_{ \Block } } \event{ f \in D }
				= \frac{
					\Pr_{ D \sim \D } \event[\big]{ D = \set{ f } } }{
					\Pr_{ D \sim \D } \event{ D \text{ does not intersect any } 
					\Block' \neq \Block }
				}
				= P( f )
			\]
			for all $f \in \Block$ (see \eqref{eq:bid_probs}). Also, the PDBs 
			$\D_{ \Block }$ are all functional and pairwise non-intersecting.

			Because $\D$ is a $\BID$-PDB, the events $\event{ D \text{ intersects
			} \Block }$ are independent in $\D$, and it holds that
			\[
				\Pr_{ D \sim \D } \event{ D \text{ intersects } \Block 
					\text{ for infinitely many } \Block \in \Blocks } = 0
					\text.
			\]
			Thus, it follows from the \bcref{} that
			\[
				\infty
				> \sum_{ \Block \in \Blocks } 
					\Pr_{ D \sim \D } 
						\event{ D \text{ intersects } \Block }
				= \sum_{ \Block \in \Blocks } 
					\Pr_{ D \sim \D_{ \Block }} \event{ D \neq \emptyset }
				\text.
			\]
			By \cref{lem:sp_ind}, $\biguplus_{ \Block \in \Blocks } 
			\D_{ \Block }$ is block-independent disjoint over $\Facts$ with
			blocks $\Block$ and marginals according to $P$, so 
			$\biguplus_{ \Block \in \Blocks } \D_{ \Block } = \D_{ \spn{ \Blocks, 
			P } }$.
		\item Let $\D_1, \D_2, \dots$ be a family of functional, non-intersecting
			PDBs with $\sum_{ i = 1 }^{ \infty } \Pr_{ D \sim \D_i } \event{ D
			\neq \emptyset } < \infty$. Let $\Facts \coloneqq \bigcup_{ i = 1 }^{ 
			\infty } \Facts( \D_i )$ and let $\Blocks$ be the partition of 
			$\Facts$ into the sets $\Facts( \D_i )$. By \cref{lem:sp_pdb},
			$\biguplus_{ i = 1 }^{ \infty } \D_i$ is a PDB. It follows from
			\cref{lem:sp_ind}, that $\biguplus_{ i = 1 }^{ \infty } \D_i$ is 
			block-independent disjoint with blocks $\Blocks$ and such that
			\[
				\Pr_{ D \sim \biguplus_{ i = 1 }^{ \infty } \D_i }
				\event{ f \in D }
				=
				\Pr_{ D \sim \D_i } \event{ f \in D }
			\]
			for all $f \in \Facts( \D_i )$ and all $i = 1, 2, \dots$.
			\qedhere
	\end{enumerate}
\end{proof}

We remark that analogues of \cref{lem:restr_ti,thm:ti_decomposition} also hold
for $\BID$-PDBs. Regarding \cref{lem:restr_ti}, one can show that the 
restriction of $\BID$-PDBs to a subset of its blocks will again yield a 
$\BID$-PDB. As for \cref{thm:ti_decomposition}, it turns out that a $\BID$-PDB
can be decomposed into arbitrary smaller $\BID$-PDBs, each of which is made up
of a subset of the original blocks. In either case, the argumentation is
completely parallel to the respective proof for $\TI$-PDBs.

\begin{remark}[Beyond Block-Independent Disjoint PDBs.]
	The concept of superposition can be used to build PDBs from arbitrary 
	independent building blocks (provided that the condition from 
	\cref{lem:sp_pdb} is satisfied). As such, the model can be used to discuss
	classes of PDBs that are subject to independence assumptions but contain
	arbitrary correlations within the independent \enquote{blocks}. Such a
	building process may even be interleaved with other standard constructions
	like disjoint union or convex combinations. 
\end{remark}

\subsection{\texorpdfstring{$\FO$}{FO}-Definability over 
	\texorpdfstring{$\TI$}{TI}-PDBs}
\label{ss:fo}

In this section, we discuss views of tuple-independent PDBs. For
(non-probabilistic) relational databases, a \emph{view} is a function $V$ that
maps database instances of some \emph{input schema} $\tau$ to some \emph{output
schema} $\tau'$. If $\tau'$ consists of only a single relation symbol, say
$\tau' = \set{ R }$, then $V$ is called a \emph{query}. If additionally, $R$ is
$0$-ary, then $V$ is called \emph{Boolean}. A view is called \emph{$\FO$-view}
if it is expressible as a first-order formula under the standard semantics
\cite{Abiteboul+1995} (note that we only discuss set semantics here). The
following is easy to verify.

\begin{fact}\label{fac:fo_size}
	Let $\phi = \phi(x_1, \dots, x_k)$ be an $\FO$-formula with $k$ free 
	variables over $\tau$, possibly mentioning constants from $\UU$. Let $D \in 
	\DB[\tau, \UU]$ such that $\phi(D) = \set{ (a_1,\dots,a_k) \with D \models
	\phi(a_1,\dots,a_k) }$ is finite. Then 
	\[
		\phi(D) \subseteq \big( \adom(D) \cup \adom( \phi )\big)^k
	\]
	where $\adom(\phi)$ denotes the set of constants from $\UU$ appearing in
	$\phi$.
\end{fact}

Going back to PDBs, we first clarify the semantics of queries and views. Let
$\D = ( \DB, P )$ be a PDB and let $V$ be a view with input schema
$\tau$ and output schema $\tau'$ (that is, $V \from \DB[\tau, \UU] \to
\DB[\tau', \UU]$). Let $\DB' = \DB[\tau', \UU]$. Then we let $V( \D ) 
\coloneqq ( \DB', P' )$ with
\begin{equation}\label{eq:viewsem}
	P'( \DDDD' ) 
	\coloneqq P \big( V^{-1}(\DDDD') \big) 
	= P \set[\big]{ D \in \DB \with V(D) \in \DDDD' }
\end{equation}
for all $\DDDD' \subseteq \DB'$. That is, 
\[
	P'(\set{D'}) = \sum_{ D \in \DB \with V(D) = D' } P(\set{D})
\]
for all $D' \in \DB'$. (Formally, $P'$ is the push-forward, or image measure of
$P$ under $V$, cf. \cref{app:prob}.) We call $V(\D)$ the \emph{image} of $\D$
under $V$.

The semantics described above lifts single instance semantics to sets of
instances. In the PDB literature, this has been called \emph{possible answer
sets semantics} in the context of finite PDBs \cite[p.~22]{Suciu+2011}. 

\begin{remark}
	The semantics are defined in the exact same way for uncountable PDBs, but
	then we have to make sure that the query or view is a \emph{measurable}
	function of database instances (see \cref{app:prob}). Otherwise, 
	\eqref{eq:viewsem} is not well-defined.

	Another semantics that is used for finite PDBs is the so-called
	\emph{possible answer semantics} \cite{Suciu+2011}. Therein, the result of a
	view is the collection of the marginal probabilities of the individual
	tuples. Recalling \cref{ex:single_fact_pdf}, this kind of representation of
	query results is of no use in uncountable PDBs.
\end{remark}

We call a PDB $\D$ \emph{$\FO$-definable} over a PDB $\D_0$, if there exists an
$\FO$-view $V$ such that $\D = V( \D_0 )$. If $\classstyle{D}$ is a class of
PDBs, we let $\FO(\classstyle{D})$ denote the class of PDBs that are 
$\FO$-definable over some PDB in $\classstyle{D}$. Moreover, for a database 
schema $\tau$ and universe $\UU$, we let $\FO[\tau,\UU]$ denote the class of
first-order formulae over $\tau$ that are allowed to use constants from $\UU$.
One justification for tuple-independence to be a viable concept for finite PDBs
is that \emph{any} finite PDB is an $\FO$-view of a finite $\TI$-PDB
\cite{Suciu+2011}. This, however, does not extend to countably infinite PDBs.

\begin{example}
	We reconsider the PDB $\D$ of infinite expected size that was introduced in 
	\cref{ex:infiniteexpectation}: The sample space $\DB$ of $\D = (\DB, P)$ 
	consists of the instances $\set{ D_i \with i \in \NN_{ > 0 } }$ where $D_n =
	\set{ R(1), \dots, R(2^n) }$ and the probabilities are given by $P\big(\set{
	D_n }\big) = \frac{6}{ \pi^2 \cdot n^2 }$. We claim that $\D$ is not
	$\FO$-definable over any $\TI$-PDB.\par

	To obtain a contradiction, suppose that $\D = V( \D_0 )$ for some $\TI$-PDB
	$\D_0 = (\DB_0, P_0)$. By \cref{cor:ti_finite_expectation}, $\Expectation_{
	\D_0 } \big( \size{ \? } \big) < \infty$ because $\D_0$ is a $\TI$-PDB. We let
	$r$ denote the maximal arity of a relation in the schema of $\D_0$. Then
	every $\DB_0$-instance $D_0$ satisfies
	\begin{equation*}
		\size{ \adom( D_0 ) } \leq r \cdot \size{ D_0 }\text.
	\end{equation*}

	Since the schema of $\D$ consists of a single relation symbol, $V$ consists
	of a single $\FO$-formula $\phi(x) \in \FO[ \tau_0, \UU ]$ where $\tau_0$ is
	the schema of $\D_0$ and $\UU$ is some common universe underlying $\D$ and
	$\D_0$. Let $c$ be the number of constants from $\UU$ appearing in $\phi$.
	As $\phi$ has $k=1$ free variables, by \cref{fac:fo_size}, for every
	$(\tau_0,\UU)$-instance $D_0$ it holds that 
	\begin{equation*}
		\size{ V( D_0 ) } = 
		\size{ \phi( D_0 ) } \leq
		( \size{ \adom( D_0 ) } + c )^k =
		\size{ \adom( D_0 ) } + c \leq 
		r \size{ D_0 } + c
		\text.
	\end{equation*}
	Thus,
	\begin{align*}
		\Expectation_{\D} \big( \size{ \? } \big)
		= \sum_{ D \in \DB } \size{ D } \cdot P\big( \set{ D } \big)
		&= \sum_{ D \in \DB } \size{ D } \cdot P_0\big( V^{-1}( D ) \big)\\
		&= \sum_{ D \in \DB } \sum_{ D_0 \in V^{-1}(D) }
			\size{ V( D_0 ) } \cdot P_0\big( \set{ D_0 } \big)\\
		&= \sum_{ D_0 \in \DB_0 } \size{ V( D_0 ) } \cdot 
			P_0\big( \set{ D_0 } \big)\\
		&\leq \sum_{ D_0 \in \DB_0 } \big( r \cdot \size{ D_0 } + c \big) \cdot
			P_0\big( \set{ D_0 } \big)\\
		&= r \cdot \Expectation_{ \D_0 }\big( \size{ \? } \big) + c
		< \infty
		\text,
	\end{align*}
	a contradiction.
\end{example}

The example demonstrates the following proposition.

\begin{proposition}
	$\FO\big( \ctblsetTI \big) \subsetneq \ctblsetPDB$.
\end{proposition}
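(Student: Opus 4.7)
The plan is to combine the example immediately preceding the proposition with the observation that $\FO$-views cannot blow up instance size too much, together with \cref{cor:ti_finite_expectation} which forces countable $\TI$-PDBs to have finite expected instance size.

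First, I would note the easy inclusion $\FO(\ctblsetTI) \subseteq \ctblsetPDB$: an $\FO$-view $V$ maps each input instance to a finite output instance, so it descends to a measurable map between the underlying countable sample spaces and the push-forward of a countable probability measure is again supported on a countable set of set instances (the finitary powerset $\sigma$-algebra makes everything trivially measurable in the countable setting). Hence $V(\D_0) \in \ctblsetPDB$ whenever $\D_0 \in \ctblsetTI$.

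For strictness I would reuse the PDB $\D$ from \cref{ex:infiniteexpectation}, whose expected instance size is $\infty$ because $\size{D_n} = 2^n$ grows exponentially while $P(\set{D_n}) = 6/(\pi^2 n^2)$. Suppose for contradiction that $\D = V(\D_0)$ for some $\D_0 = (\DB_0, P_0) \in \ctblsetTI$ and some $\FO$-view $V$ with input schema $\tau_0$ and output schema $\set{R}$ where $\ar(R) = 1$ (matching the schema of $\D$). Let $r$ be the maximum arity of any relation in $\tau_0$ and let $\phi(x)$ be the $\FO$-formula defining $V$, with $c$ the number of constants from $\UU$ appearing in $\phi$. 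For any $D_0 \in \DB_0$, \cref{fac:fo_size} gives $\size{V(D_0)} = \size{\phi(D_0)} \leq \size{\adom(D_0)} + c \leq r \cdot \size{D_0} + c$, because the active domain of a finite instance is bounded linearly in its size.

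Taking expectations term by term (all summands are non-negative, so the rearrangement by the push-forward identity $P(\set{D}) = P_0(V^{-1}(\set{D}))$ is valid), I would obtain
\begin{equation*}
	\Expectation_{\D}\big(\size{\?}\big)
	= \sum_{D_0 \in \DB_0} \size{V(D_0)} \cdot P_0\big(\set{D_0}\big)
	\leq r \cdot \Expectation_{\D_0}\big(\size{\?}\big) + c.
\end{equation*}
By \cref{cor:ti_finite_expectation}, $\Expectation_{\D_0}\big(\size{\?}\big) < \infty$, so the right-hand side is finite, contradicting $\Expectation_{\D}\big(\size{\?}\big) = \infty$. Therefore $\D \notin \FO(\ctblsetTI)$, which yields the strict inclusion.

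The only mildly delicate step is justifying the rearrangement of the double sum into $\sum_{D_0} \size{V(D_0)} P_0(\set{D_0})$; this is routine Tonelli for counting measure on the countable set $\DB_0$ and requires no further work. Everything else is a direct application of already-stated facts.
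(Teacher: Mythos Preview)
Your proof is correct and follows essentially the same approach as the paper: you use the PDB from \cref{ex:infiniteexpectation}, bound $\size{V(D_0)}$ via \cref{fac:fo_size} and the arity bound on the active domain, and derive a contradiction from \cref{cor:ti_finite_expectation}. The only addition is that you explicitly state the easy inclusion $\FO(\ctblsetTI)\subseteq\ctblsetPDB$, which the paper leaves implicit.
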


This means the infinite extension of $\TI$-PDBs is in some sense not as powerful
as its finite counterpart since $\FO\big( \finsetTI \big) = \finsetPDB$
\cite[Proposition 2.16]{Suciu+2011}. Incidentally, the proof of
\cite[Proposition 2.16]{Suciu+2011} cannot be translated to the infinite
setting, as it relies on an exhaustive encoding of all possible worlds into
facts of marginal probability $1$. This approach is not suitable for infinite
PDBs as it directly causes the sum of all marginal probabilities to diverge.

\subsection{Independent Completions}
\label{ss:ind_completion}

This subsection is devoted to a generalization of the idea of \emph{open-world
probabilistic databases} and \emph{$\lambda$-completions} of PDBs that was
introduced in \cite{Ceylan+2016}. 

\begin{definition}[Completions]\label{def:completion}
	Let $\Facts$ be a set of facts and let $\D = ( \DB, P )$ be a PDB with
	$\DB = \powerset_{ \fin }( \Facts )$. Let $\complete{ \DB } = \powerset_{
	\fin }\big( \Facts[\tau, \UU] \big)$.  Then a PDB $\complete{ \D } = (
	\complete{ \DB }, \complete{ P } )$ is called a \emph{completion} of $\D$ if
	$\complete{ P }( \DB ) > 0$ and
	\begin{equation*}
		\complete{ P }\big( \set{ D } \under \DB \big)
		= P\big( \set{ D } \big)\qquad
		\text{for all }D \in \DB \text.
		\qedhere
	\end{equation*}
\end{definition}

In other words, $\D'$ is a completion of $\D$ if the sample space of $\D'$ is
all of $\powerset_{ \fin }\big( \Facts[ \tau, \UU ] \big)$ and $\D$ is the
probability space obtained by conditioning $\D'$ on the sample space 
$\powerset_{ \fin }( \Facts )$: Intuitively, given that we know that a database 
instance was one of the original instances of the uncompleted PDB, it's 
probability stays the same. Note that there is no connection between
\cref{def:completion} and the notion of completion from measure theory. The
homonymy is a mere coincidence.

\begin{proposition}\label{pro:sp_completion}
	Let $\D_1$ and $\D_2$ be two PDBs with $\Facts( \D_1 ) \cap \Facts( \D_2 ) =
	\emptyset$ and $\Facts( \D_1 ) \cup \Facts( \D_2 ) = \Facts[ \tau, \UU ]$.
	Then $\D_1 \uplus \D_2$ is a completion of $\D_1$ if and only if\/ $\Pr_{ D
	\sim \D_2 }\event{ D = \emptyset } > 0$.
\end{proposition}

\begin{proof}
	For $i = 1, 2$ we let $\D_i = ( \DB_i, P_i )$ and $\Facts_i = 
	\Facts( \D_i )$ such that $\Facts_1$ and $\Facts_2$ form a partition of the
	set $\Facts[\tau, \UU]$ of all facts. 
	
	First suppose that $\complete{ \D } = \D_1 \uplus \D_2 = ( \complete{ \DB },
	\complete{ P } )$ is a completion of $\D_1$. By construction, 
	\[
		0 
		< \complete{ P }\big( \DB_1 \big)
		= P_1 \big( \DB_1 \big) \cdot P_2 \big( \set{ \emptyset } \big)
		= P_2 \big( \set{ \emptyset } \big)
		\text.
	\]

	For the backwards direction suppose $P_2\big( \set{ \emptyset } \big) > 0$
	and again, let $\D_1 \uplus \D_2 = \complete{ \D } = ( \complete{ \DB }, 
	\complete{ P } )$. Then $\hat{P}( \DB_1 ) = P_1( \DB_1 ) \cdot P_2\big(
	\set{ \emptyset } \big) = P_2\big( \set{ \emptyset } \big) > 0$ and for all
	$D \in \DB_1$ it holds that
	\[
		\complete{ P } \big( \set{ D } \under \DB_1 \big)
		= \frac{ \complete{ P } \big( \set{ D } \cap \DB_1 \big) }
			{ \complete{ P } \big( \DB_1 \big) }
		= \frac{ P_1 \big( \set{ D } \big) \cdot 
				P_2 \big( \set{ \emptyset } \big) }
			{ P_1 \big( \DB_1 \big) \cdot P_2 \big( \set{ \emptyset } \big) }
		= P_1\big( \set{ D } \big)\text.
	\]
	That is, $\complete{ \D }$ is a completion of $\D_1$.
\end{proof}

It is easy to see that not every completion of a PDB $\D$ can be written as a
superposition of $\D$ with a PDB of a fact set disjoint to that of $\D$. In 
fact, this is already the case for finite PDBs.

\begin{example}
	Suppose $\Facts[\tau, \UU] = \set{ f_1, f_2 }$. Let $\D = ( \DB, P )$
	with $\DB = \set{ \emptyset, \set{ f_1 } }$ and $P\big( \set{ f_1 } \big) = 
	p \in (0,1]$. Consider the PDB $\complete{ D } = ( \complete{ \DB }, 
	\complete{ P } )$ with $\complete{ \DB } = \powerset( \set{ f_1, f_2 } )$
	and $\complete{ P }$ according to the following table (where $\complete{ p }
	\in (0, 1)$):

	\begin{figure}[H]
		\centering%
		\begin{tabular}{ c cc cc }
			\toprule
			$D$
						& $\emptyset$
									& $\set{ f_1 }$
												& $\set{ f_2 }$	
															& $\set{ f_1, f_2 }$\\
			\midrule
			$\complete{ P }\big( \set{ D } \big)$
						& $(1 - p)(1 - \complete{p})$	
									& $p(1-\complete{p})$
												& $0$	
															& $\complete{p}$\\
			\bottomrule
		\end{tabular}
	\end{figure}
       
	Observe that  $\complete{ P }( \DB ) = 1 - \complete{p}$ and $\complete
	P(\{D\})=P(\{D\})\cdot(1-\complete p)$ for $D\in\DB$. Thus for $D\in\DB$ we
	have
	\begin{equation*}
		\complete{ P }\big( \set{ D } \mid \DB \big)
		= \frac{\complete P(\set D)}{ \strut\complete{P} (\DB) } 
		= \frac{ P\big( \set{ D }\big) \cdot (1 - \complete{p}) } 
			{ 1 - \complete{ p } }
		= P \big( \set{ D } \big)
		\text.
	\end{equation*}

	Hence $\complete{ \D }$ is a completion of $\D$. However, for any
	superposition $\D \uplus \D' = ( \complete{ \DB }, \complete{ P }' )$ where
	$\D'$ is any PDB with $\Facts( \D') = \set{ f_2 }$, it holds that
	$\complete{ P }' \big( \set{ f_2 } \big) > 0$ whenever $\complete{ P }' 
	\big( \set{ f_1, f_2 } \big) > 0$.  Thus, the completion $\complete{ \D }$
	considered above can not be expressed as a superposition of PDBs with
	disjoint fact sets.
\end{example}

In the light of \cref{pro:sp_completion}, a straightforward recipe for building
completions is to superpose a given PDB with a $\TI$-PDB that is given by
prescribed marginals for the remaining facts.

\begin{definition}[Independent Completions]\label{def:ind_completion}
	Let $\D$ be a PDB. A completion $\complete{ \D }$ of $\D$ is a
	\emph{$\TI$-completion} if $\complete{ \D } = \D \uplus \D'$ for some
	$\TI$-PDB $\D'$ with the property that for all facts $f \in
	\Facts[\tau,\UU]$ it holds that 
	\begin{equation}\label{eq:compl_facts}
		\Pr_{ D' \sim \D' } \event{ f \in D' } > 0
		\Rightarrow
		\Pr_{ D \sim \D } \event{ f \in D } = 0
	\end{equation}
	for all facts $f \in \Facts[\tau,\UU]$.

	Similarly, a completion $\complete{ \D }$ of $\D$ is a
	\emph{$\BID$-completion} if $\complete{ \D } = \D \uplus \D'$ for some
	$\BID$-PDB satisfying \eqref{eq:compl_facts}.
\end{definition}

Note that $\complete{ \D }$ will most likely not share the independence 
properties of $\D'$. However, we intuitively keep the independencies from $\D'$
through \cref{lem:sp_ind}. For example, this applies to the independence of the 
new facts in the completion. Moreover, if $\D$ is a $\TI$-PDB (resp. a
$\BID$-PDB) then $\complete{ \D }$ is a $\TI$-PDB (resp. a $\BID$-PDB) as well.

In \cite{Ceylan+2016}, the authors consider representations of finite 
$\TI$-PDBs that are given as a list of pairs $\smash{\big( f, P(f) \big)_{ f 
\in \Facts } }$ as is common practice in the treatment of finite PDBs 
\cite{Suciu+2011,VanDenBroeckSuciu2017}. According to \cite{Ceylan+2016}, 
representing PDBs this way inherently comes with a \emph{closed-world 
assumption} \cite{Reiter1981} though: If $\D_{ \spn{ \Facts, P } }$ is the 
$\TI$-PDB spanned by the list $\big( f, P( f ) \big)_{ f \in \Facts }$, we do
not model the uncertainty of facts $f \in \Facts[\tau, \UU] \setminus \Facts$
which for probabilistic query evaluation is equivalent to treating them as
events of probability $0$. As Ceylan~et~al. proceed to argue
\cite{Ceylan+2016}, this has several undesired practical consequences when
working with such representations.  They propose a model of
\emph{$\lambda$-completions} consisting of finite completions of a given finite
$\TI$-PDB by $\TI$-PDBs modeling the probabilities of all remaining facts. In
the following, we translate their construction into our framework.

\begin{definition}[Open-World Probabilistic Databases {\cite[Definition 4 \& 5]{Ceylan+2016}}]
	\label{def:lambda_completion}
	Let $\UU$ be finite. An \emph{open probabilistic database} is a pair
	$\G = ( \D, \lambda )$ where $\D = ( \DB, P )$ is a $\TI$-PDB over some 
	$\Facts \subseteq \Facts[\tau, \UU]$. A \emph{$\lambda$-completion} of $\D$
	is a $\TI$-PDB $\complete{ \D } = ( \complete{ \DB }, \complete{ P } )$ with
	$\complete{ \DB } = \powerset\big( \Facts[ \tau, \UU ] \big)$,
	\[
		\begin{aligned}[b]
		\complete{ P }( f ) &= P( f ) 
								  && \text{for all } f \in \Facts\text{ and}\\
		\complete{ P }( f ) &\leq \lambda 
								  && \text{for all } f \in \Facts[ \tau, \UU ]
								  		\setminus \Facts\text.
		\end{aligned}
		\qedhere
	\]
\end{definition}

Clearly, every ($\lambda$-)completion in the sense of Ceylan~et~al.
(\cref{def:lambda_completion}) is a completion in the sense of
\cref{def:completion}. That is, every $\lambda$-completion $\complete{ \D }$
can be written as
\[
	\complete{ \D } 
	= \D \uplus \D_{ \spn{ \Facts[\tau,\UU]\setminus\Facts, \complete{P} } }
\]
for the corresponding assignment $\complete{P} \from \Facts[\tau,\UU]
\setminus \Facts \to [0,\lambda]$ of marginal probabilities to the new facts.

The set of all $\lambda$-completions for some fixed $\lambda$ is the central
object of study in \cite{Ceylan+2016}. With our framework, the idea gently 
generalizes to infinite completions.

\begin{remark}
	The model of independent completions we described is, to our knowledge, not
	(or at least not directly) expressible in the existing \enquote{infinite}
	PDB systems \cite{KennedyKoch2010,Singh+2008a,AgrawalWidom2009} but could be
	emulated in MCDB \cite{Jampani+2011} as follows: First create $n$ dummy
	tuples where $n$ is drawn from a Poisson-distribution. For each of the $n$
	dummy tuples, sample their attribute entries independently according to a
	common probability measure on the space of facts using MCDBs \emph{Variable
	Generating functions}. According to \cite[Proposition
	3.5]{LastPenrose2017}, this describes a Poisson process on the space of
	facts, so it has the desired independence condition, but may contain
	duplicates (see also \cref{s:bag_instances,s:beyond}). We expect though,
	that a considerable speedup over this approach could be achieved in an
	sampling based implementation that directly exploits the independence
	properties.
\end{remark}

\subsection{Approximate Query Evaluation}
\label{ss:approx_eval}

Query evaluation in infinite PDBs is not the main focus of this paper and
remains an object of future study for the most part. Nevertheless, we present
two first results on query evaluation in infinite tuple-independent PDBs 
highlighting the bounds of possibility as well as connections to query
evaluation in finite (tuple-independent) PDBs. For query evaluation in finite
tuple-independent PDBs, the PDB can be given as part of the input by just
specifying the list of facts together with their marginal probabilities, as
seen in \cref{ex:order}. For analyzing complexity it is typically assumed that
all occurring marginal probabilities (and thus, in the finite, all instance
probabilities) are rational \cite{Gradel+1998,DalviSuciu2012}. As our PDBs can
be countably infinite, we need to comment on the data model. The basic
assumption replacing the exhausting list of fact probabilities is that given a
fact, we can determine its marginal probability in the input PDB. 

The first result we give states that we can compute additive approximations of 
the probability of Boolean query in countably infinite $\TI$-PDBs. We first
introduce this statement with respect to an oracle mechanism for accessing fact
probabilities, implying that this result is independent of the concrete
representation.  

Let $\D = (\DB, P)$ be a PDB. We say an algorithm \emph{has oracle access to
$\D$} (cf. \cite[Section 3.4]{AroraBarak2009}), if it can query a black box
\begin{enumerate}
	\item to obtain $P( f )$ given a fact $f$; and
	\item to obtain $\sum_{ f \in \Facts(\D) } P( f )$.
\end{enumerate}
In employing such an oracle mechanism we avoid the discussion of representation
issues at this point.

\begin{proposition}\label{pro:approx-additive}
	There is an algorithm $\A$ that, given oracle access to a $\TI$-PDB  $\D$,
	an $\epsilon>0$, and a Boolean query $\varphi \in \FO[\tau, \UU]$, returns a
	rational number $p( \phi )$ such that
	\[
		\Pr_{ D \sim \D }\big( D \models \phi \big) - \epsilon 
		\leq p( \phi ) \leq 
		\Pr_{ D \sim \D }\big( D \models \phi \big) + \epsilon
		\text.\qedhere
	\]
\end{proposition}

\begin{remark}
	We note that the proof does not actually rely on $\phi$ being an 
	$\FO$-query. All we need is that it comes from a class of queries such that
	there exists an algorithm that given a finite PDB and a Boolean query from
	the class returns the probability of the query being true.
\end{remark}

 \begin{proof}
	Let $\Facts( \D ) = \set{ f_1, f_2, \dots }$ and let $p_i = P( f_i )$.
	For all $n \in \NN_{ > 0 }$ let $\DB_n = \powerset\big( \set{ f_1, \dots,
	f_n } \big)$ and let $P_n \coloneqq P \restriction \set{ f_1, \dots, f_n }$.
	By \cref{lem:restr_ti}, $\D \under \DB_n = \D_{ \spn{ \set{ f_1, \dots, f_n 
	}, P_n } }$. Because $\D \under \DB_n$ is a \emph{finite} $\TI$-PDB (and we
	have access to its marginal probabilities using the oracle), the exact value
	of $P( \phi \under \DB_n )$ can be found using the traditional techniques
	for query answering in finite $\TI$-PDBs. 

	Since	$\D$ is a $\TI$-PDB, by \cref{thm:ti_series} it holds that 
	$\sum_{ i = 1 }^{ \infty } p_i < \infty$, so $\lim_{ i \to \infty } p_i = 
	0$. For all $n \in \NN_{ > 0 }$ we define
	\[
		r_n \coloneqq \sum_{ i = n + 1 }^{ \infty } p_i
		\text.
	\]
	Then $\lim_{ n \to \infty } r_n = 0$. We choose $n$ such that $r_n \leq 
	\epsilon$. A suitable $n$ can be computed by systematically listing facts
	$f_1, \dots, f_n$ until $r_n$ is small enough. The value $r_n$ itself can
	be calculated as $r_n = \sum_{ i = 1 }^{ \infty } p_i - \sum_{ i = 1 }^{ n } 
	p_i$ using the oracle access.

	Let $\DB_n \coloneqq \powerset\big( \set{ f_1, \dots, f_n } \big)$. We let
	our algorithm return $\A( \phi ) = p( \phi ) \coloneqq P( \phi \under \DB_n
	)$. Note that
	\[
		P( \DB_n )
		= 		\prod_{ i = n + 1 }^{ \infty } ( 1 - p_i )
		\overset{\eqref{eq:prodsum}}{\geq}	
				1 - \sum_{ i = n + 1 }^{ \infty } p_i
		= 		1 - r_n
		\geq	1 - \epsilon\text.
	\]
	Thus,
	\[
		P( \phi )
		= 	\underbrace{ P( \phi \under \DB_n ) }_{ = \A( \phi ) }
				\cdot \underbrace{ P( \DB_n ) }_{ \leq 1 }
			+	\underbrace{ P\big( \phi \under ( \DB_n )^{ \c } \big) }_{ \leq 1 } 
				\cdot \underbrace{ P\big( ( \DB_n )^{ \c } \big) }_{ \leq\epsilon } 
		\leq	\A( \phi ) + \epsilon\text,
	\]
	so $\A( \phi ) \geq P( \phi ) - \epsilon$. Also, we have
	\[
		P( \phi )
		=	\underbrace{ P( \phi \under \DB_n ) }_{ = \A( \phi ) }
				\cdot \underbrace{ P( \DB_n ) }_{ \geq 1 - \epsilon }
			+	\underbrace{ P\big( \phi \under ( \DB_n )^{ \c } \big) }_{ \geq 0 }
				\cdot \underbrace{ P\big( ( \DB_n )^{ \c } \big) }_{ \geq 0 }
		\geq	\A( \phi ) ( 1 - \epsilon )
		\geq	\A( \phi ) - \epsilon\text,
	\]
	so $\A( \phi ) \leq P( \phi ) + \epsilon$. Together, we have
	\[
		P( \phi ) - \epsilon \leq \A( \phi) \leq P( \phi ) + \epsilon 
	\]
	as required.
\end{proof}

In the problem discussed before, the database was fixed in the algorithm
whereas the query was the input. Next we also want to consider PDBs as an 
input.

\begin{definition}
	Let $\D = \D_{ \spn{ \Facts[\tau, \Sigma^*], P } }$ be a $\TI$-PDB where
	$P \from \Facts[\tau, \Sigma^*] \to [0, 1] \cap \QQ$. Let $M$ be a Turing
	machine with input alphabet $\Sigma \cup \tau \cup \set{ \mathord(,
	\mathord) }$ and let $\xi \in \QQ_{\geq 0}$.
	The pair $(M, \xi)$ \emph{represents} $\D$ if $M$ computes the function $p_M
	\from \Facts[\tau, \Sigma^*] \to [0, 1] \cap \QQ \with f \mapsto P( f )$ and
	$\xi = \sum_{ f \in \Facts[\tau, \Sigma^*] } P( f )$.
\end{definition}

The general computational problem we are interested in is then the following:

\begin{table}[H]
	\centering%
	\begin{tabular}{ l p{.75\textwidth} }
		\toprule
		\multicolumn{2}{l}{Probabilistic Query Evaluation $\PQE$}\\
		\midrule
		\textbf{Input} &
		A pair $(M_{\D}, \xi_{\D})$ representing a $\TI$-PDB $\D$ over $\tau$ and
		$\Sigma^*$ and a Boolean query $\phi \in \FO[\tau,\Sigma^*]$. \\
		\textbf{Output} & 
		The probability of $\phi$ being satisfied in $D \sim \D$, that is, $\Pr_{
		D \sim \D }\big( D \models \phi \big)$.\\
		\bottomrule
	\end{tabular}
\end{table}

We denote by $\PQE(\phi)$ the above problem with the input query being fixed to
some Boolean query $\phi \in \FO[\tau, \Sigma^*]$.

\begin{corollary}\label{cor:approx-additive}
	For all $\epsilon > 0$ there exists an additive $\epsilon$-approximation 
	algorithm for $\PQE$.
\end{corollary}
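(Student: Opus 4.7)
The plan is to derive \cref{cor:approx-additive} almost directly from \cref{pro:approx-additive} by observing that the representation $(M_{\D}, \xi_{\D})$ supplies exactly the oracle access required in the proof of that proposition: querying the black box for $P(f)$ amounts to simulating $M_{\D}$ on input $f$, and querying the black box for the total sum $\sum_{f \in \Facts[\tau,\Sigma^*]} P(f)$ just returns the rational $\xi_{\D}$. So I would lift the earlier algorithm to one that, given $(M_{\D}, \xi_{\D}, \phi)$ as input, performs the same steps but uses $M_{\D}$ and $\xi_{\D}$ in place of oracle calls.

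Concretely, first I would fix a computable enumeration $f_1, f_2, \dotsc$ of $\Facts[\tau,\Sigma^*]$ (this is possible because $\Sigma$, $\tau$, and the arity function are all finite/computable). For each $i$ the algorithm invokes $M_{\D}$ on $f_i$ to obtain $p_i = P(f_i) \in \QQ$, maintains the rational partial sum $s_n = \sum_{i=1}^{n} p_i$, and stops at the smallest $n$ such that $r_n \coloneqq \xi_{\D} - s_n \leq \epsilon$. By \cref{thm:ti_series} applied to $\D$ we have $\xi_{\D} < \infty$, so the sequence $(s_n)$ converges to $\xi_{\D}$ and such an $n$ exists; since all quantities involved are rational, the stopping test is decided exactly.

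Second, once $n$ is fixed, let $\DB_n = \powerset(\set{f_1,\dotsc,f_n})$. By \cref{lem:restr_ti}, $\D \under \DB_n$ is a \emph{finite} $\TI$-PDB whose marginal probabilities are the known rationals $p_1, \dotsc, p_n$; the algorithm uses any standard (computable, even if \mbox{$\#\mathsf{P}$}-hard) finite-$\TI$ query evaluation procedure to compute $p(\phi) \coloneqq \Pr_{D \sim \D \under \DB_n}(D \models \phi)$ exactly, and returns this value. The additive error bound $\lvert p(\phi) - \Pr_{D \sim \D}(D \models \phi)\rvert \leq \epsilon$ is then inherited verbatim from the two-sided estimate carried out in the proof of \cref{pro:approx-additive}, because there the only properties of the oracle that were used are precisely the two provided by $(M_{\D},\xi_{\D})$.

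I do not expect any real obstacle: the argument of \cref{pro:approx-additive} is genuinely uniform in $\D$, and the definition of a representation $(M_{\D},\xi_{\D})$ is tailored so that every oracle call in that proof becomes an effective computation. The only point worth flagging is that the running time depends on how quickly $s_n$ approaches $\xi_{\D}$, which is determined by $\D$ itself and cannot be bounded in terms of the input length alone; however, the statement claims only existence of an algorithm achieving the additive approximation, not any efficiency, so this is immaterial.
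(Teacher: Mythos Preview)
Your proposal is correct and follows exactly the paper's approach: the paper's proof simply states that the algorithm is obtained by following the proof of \cref{pro:approx-additive} and replacing oracle calls with computations using the representation $(M_{\D},\xi_{\D})$, which is precisely what you spell out in detail.
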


\begin{proof}
	Said algorithm can be obtained by following the proof of 
	\cref{pro:approx-additive} and replacing calls to the oracle by calculations
	using the representation of $\D$.
\end{proof}

\cref{pro:approx-additive,cor:approx-additive} give us additive approximation
results for query answering in countable $\TI$-PDBs. We note that for
approximation guarantee $\epsilon$ the run-time of the algorithm used in the
proof depends on the rate of convergence of the series of fact probabilities.
This is because the number $n$ of facts that are taken into consideration has
to be chosen such that $r_n$ (the sum of the remaining probabilities) is at
most $\epsilon$. In the best case, the facts are ordered in decreasing 
probability and $n$ is chosen minimal. Then, for example if the series of fact
probabilities is a geometric series, it holds that $n = \Omega\big(
\frac{ 1 }{ 1- \epsilon } \big)$.\footnote{Note though, that in general, series
may converge \enquote{arbitrarily slow}, see \cite[pp. 310--311]{Knopp1996}.}
The run-time of the complete algorithm is determined by the run-time of the
method that is used for the finite query evaluation on a $\TI$-PDB with $n$
facts.

\medskip

The next proposition shows that a multiplicative approximation algorithm does
not exist by investigating the data complexity of a particular (and very
simple) query.

\begin{proposition}\label{pro:approx-multiplicative}
	Let $\Sigma = \set{ 0, 1 }$ and $\tau = \set{ R, S }$ with $R$ and $S$ 
	unary.
	Let $\phi = \exists x \: R(x) \in \FO[\tau, \Sigma^*]$ and let $\rho \in
	\RR$, $\rho \geq 1$. Then there is no algorithm $\A$ for $\PQE(\phi)$ that
	on input a representation $(M_{\D},\xi_{\D})$ of a $\TI$-PDB $\D$ satisfies
	\begin{equation*}
		\rho^{ -1 } \Pr_{ D \sim \D }\big( D \models \phi \big) 
		\leq \A\big( M_{\D}, \xi_{\D} \big) \leq 
		\rho \Pr_{ D \sim \D }\big( D \models \phi \big) 
		\text.
		\qedhere
	\end{equation*}
\end{proposition}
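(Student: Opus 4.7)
The plan is to reduce the halting problem to the positivity question for $\PQE(\phi)$. The key observation is that any multiplicative $\rho$-approximation must in particular preserve the distinction between zero and strictly positive query probabilities: if $\Pr_{D\sim\D}(D\models\phi)=0$ then $\A(M_\D,\xi_\D)\leq\rho\cdot 0=0$, and if $\Pr_{D\sim\D}(D\models\phi)>0$ then $\A(M_\D,\xi_\D)\geq\rho^{-1}\Pr_{D\sim\D}(D\models\phi)>0$. Hence it suffices to show that, given a representation $(M_\D,\xi_\D)$ of a $\TI$-PDB $\D$, deciding whether $\Pr_{D\sim\D}(D\models\exists x\,R(x))>0$ is undecidable.

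Given an arbitrary Turing machine $N$ and input $w$, I would construct the $\TI$-PDB $\D$ as follows. For each $k\geq 0$, let $h(k)\in\{0,1\}$ be the indicator that $N$ halts on $w$ at \emph{exactly} step $k$, and assign marginals
\begin{equation*}
P(R(1^k))\coloneqq h(k)\cdot 2^{-k-1},\qquad P(S(1^k))\coloneqq (1-h(k))\cdot 2^{-k-1},
\end{equation*}
with $P(f)=0$ for every other fact $f\in\Facts[\tau,\Sigma^*]$. Since $N$ halts on $w$ at most once, for every $k$ exactly one of the two expressions above equals $2^{-k-1}$, so
\begin{equation*}
\sum_{f\in\Facts[\tau,\Sigma^*]}P(f)=\sum_{k=0}^{\infty}2^{-k-1}=1<\infty
\end{equation*}
regardless of the halting behaviour of $N$. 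By \cref{thm:ti_series}, the $\TI$-PDB $\D=\D_{\spn{\Facts[\tau,\Sigma^*],P}}$ exists, and I would set $\xi_\D=1$.

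The representation is effectively computable from $(N,w)$: the machine $M_\D$ parses its input, returns $0$ unless the input has the shape $R(1^k)$ or $S(1^k)$, and otherwise simulates $N$ on $w$ for $k$ steps and outputs the corresponding value from the definition above. Each such simulation is bounded, so $M_\D$ is total. By \cref{pro:ti_unique},
\begin{equation*}
\Pr_{D\sim\D}(D\models\exists x\,R(x))=1-\prod_{k=0}^{\infty}\bigl(1-P(R(1^k))\bigr),
\end{equation*}
which is strictly positive iff $h(k_0)=1$ for some $k_0$ (i.e.\ $N$ halts on $w$) and equals $0$ otherwise. A hypothetical $\rho$-approximation algorithm applied to $(M_\D,\xi_\D)$ would therefore decide the halting problem.

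The hard part is dealing with the constraint that $\xi_\D$ must be supplied as an explicit rational number: a naive construction in which only the $R$-facts carry positive probability would make $\xi_\D$ itself depend on whether $N$ halts, so the reduction could not produce it. The role of the $S$-facts in the construction above is to act as a computable \emph{compensator} whose marginals exactly fill the missing probability mass at every step, pinning $\xi_\D=1$ independently of $N$'s behaviour while preserving the equivalence between positivity of $R$-marginals and halting of $N$ on $w$.
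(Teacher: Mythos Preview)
Your proof is correct and follows essentially the same approach as the paper: reduce an undecidable problem by encoding its positive instances into the $R$-marginals while using the $S$-marginals as a computable compensator that fixes $\xi_\D=1$ independently of the instance. The only cosmetic differences are that the paper reduces from $\EMPTY$ (via Rice's theorem) using a pairing function over all strings, whereas you reduce directly from the halting problem using unary-indexed facts $R(1^k)$ and $S(1^k)$; your encoding is arguably a bit simpler, but the underlying mechanism is identical.
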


\begin{proof}
	We let
	\begin{equation*}
		\decode{\?} \from \Sigma^* \to \NN_{ > 0 } \with
		w_1 \dots w_n \mapsto \sum_{ i = 1 }^{ n } w_i 2^{ i - 1} + 2^{ n }
	\end{equation*}
	be the bijection that identifies a $\Sigma$-string $w$ with the positive
	integer whose binary representation is $1w$. Note that $\decode{\?}$ is
	computable.
	
	For a Turing machine $M$ over alphabet $\Sigma = \set{ 0, 1 }$ we let $L_M$
	denote the set of strings in $\Sigma^*$ that are accepted by $M$. By Rice's 
	Theorem \cite{Rice1953} (cf. \cite[Theorem 34.1]{Kozen1997}), the
	set $\EMPTY$, that is, the set of (encodings of) Turing machines $M$
	with $L_M = \emptyset$, is undecidable. For every $t \in \NN_{ > 0 }$, we
	let $L_{M,t} = \set{ w \in \Sigma^* \with M \text{ accepts } w \text{ in}
	\leq t \text{ steps} }$. Then $L_{M, t}$ is clearly decidable for all $t
	\in \NN_{ > 0 }$ and it holds that $L_M = \bigcup_{ t \in \NN_{ > 0 } }
	L_{M, t}$.

	Let $M$ be a Turing machine over $\Sigma$. We define a $\TI$-PDB $\D$
	over $\Facts[\tau, \Sigma^*]$. Let $\pair{ \?, \? }$ be the function from
	$\NN_{ > 0 } \times \NN_{ > 0 }$ to $\NN_{ > 0 }$ with
	\begin{equation*}
		\pair{i,j} \coloneqq \binom{ i + j - 1 }{ 2 } + i
		= \frac12 \big( i + j - 1 \big)\big( i + j - 2 \big) + i
	\end{equation*}
	for all $i, j \in \NN_{ > 0 }$. It is well known, that $\pair{ \?, \? }$ is
	a computable bijection (see, for example \cite[Example J.2]{Kozen1997}). The
	marginal probabilities of $\D$ are defined as follows for all $w \in 
	\Sigma^*$:
	\begin{align}
		P\big( R( w ) \big) &= 
		\begin{cases}
			2^{ -\decode{w} }	
			& \text{if } \decode{w} = \pair{ n,t } \text{ and } n \in L_{M,t}
			\text{ and}\\
			0			
			& \text{otherwise.}
		\end{cases}
		\label{eq:EMPTY-R}\\
		P\big( S( w ) \big) &=
		\begin{cases}
			2^{ -\decode{w} } 
			& \text{if } \decode{w} = \pair{ n,t }\text{ and } n \notin L_{M,t}
			\text{ and}\\
			0			
			& \text{otherwise.}
		\end{cases}
		\label{eq:EMPTY-S}
	\end{align}
	Note that with these definitions, it holds that $\sum_{ f \in \Facts[\tau,
	\Sigma^*] } P( f ) = \sum_{ k \in \NN_{ > 0 } } 2^{ -k } = 1 < \infty$.
	Thus, $\D = \D_{ \spn{ \Facts[\tau,\Sigma^*], P } }$ is a well-defined PDB.
	(In particular note that the tuples in $S$ are used to bring the sum of
	marginal probabilities to $1$.)

	We observe that
	\begin{equation}
		\Pr_{ D \sim \D }\big( R(w) \in D \big) = 0\quad
		\text{for all }w \in \Sigma^*
		\qquad\Leftrightarrow\qquad
		\smashoperator{\bigcup_{ t \in \NN_{ > 0 } }}\mkern6mu L_{M,t} = 
		L_M = \emptyset\text.
	\end{equation}
	Recalling that $\phi = \exists x\: R(x)$, the above equivalence entails 
	that
	\begin{equation}
		\Pr_{ D \sim \D }\big( D \models \phi \big) =0
		\qquad\Leftrightarrow\qquad
		L_M = \emptyset\text.
	\end{equation}

	We construct a Turing machine $\widetilde{M}_{\D}$ over the alphabet
	$\widetilde \Sigma = \Sigma \cup \tau \cup \set{ \mathord(, \mathord) }$
	that works as follows:
	\begin{itemize}
		\item On input $w \in \widetilde{\Sigma}^*$, $\widetilde{M}_{\D}$ checks
			whether $w \in \Facts[\tau, \Sigma^*]$. If not, it rejects.
		\item Otherwise, it checks which of the cases in
			\cref{eq:EMPTY-R,eq:EMPTY-S} applies and outputs $2^{-\decode{w}}$ or
			$0$ accordingly.
	\end{itemize}
	Then $(\widetilde{M}_{\D},1)$ represents $\D$. 

	Now suppose that $\A$ is a multiplicative approximation algorithm for 
	$\PQE(\phi)$. Then 
	\begin{equation*}
		\A\big( \widetilde{M}_{\D}, 1 \big) = 0
		\quad\Leftrightarrow\quad
		\Pr_{ D \sim \D }\big( D \models \phi \big) = 0
		\quad\Leftrightarrow\quad 
		L_M = \emptyset
		\text.
	\end{equation*}
	Thus, $\A$ can be used to decide $\EMPTY$.
\end{proof}

From \cref{pro:approx-multiplicative} we immediately obtain the following
corollary.

\begin{corollary}
	Let $\rho \geq 1$. Then there exists no multiplicative $\rho$-approximation
	algorithm for $\PQE$.
\end{corollary}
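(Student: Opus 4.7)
The plan is to derive this corollary by a direct reduction from the preceding \cref{pro:approx-multiplicative}. That proposition already exhibits a specific Boolean query $\phi_0 = \exists x \: R(x)$ over the schema $\tau = \set{R,S}$ (with $R, S$ unary) for which no multiplicative $\rho$-approximation algorithm for $\PQE(\phi_0)$ can exist, regardless of the value $\rho \geq 1$.

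I would argue as follows. Suppose, for the sake of contradiction, that $\A$ is a multiplicative $\rho$-approximation algorithm for the general problem $\PQE$. Recall that an instance of $\PQE$ consists of a pair $\big((M_{\D},\xi_{\D}),\psi\big)$ where $(M_{\D},\xi_{\D})$ represents a $\TI$-PDB $\D$ and $\psi \in \FO[\tau, \Sigma^*]$ is a Boolean query. Define an algorithm $\A_0$ for $\PQE(\phi_0)$ by setting $\A_0(M_{\D},\xi_{\D}) \coloneqq \A\big((M_{\D},\xi_{\D}), \phi_0\big)$, i.e., by hard-coding the fixed query $\phi_0$ as the second input component to $\A$. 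Since $\A$ satisfies
\begin{equation*}
  \rho^{-1}\Pr_{D\sim\D}\big(D\models\phi_0\big)
  \leq \A\big((M_{\D},\xi_{\D}),\phi_0\big)
  \leq \rho\Pr_{D\sim\D}\big(D\models\phi_0\big)
\end{equation*}
on every such input, the algorithm $\A_0$ is a multiplicative $\rho$-approximation algorithm for $\PQE(\phi_0)$, contradicting \cref{pro:approx-multiplicative}.

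There is essentially no obstacle here: the corollary is a trivial specialization, since $\PQE(\phi_0)$ is literally the restriction of $\PQE$ obtained by fixing the query component of the input to $\phi_0$. All the real work was done in establishing the reduction from $\EMPTY$ in the proof of \cref{pro:approx-multiplicative}. The only subtlety worth mentioning is that the construction in that proposition produces a valid instance of $\PQE$ — namely the $\TI$-PDB is well-defined (the fact probabilities sum to $1$) and the representation $(M_{\D}, 1)$ is computable — so that the hypothetical $\A$ is indeed applicable, and thus the contradiction goes through cleanly.
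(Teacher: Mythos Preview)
Your proposal is correct and takes essentially the same approach as the paper, which simply states that the corollary follows immediately from \cref{pro:approx-multiplicative}. You have spelled out the trivial specialization in more detail than the paper does, but the argument is the same.
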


\subsection{Tuple-Independent Bag PDBs}
\label{s:bag_instances}

In this last subsection on countable tuple-independent PDBs, we study
countable PDBs with a bag semantics.
Still maintaining assumptions \ref{ass:ctbl1} and \ref{ass:ctbl2}, we replace assumption
\ref{ass:ctbl3} by the following.
\begin{enumerate}[label=(\Roman*)]
\item[(III')]\label[assumption]{ass:ctbl3'} 
		If not explicitly stated otherwise, all PDBs that occur in this 
		section have sample space $\powerbag_{ \fin }\big(
      \Facts( \D ) \big)$ and are equipped with the powerset $\sigma$-algebra. 
\end{enumerate}
Independent PDBs with a bag semantics are interesting in their own
right, but the following discussion also serves as a preparation for our
treatment of uncountable PDBs in the next section. For uncountable
PDBs, it is easier to work with a bag semantics, simply because the
underlying probability theory of finite point processes usually
has been developed for point processes where points may be repeated.

\begin{definition}[Countable $\TI$-PDBs, Bag Version]\label{def:bag_ti_pdb}
	Let $\D$ in $\ctblPDB$. Then $\D$ is called \emph{tuple-independent} (or,
	a \emph{$\TI$-PDB}) if the numbers of occurrences of distinct facts are 
	mutually independent. That is, $\D \in \ctblPDB$ is a $\TI$-PDB if (and only
	if)
	\begin{equation*}
		\Pr_{ D \sim \D } 
			\event[\big]{ \mult_D (f_1) = n_1, \dots, \mult_D(f_k) = n_k }
		= \Pr_{ D \sim \D } \event[\big]{ \mult_D (f_1) = n_1 } 
			\cdot \dotsc \cdot 
			\Pr_{ D \sim \D } \event[\big]{ \mult_D (f_k) = n_k }
	\end{equation*}
	for all pairwise distinct $f_1, \dots, f_k \in \Facts( \D )$, all
	$n_1,\dots,n_k \in \NN$, and all $k \in \NN$. 

	The class of countable tuple-independent (bag) PDBs is denoted by 
	$\ctblTI$ and the subclass of finite tuple-independent (bag) PDBs by 
	$\finTI$.
\end{definition}

If we treat sets as bags with multiplicities in $\set{0, 1}$, then
\cref{def:bag_ti_pdb} is compatible with \cref{def:ti_pdb}. That is,
$\ctblsetTI \subseteq \ctblTI$ and, in particular, $\finsetTI \subseteq 
\finTI$. Similar to set PDBs, we can express tuple-independent bag PDBs as
superpositions of single-fact PDBs.

\begin{corollary}\label{cor:bagPDBsuperposition}\leavevmode
  \begin{enumerate}
  \item
    Let $\D$ be a countable $\TI$-PDB. For every $f \in \Facts$, let
	 $\D_f$ be the bag PDB with $\Facts(\D_f) = \set{ f }$, and with the 
	 probability measure defined by $\Pr_{ D \in \D_f }\big( \mult_D( f ) = k
	 \big) \coloneqq \Pr_{ D \in \D } \big( \mult_D( f ) = k\big)$.
    Then
    \[
      \D=\biguplus_{f\in\Facts(\D)}\D_f.
    \]
    Moreover, $\sum_{f\in\Facts(\D)}\Pr_{D\in\D}\big(\mult_D(f)>0\big)$ is finite.
  \item
    Let $\Facts\subseteq\Facts[\tau,\UU]$, and for every $f\in\Facts$,
    let $\D_f$ be a PDB with $\Facts(\D_f)=\{f\}$. Suppose that
	 $\sum_{f\in\Facts}\Pr_{D\sim\D_f}\big(\mult_D(f)>0\big)<\infty$.
    Then $\biguplus_{f\in\Facts}\D_f$ is a tuple-independent PDB.
	 \qedhere
  \end{enumerate}
\end{corollary}

\begin{example}
  A (countable) \emph{Poisson PDB} is a tuple-independent PDB
  $\D$ where the fact multiplicities are Poisson distributed,
  that is, for every $f\in\Facts(\D)$ there is a nonnegative
  $\lambda_f\in\RR$ such that 
  \[
    \Pr_{D\sim\D}\big(\mult_D(f)=k\big)=e^{-\lambda_f}\frac{{\lambda_f}^k}{k!}
    \text.
  \]
  Our definition of tuple-independence only requires that the
  multiplicities of distinct facts be independent; it makes no
  assumptions on the distributions of the individual fact
  multiplicities. 
  Intuitively, we may argue that Poisson PDBs also make an independence
  assumption for these distributions. Indeed, we might think of a
  tuple-independent bag
  PDB as being generated by sampling independent identical copies of
  each fact. Then if we have $n$ identical copies of each fact, each
  with a probability $p_{f}^{(n)}$, the
  fact multiplicities will be binomially distributed. As we let the
  number $n$ of copies go to infinity while keeping the expected value
  $\lambda_f=np_{f}^{(n)}$
  of the number of samples of each fact (and hence the expected size of the
  PDB) constant, these binomial distributions converge to a Poisson
  distribution with parameter $\lambda_f$ (see \cite[Section 6.5]{Feller1968}).

  As we will see in the next section, Poisson PDBs also play a special
  role in the theory of uncountable tuple-independent PDBs.
\end{example}

\begin{remark}
	A special special case of particular interest is given when the 
	total number of facts of a tuple-independent bag PDB is finite, but their
	individual multiplicities are unbounded. We call this a \emph{fact-finite
	PDB}. Fact-finite PDBs occupy a middle ground, as although there are only
	finitely many different facts, the sample space can be of infinite size.
	Answering queries in fact-finite (bag) PDBs is studied in \cite{Grohe+2022}.
\end{remark}

\section{Beyond Countable Domains}
\label{s:beyond}

In this section, we discuss a suitable notion of tuple-independence for PDBs
with uncountable sample spaces. This is an application of the theory of point
processes and of random measures, and builds on some more background from
measure theory and general topology that can be looked up in
\cref{app:prob,app:topo} whenever necessary.

We use the framework of \emph{standard PDBs} from \cite{GroheLindner2022}. From
this point of view, a probabilistic database is nothing but a \emph{finite
point process} \cite{DaleyVere-Jones2003}. Essentially, a finite point process
is a probability distribution over finite sets or bags of elements
(\enquote{points}) in some measurable space.  In the case of PDBs, these points
are the facts, and the number of times a particular fact occurs in an outcome
of the point process gives its multiplicity in the corresponding database
instance. 

In the following, let $\UU$ be an uncountable universe. Following
\cite{GroheLindner2022}, we require that the universe is given as a
\emph{standard Borel space} $(\UU,\UUU)$. That is, $\UU$ is a Polish
topological space with Borel $\sigma$-algebra $\UUU$. Given a database schema
$\tau$, this induces a natural $\sigma$-algebra $\FFF$ on the space
$\FF[\tau,\UU]$ of $(\tau,\UU)$-facts, which in turn generates a natural
$\sigma$-algebra for probabilistic databases through the events
\[
		\set[\big]{ D \in \DB[\tau,\UU] \with
			\mult_{ D }( \Facts_1 ) = n_1,
			\dotsc,
			\mult_{ D }( \Facts_k ) = n_k
		}
\]
for measurable sets $\Facts_1, \dots, \Facts_k$ of facts, and non-negative
integers $n_1,\dots,n_k$. We denote this $\sigma$-algebra by
$\DDD_{\mult}[\tau,\UU]$. For further information on the model and the detailed
constructions, we refer the reader to \cite{GroheLindner2022}. 

It is worth noting that, as soon as we move to uncountable spaces of facts, the
\emph{measurability} of constructions, functions and queries is a key issue
that needs to be addressed. The model from \cite{GroheLindner2022}, however,
has been shown to exhibit the desired properties for probabilistic databases,
such as the measurability of typical database queries. That is, the model
itself, and the semantics of queries are well-defined.

\begin{definition}[Standard PDBs, \cite{GroheLindner2022}]\label{def:stdPDB}
	Let $\tau$ be a database schema and let $(\UU, \UUU)$ be a standard Borel 
	universe. A \emph{standard PDB} over $\tau$ and $\UU$ is a probability space
	$\D = ( \DB, \DDD, P )$ with
	\begin{itemize}
		\item sample space $\DB = \DB[ \tau, \UU ] = \powerbag_{ \fin }\big(
			\Facts[ \tau, \UU ] \big)$ and
		\item $\sigma$-algebra $\DDD = \DDD_{ \mult } \big( \Facts[ \tau, \UU ]
			\big)$.\qedhere
	\end{itemize}
\end{definition}

In probability theoretic terms, \cref{def:stdPDB} is just the definition of a
finite point process over the adequately constructed measurable space of facts
$\FF[\tau,\UU]$. In turn, finite point processes are special \emph{random
measures} \cite{Kallenberg2017}, namely, random \emph{integer-valued} measures.

Recall that under the tuple-independence assumption, the presence (or
multiplicities) of pairwise distinct facts are independent. Given a continuum
of possible facts, this definition is too weak, as it fails to capture
independence between \enquote{regions} of the fact space. This was no issue in
countable PDBs as there, each of the countably many possible instances can be
expressed as an intersection of the marginal events from the definition of
tuple-independence. Here, however, these marginal events alone do not suffice
to describe the measurable structure of PDBs.

Independence has been investigated thoroughly in the general theory of
random measures, though. A random measure is called \emph{completely random}
\cite{Kingman1967}, if its values on any finite number of disjoint measurable
subsets of the space are independent. Translating this to the language of PDBs
gives a direct generalization of the tuple-independence assumption for PDBs
over continuous spaces.

\begin{definition}[Tuple-Independence for Standard PDBs, cf.\ \cite{Kingman1967}]
	\label{def:standard_ti_pdb}%
	A standard PDB $\D = ( \DB, \DDD_{\mult}, P )$ is called 
	\emph{tuple-independent} (or, a \emph{$\TI$-PDB}) if for all $k = 1, 2, 
	\dots$, all $n_1, \dots, n_k = 0, 1, 2, \dots$ and all mutually disjoint
	$\Facts_1, \dots, \Facts_k \in \FFF$ it holds that
	\begin{equation}
		\Pr_{ D \sim \D }
		\set[\big]{
			\mult_{ D }( \Facts_1 ) = n_1,
			\dotsc,
			\mult_{ D }( \Facts_k ) = n_k
		}
		= \prod_{ i = 1 }^{ k }
			\Pr_{ D \sim \D } \set[\big]{ \mult_{ D }( \Facts_i ) = n_i }
		\text.\qedhere
		\label{eq:standardTI}
	\end{equation}
\end{definition}

As singletons $\set{f}$ are measurable in $\Facts[ \tau, \UU ]$ by the
construction of \cite{GroheLindner2022}, the above is an extension of the
notion of tuple-independence we introduced earlier for countable PDBs.
Moreover, for countable PDBs, \cref{def:standard_ti_pdb} is equivalent to
\cref{def:bag_ti_pdb}, and for countable set PDBs, it is equivalent to
\cref{def:ti_pdb}.

\smallskip

The structure of completely random measures is well-understood. A classic 
result due to Kingman \cite{Kingman1967} presents a decomposition of completely
random measures into three well-structured parts by means of
superposition. Here, the term superposition refers to the sum of random 
measures (usually implying that they are independent). This is a direct
generalization of our notion of superposition for countable PDBs from
\cref{s:countable}.

For \emph{integer-valued} completely random measures, Kingmans decomposition
simplifies as follows.
\begin{fact}[see~{\cite[Theorem~2.4.VI]{DaleyVere-Jones2003}}]\label{fac:newdecomp}
	Every integer-valued completely random measure $\mu$ is a superposition of
	two random measures $\mu_1$ and $\mu_2$, where $\mu_1$ is completely random
	with countable support, and $\mu_2$ is a random measure defined by a 
	\emph{compound Poisson process} satisfying $\Pr( \mu_2( \set{x} ) > 0 ) =
	0$ for all $x$).\footnote{In particular, the diffuse deterministic
	component of Kingmans characterization \cite[§~8]{Kingman1967} vanishes
	when going from general random measures to integer-valued ones (cf. also
	\cite[Proposition~9.1.III(i-ii)]{DaleyVere-Jones2008}.} 
\end{fact}

To make clear the implications for PDBs with independence assumptions, let us
expand a bit more on the involved terminology. A \emph{Poisson process}
\cite{LastPenrose2017} on a standard Borel space $( \Omega, \mathfrak A )$ is a
point process (i.e., integer-valued random measure) that is parameterized
through a measure $\lambda$ on $( \Omega, \mathfrak A )$ such that
\begin{enumerate}
	\item the number of points in every $A \in \mathfrak A$ is
		Poisson-distributed with parameter $\lambda(A)$, and
	\item the numbers of points in disjoint measurable sets are independent (as
		in \cref{def:standard_ti_pdb}).
\end{enumerate}
A compound Poisson process can be thought of as a generalization, specifying
random locations of points by the means of a Poisson process, and for these
points, prescribing separate independent, positive multiplicity distributions.
For the precise statements and further details, we refer to the literature,
specifically \cite[Chapter 2]{DaleyVere-Jones2003} and \cite[Chapter 9 \&{}
10]{DaleyVere-Jones2008}.

While the characterization from \cref{fac:newdecomp} highlights a strong
connection between independence assumptions and the Poisson process, Poisson
processes themselves also yield a very simple model for uncountable PDBs.

\begin{definition}[Poisson-PDBs]\label{def:poisson-pdb}
  A PDB $\D = (\DB, \DDD, P)$ is called a \emph{Poisson-PDB} if there exists a
  finite measure $\lambda$ on $( \Facts[ \tau, \UU ], \FFF )$, called the 
  \emph{parameter (measure)} of $\D$, such that
  \begin{enumerate}
  \item $\D \in \standardTI$; and
  \item for all $\Facts \in \FFF$ we have
    \begin{equation}\label{eq:poisson-fact-set}
      \Pr_{ D \sim \D } \event[\big]{ \mult_D( \Facts ) = k }
      = \frac{ \lambda( \Facts )^k }{ k! } \cdot e^{ -\lambda( \Facts ) }
      \text.
    \qedhere
    \end{equation}
       \label{itm:poisson-pdb-ii}
  \end{enumerate}
\end{definition}

That is, in a Poisson-PDB, the random variable $\mult_{( \? )}( \Facts ) \from
\DB \to \NN_{ > 0 } \with D \mapsto \mult_D( \Facts )$ is Poisson-distributed
with parameter $\lambda( \Facts )$. We emphasize again that the parameter
$\lambda$ is not a single number, but rather a function (more precisely, a
measure) that maps every measurable set of facts to a non-negative real number.
We note that Poisson-PDBs over the fact space $( \Facts[\tau,\UU], \FFF )$
exist for every choice of parameter measure $\lambda$ (see
\cite[Theorem~3.6]{LastPenrose2017}).

Now from \cref{fac:newdecomp}, we obtain the following characterization of
$\standardTI$-PDBs.

\begin{theorem}
	Let $\D = ( \DB, \DDD_{\mult}, P )$ be a $\standardTI$-PDB over $\tau$ and
	$\UU$. Then $\D$ is a superposition of two PDBs $\D_1$ and $\D_2$ such that
	\begin{itemize}
		\item $\D_1$ is a countable $TI$-PDB, and 
		\item the deduplication of $\D_2$ is a Poisson-PDB with parameter
			$\lambda$ satisfying $\lambda(\set{f}) = 0$ for all $f \in
			\Facts[\tau,\UU]$.\qedhere
	\end{itemize}
\end{theorem}

In particular, if $\D$ is a $\standardTI$-PDB whose instances are almost surely
set instances (that is, if $\D$ is simple), and $\Pr\big( \mult(\set{f}) > 0
\big) = 0$, it follows that $\D$ is a Poisson-PDB
\cite[see~Theorems~6.9~and~6.12]{LastPenrose2017}.

In fact, it already follows that $\D$ is a Poisson-PDB, if $\D$ is any simple
standard PDB, and for which there exists a diffuse, finite measure $\lambda$ on
$(\Facts[\tau,\UU],\FFF)$ such that $P\big( \mult( \Facts ) = 0 \big) =
e^{-\lambda(\Facts)}$ for all measurable sets $\Facts$ of facts \cite[Theorem
6.10]{LastPenrose2017}. Then $\lambda$ is the parameter measure of the PDB.

Apart from their significance for the tuple-independence assumption,
Poisson-PDBs have some additional nice properties (see
\cite[Theorems~3.3~and~5.2]{LastPenrose2017}).  For example, every
superposition of two Poisson-PDBs, say, with parameters $\lambda_1$ and
$\lambda_2$ is a Poisson-PDB with parameter $\lambda_1 + \lambda_2$. Moreover,
the restriction of a Poisson-PDB to a smaller, measurable set of facts is again
a Poisson-PDB.

While the above unravels the notion of tuple-independence for PDBs over
uncountable fact spaces, it may not be clear where to go from here. The
probabilistic tools we have touched are used in a plethora of models, for
example in ecology, epidemiology and astronomy \cite{Baddeley2007}, and our
point of view suggests that such models can be treated as probabilistic
databases. A particular application we see is thus the extension of existing
data by such a model (in the guise of an uncountable PDB), which could pave the
way for a sophisticated variant of open-world query evaluation. Therefore, a
possible future research direction is the combination of techniques from point
process theory with query processing in PDBs.

\section{Concluding Remarks}
\label{s:conclusion}

We introduce a formal framework of infinite probabilistic databases. Within the
framework, we study tuple-independence and related independence assumptions.  
This adds to the theoretical foundation of existing PDB systems that support
infinite domains in their data model and opens various directions for future
research.

We show that countable tuple-independent PDBs exist exactly for convergent
series of marginal fact probabilities. From a more abstract view, PDBs with
independent components can be explained using the notion of superpositions, a
known concept from point process theory. Towards this end, we investigate some
general properties of superpositions of PDBs, most notably, how they preserve
independence and when they indeed yield valid PDBs as a result. Following this
approach, it turns out that (countable) tuple-independent PDBs can be
decomposed into arbitrary smaller PDBs. The modularity of the superposition
approach can also be used to reason about block-independent disjoint PDBs and
possibly more general classes that are obtained by closing a subclass of PDBs
under superpositions. Superpositions also enable us to define tuple-independent
PDBs with a bag semantics in a natural way, leading us to the notion of 
Poisson-PDBs.

The vast increase in expressive power by allowing infinite probability spaces
comes at a cost, though. We show that in this setting, and contrary to the
finite situation, there are (countable) PDBs that can not be expressed as a
first order view of a tuple-independent PDB. Yet, we argue that a simple
tuple-independent model of completions in the sense of \cite{Ceylan+2016} can
be used to obtain more meaningful query results in (finite) PDBs. While we
can't even hope for multiplicative approximation guarantees in infinite
tuple-independent PDBs, query evaluation in such PDBs can be additively
approximated using the well-established methods for probabilistic query
evaluation in finite PDBs. 

Key problems for future research in infinite PDBs are accessible (finite)
representations of infinite PDBs and query evaluation algorithms.
Representations of countable PDBs as views over tuple-independent PDBs have
been studied in \cite{Carmeli+2021}. Our results about (approximate) query
evaluation in infinite PDBs are only a first step, and an in-depth
investigation is still open. There are some natural follow-up questions
regarding our results in \cref{ss:approx_eval}, for example, what could be said
about the query evaluation problem for restricted classes of PDBs. One aspect
we deem particularly interesting is discussing the query evaluation problem in
the bag semantics setup for fact-finite Poisson-PDBs. For putting query
evaluation in infinite PDBs into practice (beyond the state of affairs that
has been pointed out in the related work section), a promising approach seems to
try to integrate traditional database techniques with probabilistic inference
techniques for infinite domains that are used in AI, including, for example the
relational languages and models BLOG \cite{Milch+2005,Wu+2018}, ProbLog
\cite{DeRaedt+2007,Gutmann+2011} and Markov Logic
\cite{RichardsonDomingos2006,SinglaDomingos2007}. In general, the underlying
inference problems are of high complexity, so achieving tractability is
challenging. 
\subsection*{Acknowledgments}
We wish to express our gratitude to Christoph Standke and Anton Pirogov for 
their interest, their many suggestions and their opinion that helped shape the 
presentation of this work.

This work is funded by \href{https://www.dfg.de}{\emph{Deutsche
Forschungsgemeinschaft} (DFG, German Research Foundation)} under grants
\href{https://gepris.dfg.de/gepris/projekt/412400621}{GR 1492/16-1} and
\href{https://gepris.dfg.de/gepris/projekt/282652900}{GRK 2236 (UnRAVeL)}.

\appendix

\section{Mathematical Background}

\subsection{Series and Products}\label{app:infsumprod}

We use \cite{Knopp1996} as our standard reference regarding the theory of
(infinite) sums and products. For the readers convenience, this section recaps
basic definitions and well-known results about series and infinite products
that are used throughout the article.\par

Let $\big( a_i \big)_{ i \in \NN }$ be a sequence of real numbers $a_i \in 
\RR$. The formal expression $\sum_{ i = 0 }^{ \infty } a_i$ is called a 
\emph{series}. If the limit $\lim_{ n \to \infty } \sum_{ i = 0 }^{ n } a_i$
exists, and is equal to $a \in \RR$, then $\sum_{ i = 0 }^{ \infty }
a_i$ is called \emph{convergent}, $a$ is called its \emph{value} and we write
$\sum_{ i = 0 }^{ \infty } a_i = a$. If the limit is exists and is $\infty$,
then $\sum_{ i = 0 }^{ \infty } a_i$ is said to \emph{diverge to $\infty$} and
we write $\sum_{ i = 0 }^{ \infty } a_i = \infty$.  Every series we deal with
in this article has non-negative terms only.  Note that any such series is
either convergent, or diverges to $\infty$.  A series $\sum_{ i = 0 }^{ \infty
} a_i$ is called \emph{absolutely convergent} if $\sum_{ i = 0 }^{ \infty }
\abs{ a_i }$ converges. Obviously, every convergent series with only
non-negative terms is absolutely convergent. 

\begin{fact}[{\cite[ch. IV, Theorem 1]{Knopp1996}}] 
  If\/ $\sum_{ i = 0 }^{ \infty } a_i$ is absolutely convergent, then
  $\sum_{ i = 0 }^{ \infty } a_i = \sum_{ i = 0 }^{ \infty } \tilde a_i$ for
  every permutation $( \tilde a_i )_{ i \in \NN }$ of $( a_i )_{ i \in \NN }$.
\end{fact}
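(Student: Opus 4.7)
The plan is to run the standard two-$\epsilon$ argument: pick a finite initial segment of the original series that (a) approximates the sum to within $\epsilon$ and (b) has a negligible tail in the absolute series, then show that any sufficiently long partial sum of the permuted series differs from that finite initial sum only by tail terms.

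Concretely, let $S \coloneqq \sum_{i=0}^\infty a_i$ and $T \coloneqq \sum_{i=0}^\infty \abs{a_i} < \infty$. For a given $\epsilon > 0$, the convergence of $S$ and the absolute convergence (together with the Cauchy criterion applied to the series of absolute values) let me choose an $N \in \NN$ large enough that simultaneously
\begin{equation*}
	\lowerbiggl\lvert S - \sum_{i=0}^{N} a_i \lowerbiggr\rvert < \epsilon
	\qquad \text{and} \qquad
	\sum_{i=N+1}^{\infty} \abs{a_i} < \epsilon.
\end{equation*}

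Next, denote by $\pi \from \NN \to \NN$ the bijection such that $\tilde a_j = a_{\pi(j)}$. Since $\pi$ is surjective, I can pick $M \in \NN$ so large that $\set{0, 1, \dots, N} \subseteq \set{\pi(0), \pi(1), \dots, \pi(M)}$. Then for any $M' \geq M$, the finite difference $\sum_{j=0}^{M'} \tilde a_j - \sum_{i=0}^{N} a_i$ is a finite sum of terms of the form $a_i$ with $i > N$ (the originally present indices $0,\dots,N$ cancel, and every additional $\tilde a_j$ has $\pi(j) > N$). The triangle inequality and the tail bound give
\begin{equation*}
	\lowerbiggl\lvert \sum_{j=0}^{M'} \tilde a_j - \sum_{i=0}^{N} a_i \lowerbiggr\rvert
	\leq \sum_{i=N+1}^{\infty} \abs{a_i} < \epsilon,
\end{equation*}
and combining with the first bound yields $\bigl\lvert \sum_{j=0}^{M'} \tilde a_j - S \bigr\rvert < 2\epsilon$ for every $M' \geq M$. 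As $\epsilon$ was arbitrary, $\sum_{j=0}^\infty \tilde a_j = S$.

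There is no real obstacle here; the only delicate point is to make sure that after subtracting the first $N+1$ terms of the original series from a sufficiently long partial sum of the permuted series, what remains really is a subsum of the absolute tail, so that absolute convergence (rather than mere convergence) can be invoked. This is exactly where the hypothesis of absolute convergence is essential and where the argument would fail for a merely conditionally convergent series.
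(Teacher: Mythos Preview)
Your proof is correct and is the standard textbook argument for unconditional convergence of absolutely convergent series. Note, however, that the paper does not actually give its own proof of this statement: it is recorded as a \emph{Fact} with a citation to Knopp's textbook and no further justification, since the paper only needs the result to legitimize summing over unordered countable index sets. So there is nothing in the paper to compare your argument against beyond the reference itself; your write-up simply supplies the elementary proof that the paper chose to omit.
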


Since the sequences we consider in this article will be absolutely convergent
anyways, we will not have to worry about the order of summation. In particular,
we sum over unordered countable index sets.

Now let $\big( a_i \big)_{ i \in \NN }$ be a sequence of real numbers $a_i \in
\RR$. The formal expression $\prod_{i = 0}^{\infty} a_i$ is called an 
\emph{infinite product}. In this article, we will only deal with the case where
$a_i \in [0,1]$ for all $i \in \NN$. In this situation, the limit $\lim_{ n
\to \infty } \prod_{ i = 0 }^{ n } a_i$ always exists and is in the interval
$[0, 1]$, because the corresponding sequence of partial products monotonically
decreasing and bounded by $0$ from below. If the limit is $a \in [0,1]$, we
write $\prod_{ i = 0 }^{ \infty } a_i = a$. When discussing the relationship to
series, it is more convenient to write infinite products in the shape $\prod_{
i = 0 }^{ \infty } ( 1 - a_i )$.  It is a basic fact (see \cite[ch.~VII,
Theorems 7 and 11]{Knopp1996}) that for such infinite products, if $\sum_{ i =
0 }^{ \infty } a_i < \infty$, then $\prod_{ i = 0 }^{ \infty }( 1 - a_i ) =
\prod_{ i = 0 }^{ \infty } ( 1 - \tilde{a}_i )$ for every permutation $( \tilde
a_i )_{ i \in \NN }$ of $( a_i)_{ i \in \NN }$. This justifies writing infinite
products over unordered index sets. 

Moreover, the following connection between series and products holds in terms
of convergence. 

\begin{fact}[see {\cite[Theorems 125.1 and
	126.4]{Knopp1996}}]\label{fac:prodsumconv}
	Let $a_i \in [0,1]$ with $a_i \neq 1$ for all $i \in \NN$. Then
	$\sum_{i =0}^{\infty} a_i < \infty$ if and only if $\prod_{ i = 0 }^{\infty}
	(1-a_i) > 0$.
\end{fact}

Furthermore, for sequences $(a_i)_{ i \in
\NN }$ with $a_i \in [0,1]$, it holds that
\begin{equation}\label{eq:prodsum}
	\prod_{ i = 0 }^{ \infty } ( 1 - a_i )
	\geq
	1 - \sum_{ i = 0 }^{ \infty } a_i.
\end{equation}
This is a variant of the \emph{Weierstrass inequalities} \cite[p. 104 et 
seq.]{Bromwich1926}, and can easily be shown by an induction that considers the
partial products $\prod_{ i = 0 }^{ n } ( 1 - a_i )$.

\subsection{Probability Theory}\label{app:prob}

In this subsection we cover most of the relevant background from probability 
theory including some basic concepts from measure theory. We follow the 
textbooks \cite{Klenke2014,Kallenberg2002}, which the reader may consult as
needed for further information.

\subsubsection{Measurable spaces}

Let $\Omega \neq \emptyset$ be some set. A family $\AAA$ of subsets of $\Omega$
is called a \emph{$\sigma$-algebra} on $\Omega$ if
\begin{enumerate}
	\item $\Omega \in \AAA$,
	\item for all $\AAAA \in \AAA$ it holds that $\AAAA^{\c} = \Omega \setminus
		\AAAA \in \AAA$ (\emph{closure under complement}),
	\item for all $\AAAA_1, \AAAA_2, \dots \in \AAA$ it holds that 
		$\bigcup_{ i = 1 }^{ \infty } \AAAA_i \in \AAA$ (\emph{closure under 
		countable union}).
\end{enumerate}
It follows from the definition that if $\AAA$ is a $\sigma$-algebra on 
$\Omega$, it is also closed under countable intersection. That is, for all
$\AAAA_1, \AAAA_2, \dots \in \AAA$ it holds that $\bigcap_{ i = 1 }^{ \infty }
\AAAA_i \in \AAA$.

A pair $(\Omega, \AAA)$, where $\Omega \neq \emptyset$ and $\AAA$ is a 
$\sigma$-algebra on $\Omega$, is called a \emph{measurable space}. The elements
of $\AAA$ are called ($\AAA$-)\emph{measurable} sets. For every non-empty set 
$\Omega$, both $\powerset(\Omega)$ and $\set{ \emptyset, \Omega }$ are 
$\sigma$-algebras on $\Omega$.

Let $(\Omega_1, \AAA_1)$ and $(\Omega_2, \AAA_2)$ be measurable spaces. A
function $f \from \Omega_1 \to \Omega_2$ is called \emph{$(\AAA_1,
\AAA_2)$-measurable} if for all $\AAAA \in \AAA_2$ it holds that $f^{ - 1 }(
\AAAA ) \in \AAA_1$ where $f^{ -1 }( \AAAA ) = \set{ \omega \in \Omega \with
f(\omega) \in \AAAA }$. If $\AAA_1$ and $\AAA_2$ are clear from context, the
function is just called \emph{measurable}. 

Let $\GGG$ be a family of subsets of $\Omega \neq \emptyset$. Then 
$\sigma(\GGG)$ denotes the coarsest $\sigma$-algebra (that is, the smallest
one with respect to set inclusion) containing $\GGG$. Then $\sigma( \GGG )$ is
indeed unique and it holds that
\begin{equation*}
	\sigma( \GGG ) =
	\mkern10mu
	\smashoperator{\bigcap_{ \substack{ 
			\AAA \subseteq \powerset( \Omega )\text{ s.\,t.}\\
			\AAA \supseteq \GGG\text{ and}\\
			\AAA \text{ $\sigma$-algebra}
		}
	}}
	\mkern8mu
	\AAA
	\text.
\end{equation*}
We call $\sigma(\GGG)$ the $\sigma$-algebra \emph{generated by} $\GGG$.

\subsubsection{Measures}

Let $(\Omega, \AAA)$ be a measurable space. A function $\mu \from \AAA \to 
[0, \infty]$ is called a \emph{measure} on $(\Omega, \AAA)$  (or, on $\Omega$
if $\AAA$ is clear from context) if
\begin{enumerate}
	\item $\mu( \emptyset ) = 0$ and
	\item for all pairwise disjoint $\AAAA_1, \AAAA_2, \dotsc \in \AAA$ it holds
		that $\mu \big( \bigcup_{ i = 1 }^{ \infty } \AAAA_i \big) = 
		\sum_{ i = 1 }^{ \infty } \mu( \AAAA_i )$ (\emph{$\sigma$-additivity}).
		\qedhere
\end{enumerate}

If $\mu$ is a measure on a measurable space $(\Omega, \AAA)$, then $( \Omega,
\AAA, \mu )$ is called a \emph{measure space}. We call $\mu$ \emph{finite} if
$\mu( \Omega ) < \infty$ and a \emph{probability measure} if $\mu( \Omega ) = 
1$. If $\mu$ is a probability measure, then $( \Omega, \AAA, \mu )$ is called
a \emph{probability space}. In a probability space, measurable sets are also
called \emph{events} and for $\AAAA \in \AAA$, $\mu( \AAAA )$ is called the
\emph{probability} of $\AAAA$. We denote probability measures by $P$ instead of
$\mu$. 

\begin{fact}[see {\cite[Theorem 1.36]{Klenke2014}}]\label{fac:semicontinuity}
	Let $(\Omega, \AAA, \mu)$ be a measure space.
	\begin{enumerate}
		\item For all $\AAAA_1, \AAAA_2, \dotsc \in \AAA$ with 
			$\AAAA_1 \subseteq \AAAA_2 \subseteq \dotsb$ it holds that\/
			$
				\mu\big( \bigcup_{ i = 1 }^{ \infty } \AAAA_i \big) = 
				\lim_{ i \to \infty } \mu( \AAAA_i )
			$.
		\item For all $\AAAA_1, \AAAA_2, \dotsc \in \AAA$ with
			$\AAAA_1 \supseteq \AAAA_2 \supseteq \dotsb$ and with 
			$\mu( \AAAA_i ) = \infty$ and for at most finitely many $i \in \NN_{
			>0 }$ it holds that\/ 
			$
				\mu\big( \bigcap_{ i = 1 }^{ \infty } \AAAA_i \big) = 
				\lim_{ i \to \infty } \mu( \AAAA_i )
			$.\qedhere
	\end{enumerate}
\end{fact}

If $( \Omega_1, \AAA_1, \mu )$ is a measure space, $( \Omega_2, \AAA_2 )$ a
measurable space, then every measurable function $f \from \Omega_1 \to 
\Omega_2$ induces a measure $\mu_2$ on $( \Omega_2, \AAA_2 )$ via
\begin{equation*}
	\mu_2( \AAAA ) 
	= \mu_1\big( \set{ \omega \in \Omega_1 \with f( \omega ) \in \AAAA } \big)
	\text.
\end{equation*}
Then $\mu_2$ is called the \emph{image} or \emph{push-forward measure} of
$\mu_1$ under $f$ and $( \Omega_2, \AAA_2, \mu_2 )$ is called \emph{image
measure space}. If $\mu_1$ is a probability measure, so is $\mu_2$. In this
situation, $f$ is called a \emph{random variable}.

\subsubsection{Stochastic independence}

Let $( \Omega, \AAA, P )$ be a probability space and let $I$ be some non-empty
index set. A family of events $\big( \AAAA_i \big)_{ i \in I }$ with $\AAAA_i
\in \AAA$ for all $i \in I$ is called \emph{(stochastically) independent} if
for all $k = 1, 2, \dots$ and all pairwise different $i_1, \dots, i_k \in I$ 
it holds that
\begin{equation}\label{eq:stochastic_independence}
	P \bigg( \bigcap_{ j = 1 }^{ k } \AAAA_{i_j} \bigg)
	= \prod_{ j = 1 }^{ k } P \big( \AAAA_{i_j} \big)
	\text.
\end{equation}
The family $\big( \AAAA_i \big)_{ i \in I }$ is called \emph{pairwise
independent} if \eqref{eq:stochastic_independence} holds for $k = 2$ and $i_1,
i_2 \in I$ with $i_1 \neq i_2$.

\begin{fact}[see {\protect\cite[Theorem 2.5]{Klenke2014}}]%
	\label{fac:exchangeindependent}%
	Let $\big( \AAAA_i \big)_{ i \in I }$ and $\big( \bmskew\widetilde{ \AAAA 
	}_i \big)_{ i \in I }$ be families of events in a probability space 
	$(\Omega, \AAA, P)$ where $\bmskew\widetilde{ \AAAA }_i \in \set{ \AAAA_i,
	{\AAAA_i}^{\c} }$ for all $i \in I$. Then\/ $\big( \AAAA_i \big)_{ i \in I 
	}$ is independent if and only if $\big( \bmskew\widetilde{ \AAAA }_i 
\big)_{ i \in I }$
	is independent.
\end{fact}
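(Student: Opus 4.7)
The plan is to reduce the statement to a single elementary swap and then iterate. Since independence of a family of events is defined via finite subfamilies, it suffices to verify the product condition \eqref{eq:stochastic_independence} for arbitrary finite subsets of the index set. Moreover, because the relation ``$\tilde{\mathcal{A}}_i \in \{\mathcal{A}_i, \mathcal{A}_i^{\c}\}$'' is symmetric (using $(\mathcal{A}_i^{\c})^{\c} = \mathcal{A}_i$), it is enough to prove only one direction of the equivalence.

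First I would establish the following single-swap lemma: if $(\mathcal{A}_i)_{i \in I}$ is independent and $k \in I$, then the family obtained by replacing $\mathcal{A}_k$ by $\mathcal{A}_k^{\c}$ and leaving all other entries unchanged is again independent. To see this, fix pairwise different indices $i_1,\dots,i_n \in I$ and set $J = \{i_1,\dots,i_n\}$. If $k \notin J$, the required product formula is inherited from the independence of $(\mathcal{A}_i)_{i \in I}$ (using that no complementation has occurred on indices in $J$). If $k \in J$, write $J = \{k\} \cup J'$ with $k \notin J'$ and compute
\begin{align*}
    P\Bigl(\mathcal{A}_k^{\c} \cap \bigcap_{j \in J'} \mathcal{A}_j\Bigr)
    &= P\Bigl(\bigcap_{j \in J'} \mathcal{A}_j\Bigr) - P\Bigl(\mathcal{A}_k \cap \bigcap_{j \in J'} \mathcal{A}_j\Bigr) \\
    &= \prod_{j \in J'} P(\mathcal{A}_j) - P(\mathcal{A}_k) \prod_{j \in J'} P(\mathcal{A}_j)
    = \bigl(1 - P(\mathcal{A}_k)\bigr) \prod_{j \in J'} P(\mathcal{A}_j)
    = P(\mathcal{A}_k^{\c}) \prod_{j \in J'} P(\mathcal{A}_j),
\end{align*}
where the second equality uses the independence of $(\mathcal{A}_i)_{i \in I}$ applied to the subfamily indexed by $J'$ and by $J$, respectively. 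This yields exactly the product formula needed for the swapped family at the indices $J$, so the swapped family satisfies \eqref{eq:stochastic_independence}.

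Next I would iterate this lemma. Given two families $(\mathcal{A}_i)_{i \in I}$ and $(\tilde{\mathcal{A}}_i)_{i \in I}$ as in the statement, and given any finite $J \subseteq I$, only finitely many indices $k \in J$ can satisfy $\tilde{\mathcal{A}}_k \neq \mathcal{A}_k$. Applying the single-swap lemma once for each such index (on the whole family, say) produces, in finitely many steps, the family $(\tilde{\mathcal{A}}_i)_{i \in I}$, and preserves independence at every step; in particular \eqref{eq:stochastic_independence} holds for the finite index set $J$. Since $J$ was arbitrary, $(\tilde{\mathcal{A}}_i)_{i \in I}$ is independent. The converse direction is immediate by the symmetry remark above.

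The only genuine obstacle is bookkeeping: one must be careful to separate the complemented index $k$ from the remaining indices in $J$ when invoking the original independence, and to observe that restricting attention to finite subfamilies legitimates the iterative swap argument (independence is not being verified simultaneously for infinitely many complementations, but only for finitely many within any given finite $J$). Once that is clear, the proof reduces to the one-line computation displayed above together with a trivial induction on the size of $\{k \in J : \tilde{\mathcal{A}}_k \neq \mathcal{A}_k\}$.
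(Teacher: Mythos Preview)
Your argument is correct: the single-swap computation is the standard one, and your observation that independence is tested on finite subfamilies is exactly what makes the iteration go through even when infinitely many indices are complemented. The paper does not actually prove this statement; it is recorded as a background fact with a citation to Klenke's textbook, so there is no ``paper's proof'' to compare against. What you have written is essentially the textbook argument.
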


\subsubsection{Product measure spaces}

Let $\big((\Omega_i, \AAA_i)\big)_{ i \in I }$ be a family of measurable spaces
for some non-empty index set $I$. The \emph{product $\sigma$-algebra} 
$\bigotimes_{ i \in I } \AAA_i$ of the $\AAA_i$, $i \in I$ is the coarsest 
$\sigma$-algebra on $\Omega \coloneqq \prod_{ i \in I } \Omega_i$ making all
the \emph{canonical projections} $\pi_i \from \Omega \to \Omega_i \with 
( \omega_i )_{ i \in I } \mapsto \omega_i$ measurable. That is,
\begin{equation}\label{eq:product_sigma_algebra}
	\bigotimes_{ i \in I } \AAA_i
	\coloneqq \sigma\big( \set{ \pi_i^{-1}( \AAAA_i ) \with \AAAA_i \in \AAA_i }
	\big)
	\text.
\end{equation}
If $I = \set{ 1, \dots, n }$, we write $\bigotimes_{ i = 1 }^{ n } \AAA_i$ or
$\AAA_1 \otimes \dots \otimes \AAA_n$, and if $I = \NN_{ > 0 }$, we write 
$\bigotimes_{ i = 1 }^{ \infty } \AAA_i$ instead of $\bigotimes_{ i \in I }
\AAA_i$. With \eqref{eq:product_sigma_algebra}, $\bigotimes_{ i \in I } (
\Omega_i, \AAA_i ) \coloneqq (\Omega, \bigotimes_{ i \in I } \AAA_i )$ is a
measurable space, and is called the \emph{product measurable space} of the $(
\Omega_i, \AAA_i)$, $i \in I$.

\begin{fact}[{see \cite[Corollary 14.33]{Klenke2014}}]%
	\label{fac:product_measure}%
	Let $( \Omega_i, \AAA_i, P_i )$ be probability spaces for $i = 1, 2, \dots$.
	Moreover, let $\Omega \coloneqq \prod_{ i = 1 }^{ \infty } \Omega_i$ and
	$\AAA \coloneqq \bigotimes_{ i = 1 }^{ \infty } \AAA_i$.
	Then there exists a unique probability measure $P$ on $\big( 
	\prod_{ i = 1 }^{ \infty } \Omega_i, \bigotimes_{ i = 1 }^{ \infty } \AAA_i
	\big)$ with the property that
	\begin{equation*}
		P\bigg( \bigcap_{ j = 1 }^{ k } 
			\pi_{ i_j }^{ -1 } \big( \AAAA_{ i_j } \big) \bigg)
		= \prod_{ j = 1 }^{ k } P_{ i_j }\big( \AAAA_{ i_j } \big)
	\end{equation*}
	for all $k = 1, 2, \dots$, all pairwise distinct $i_1, \dots, i_k \in 
	\NN_{ > 0 }$ and all $\AAAA_{i_j} \in \AAA_{i_j}$.
\end{fact}

The probability space $( \Omega, \AAA, P )$ from \cref{fac:product_measure} is
called the \emph{product probability space} of the spaces $( \Omega_i, \AAA_i,
P_i )$ and $P$ is called the associated \emph{product probability measure} of
the $P_i$, $i = 1, 2, \dots$.

\begin{fact}\label{fac:product_ind}
	Let $( \Omega, \AAA, P )$ be the product probability space from 
	\cref{fac:product_measure} and let $\AAAA_i \in \AAA_i$ for all $i = 1, 2, 
	\dots$. Then $\big( \pi_i^{ -1 } ( \AAAA_i ) \big)_{ i \in \NN_{ > 0 } }$ is
	independent in $( \Omega, \AAA, P )$.
\end{fact}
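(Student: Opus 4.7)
The plan is that this is an almost immediate consequence of the defining property of the product probability measure recalled in the preceding fact. The only thing to check is that the abstract product condition, which states $P\bigl(\bigcap_{j=1}^k \pi_{i_j}^{-1}(\AAAA_{i_j})\bigr) = \prod_{j=1}^k P_{i_j}(\AAAA_{i_j})$, can be rewritten purely in terms of $P$, at which point it coincides with the definition of stochastic independence.

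First I would fix $k \in \NN_{>0}$ and pairwise distinct indices $i_1,\dots,i_k \in \NN_{>0}$. Applying the defining product identity from the previous fact to a single index, with $\AAAA_{i_j}$ in slot $i_j$ and no other constraints, yields
\begin{equation*}
    P\bigl(\pi_{i_j}^{-1}(\AAAA_{i_j})\bigr) = P_{i_j}(\AAAA_{i_j})
\end{equation*}
for each $j = 1,\dots,k$. Substituting these identities into the product side of the defining relation gives
\begin{equation*}
    P\bigg(\bigcap_{j=1}^{k} \pi_{i_j}^{-1}(\AAAA_{i_j})\bigg)
    = \prod_{j=1}^{k} P_{i_j}(\AAAA_{i_j})
    = \prod_{j=1}^{k} P\bigl(\pi_{i_j}^{-1}(\AAAA_{i_j})\bigr),
\end{equation*}
which is precisely the independence condition \eqref{eq:stochastic_independence} for the family $\bigl(\pi_i^{-1}(\AAAA_i)\bigr)_{i \in \NN_{>0}}$, since $k$ and the indices were arbitrary.

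There is no real obstacle here; the only thing to be mindful of is that one must invoke the product identity once with a single factor (to identify the marginal of a cylinder event with the corresponding factor probability) and once with $k$ factors (to compute the intersection), and then combine the two. No measure-extension or monotone class argument is needed, because the defining identity is stated exactly for the cylinder sets that appear in the independence condition.
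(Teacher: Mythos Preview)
Your argument is correct: applying the defining product identity from \cref{fac:product_measure} once with a single index to identify $P\bigl(\pi_{i_j}^{-1}(\AAAA_{i_j})\bigr) = P_{i_j}(\AAAA_{i_j})$, and once with $k$ indices to compute the intersection, immediately yields \eqref{eq:stochastic_independence}. The paper does not supply a proof for this fact---it is stated as a standard preliminary result following directly from \cref{fac:product_measure}---so your writeup simply makes explicit the one-line verification the paper leaves to the reader.
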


\subsection{Some Background from Topology}\label{app:topo}
This subsection contains some notions from general and metric topology that are
relevant for the definitions in \cref{s:beyond}. For a general reference, 
consult \cite{Willard2004}. For standard Borel spaces and their properties see
\cite[Section 424]{Fremlin2013}.

A \emph{topological space} is a pair $(X, \TTT)$ where $X$ is a set and $\TTT$
is a family of subsets of $X$ such that
\begin{itemize}
	\item $\emptyset, X \in \TTT$,
	\item $\TTT$ is closed under finite intersections, i.\,e. if $X_i \in \TTT$
		for all $i \in I$ where $I$ is some \emph{finite} index set, then 
		$\bigcap_{i \in I} X_i \in \TTT$, and
	\item $\TTT$ is closed under arbitrary unions, i.\,e. if $X_i \in \TTT$ for
		all $i \in I$ where $I$ is an \emph{arbitrary} index set, then 
		$\bigcup_{i\in I} X_i \in \TTT$.
\end{itemize}
The family $\TTT$ is then called a \emph{topology} on $X$ and its elements
are called the \emph{open sets} of $(X,\TTT)$. Complements of open sets are
called \emph{closed}. If $\TTT$ is clear from context, we just call $X$ a 
topological space, referring to $(X, \TTT)$.

A \emph{metric space} is a pair $(X,d)$ where $X$ is a set and $d \from X
\times X \to \RR_{\geq 0}$ is a \emph{metric} on $X$. That is,
\begin{itemize}
	\item $d(x,y) \geq 0$ with $d(x,y) = 0$ if and only if $x = y$,
	\item $d(x,y) = d(y,x)$ and
	\item $d(x,y) \leq d(x,z) + d(z,y)$
\end{itemize}
for all $x, y, z \in  X$.\par

Let $(X,d)$ be a metric space, $x \in X$ and $\epsilon > 0$. Then
$B_{\epsilon}(x) \coloneqq \set{ y \in X \with d(x,y) < \epsilon }$ 
denotes the \emph{open ball of radius $\epsilon$ around $x$}. The \emph{metric
topology} $\TTT(d)$ on $X$ with respect to $d$ is the topology whose open sets
are exactly the sets $Y \subseteq X$ with the property that for all $y \in Y$
there exists $\epsilon > 0$ such that $B_{\epsilon}(y) \subseteq Y$. This is
indeed a topology, and every open set is a union of open balls as defined
before.

A \emph{Cauchy sequence} in a metric space $(X,d)$ is a sequence of elements 
$x_1, x_2, \dotsc \in X$ such that $\lim_{n \to \infty} d(x_n,x_{n+1}) = 0$. A 
sequence $x_1, x_2, \dotsc \in X$ \emph{converges} in $(X,d)$, if there exists 
$x \in X$ such that for all $\epsilon > 0$ there exists $n \in \NN$ such that 
$d(x, x_n) < \epsilon$. A metric $d$ on $X$ is called \emph{complete}, if all
Cauchy sequences in $(X, d)$ converge.

A topological space $(X, \TTT)$ is called \emph{metrizable}, if there exists a
metric $d$ on $X$ such that $\TTT = \TTT(d)$. The space is called 
\emph{completely} metrizable, if $d$ can be chosen to be complete.

A topological space $(X, \TTT)$ is called \emph{separable}, if it contains a
countable dense subset, that is, a countable set $D \in \TTT$ such that 
$\overline{D} = \bigcap \set{ Y \subseteq X \with Y \text{ closed and } D 
\subseteq Y } = X$. (The set $\overline{D}$ is called the \emph{closure} of 
$D$.)

A topological space is \emph{Polish}, if it is completely metrizable and 
separable. A measurable space $(\Omega, \AAA)$ is \emph{standard Borel} if
$\AAA$ is generated by the open sets of a Polish topology on $\Omega$.

\bibliographystyle{plainurl}

\begin{thebibliography}{10}

\bibitem{Abiteboul+2011}
Serge Abiteboul, T.-H.~Hubert Chan, Evgeny Kharlamov, Werner Nutt, and Pierre
  Senellart.
\newblock {Capturing Continuous Data and Answering Aggregate Queries in
  Probabilistic XML}.
\newblock {\em ACM Transactions on Database Systems}, 36(4):1--45, 2011.
\newblock \href {https://doi.org/10.1145/2043652.2043658}
  {\path{doi:10.1145/2043652.2043658}}.

\bibitem{Abiteboul+1995}
Serge Abiteboul, Richard Hull, and Victor Vianu.
\newblock {\em {Foundations of Databases}}.
\newblock Addison-Wesley Publishing Company, Reading, MA, 1st edition, 1995.

\bibitem{Abiteboul+2009}
Serge Abiteboul, Benny Kimelfeld, Yehoshua Sagiv, and Pierre Senellart.
\newblock {On the Expressiveness of Probabilistic XML Models}.
\newblock {\em The VLDB Journal}, 18(5):1041--1064, 2009.
\newblock \href {https://doi.org/10.1007/s00778-009-0146-1}
  {\path{doi:10.1007/s00778-009-0146-1}}.

\bibitem{Aggarwal2009}
Charu~C. Aggarwal, editor.
\newblock {\em {Managing and Mining Uncertain Data}}, volume~35 of {\em
  Advances in Database Systems}.
\newblock Springer Science+Business Media, LLC, Boston, MA, USA, 2009.
\newblock \href {https://doi.org/10.1007/978-0-387-09690-2}
  {\path{doi:10.1007/978-0-387-09690-2}}.

\bibitem{AggarwalYu2009}
Charu~C. Aggarwal and Philip~S. Yu.
\newblock {A Survey of Uncertain Data Algorithms and Applications}.
\newblock {\em IEEE Transactions on Knowledge and Data Engineering},
  21(5):609--623, 2009.
\newblock \href {https://doi.org/10.1109/TKDE.2008.190}
  {\path{doi:10.1109/TKDE.2008.190}}.

\bibitem{AgrawalWidom2009}
Parag Agrawal and Jennifer Widom.
\newblock {Continuous Uncertainty in Trio}.
\newblock In Ander de~Keijzer and Maurice van Keulen, editors, {\em Proceedings
  of the Third VLDB workshop on Management of Uncertain Data (MUD2009) in
  conjunction with VLDB 2009, Lyon, France, August 28th, 2009}, volume WP09-14
  of {\em CTIT Workshop Proceedings Series}, pages 17--32. Centre for
  Telematics and Information Technology (CTIT), University of Twente, The
  Netherlands, 2009.

\bibitem{Andritsos+2006}
P.~Andritsos, A.~Fuxman, and R.J. Miller.
\newblock {Clean Answers over Dirty Databases: A Probabilistic Approach}.
\newblock In {\em Proceedings of the 22nd International Conference on Data
  Engineering (ICDE 2006)}, pages 30:1--30:18. IEEE Computer Society, 2006.
\newblock \href {https://doi.org/10.1109/ICDE.2006.35}
  {\path{doi:10.1109/ICDE.2006.35}}.

\bibitem{Antova+2009}
Lyublena Antova, Christoph Koch, and Dan Olteanu.
\newblock {{$10^{(10^6)}$} Worlds and Beyond: Efficient Representation and
  Processing of Incomplete Information}.
\newblock {\em The VLDB Journal}, 18(5):1021--1040, 2009.
\newblock \href {https://doi.org/10.1007/s00778-009-0149-y}
  {\path{doi:10.1007/s00778-009-0149-y}}.

\bibitem{AroraBarak2009}
Sanjeev Arora and Boaz Barak.
\newblock {\em {Computational Complexity: A Modern Approach}}.
\newblock Cambridge University Press, 2009.
\newblock \href {https://doi.org/10.1017/CBO9780511804090}
  {\path{doi:10.1017/CBO9780511804090}}.

\bibitem{Baddeley2007}
Adrian Baddeley.
\newblock {\em {Spatial Point Processes and their Applications}}, volume 1892
  of {\em Lecture Notes in Mathematics}, pages 1--75.
\newblock Springer, 2007.
\newblock \href {https://doi.org/10.1007/978-3-540-38175-4_1}
  {\path{doi:10.1007/978-3-540-38175-4_1}}.

\bibitem{Barany+2017}
Vince B\'{a}r\'{a}ny, Balder ten Cate, Benny Kimelfeld, Dan Olteanu, and
  Zografoula Vagena.
\newblock {Declarative Probabilistic Programming with Datalog}.
\newblock {\em ACM Transactions on Database Systems (TODS)}, 42(4):22:1--22:35,
  2017.
\newblock \href {https://doi.org/10.1145/3132700} {\path{doi:10.1145/3132700}}.

\bibitem{Belle2017}
Vaishak Belle.
\newblock Open-universe weighted model counting.
\newblock In {\em Proceedings of the Thirty-First AAAI Conference on Artificial
  Intelligence (AAAI 2017)}, pages 3701--3708, San Francisco, California, USA,
  2017. AAAI Press.

\bibitem{Belle2020}
Vaishak Belle.
\newblock {Symbolic Logic Meets Machine Learning: A Brief Survey in Infinite
  Domains}.
\newblock In Jesse Davis and Karim Tabia, editors, {\em Scalable Uncertainty
  Management}, Lecture Notes in Computer Science, page 3–16. Springer
  International Publishing, 2020.
\newblock \href {https://doi.org/10.1007/978-3-030-58449-8\_1}
  {\path{doi:10.1007/978-3-030-58449-8\_1}}.

\bibitem{Benedikt+2010}
Michael Benedikt, Evgeny Kharlamov, Dan Olteanu, and Pierre Senellart.
\newblock {Probabilistic XML via Markov Chains}.
\newblock {\em Proceedings of the VLDB Endowment}, 3(1--2):770--781, 2010.
\newblock \href {https://doi.org/10.14778/1920841.1920939}
  {\path{doi:10.14778/1920841.1920939}}.

\bibitem{Bollobas2001}
B\'{e}la Bollob\'{a}s.
\newblock {\em {Random Graphs}}.
\newblock Cambridge Studies in Advanced Mathematics. Cambridge University
  Press, Cambridge, UK, 2 edition, 2001.
\newblock \href {https://doi.org/10.1017/CBO9780511814068}
  {\path{doi:10.1017/CBO9780511814068}}.

\bibitem{Borgwardt+2017}
Stefan Borgwardt, Ismail~Ilkan Ceylan, and Thomas Lukasiewicz.
\newblock {Ontology-Mediated Queries for Probabilistic Databases}.
\newblock In {\em Proceedings of the 31st AAAI Conference on Artificial
  Intelligence (AAAI 2017)}, pages 1063--1069. AAAI Press, 2017.

\bibitem{Borgwardt+2018}
Stefan Borgwardt, İsmail~İlkan Ceylan, and Thomas Lukasiewicz.
\newblock {Recent Advances in Querying Probabilistic Knowledge Bases}.
\newblock In {\em Proceedings of the Twenty-Seventh International Joint
  Conference on Artificial Intelligence (IJCAI 2018)}, page 5420–5426.
  International Joint Conferences on Artificial Intelligence Organization,
  2018.
\newblock \href {https://doi.org/10.24963/ijcai.2018/765}
  {\path{doi:10.24963/ijcai.2018/765}}.

\bibitem{Bromwich1926}
T.~J.~I’A. Bromwich.
\newblock {\em An Introduction to the Theory of Infinite Series}.
\newblock Macmillan and Company Ltd., London, UK, 2nd, revised edition, 1926.

\bibitem{Cai+2013}
Zhuhua Cai, Zografoula Vagena, Luis Perez, Subramanian Arumugam, Peter~J. Haas,
  and Christopher Jermaine.
\newblock {Simulation of Database-Valued Markov Chains Using SimSQL}.
\newblock In {\em Proceedings of the 2013 ACM SIGMOD International Conference
  on Management of Data (SIGMOD 2013)}, pages 637--648, New York, NY, USA,
  2013. Association for Computing Machinery.
\newblock \href {https://doi.org/10.1145/2463676.2465283}
  {\path{doi:10.1145/2463676.2465283}}.

\bibitem{Carmeli+2021}
Nofar Carmeli, Martin Grohe, Peter Lindner, and Christoph Standke.
\newblock {Tuple-Independent Representations of Infinite Probabilistic
  Databases}.
\newblock In {\em Proceedings of the 40th ACM SIGMOD-SIGACT-SIGAI Symposium on
  Principles of Database Systems (PODS 2021)}, pages 388--401. ACM, 2021.
\newblock \href {https://doi.org/10.1145/3452021.3458315}
  {\path{doi:10.1145/3452021.3458315}}.

\bibitem{Ceylan+2016}
Ismail~Ilkan Ceylan, Adnan Darwiche, and Guy Van~den Broeck.
\newblock Open-world probabilistic databases.
\newblock In {\em Proceedings, Fifteenth International Conference on Principles
  of Knowledge Representation and Reasoning (KR 2016)}, page~10. AAAI Press,
  2016.
\newblock URL:
  \url{https://www.aaai.org/ocs/index.php/KR/KR16/paper/view/12908}.

\bibitem{Console+2020}
Marco Console, Matthias Hofer, and Leonid Libkin.
\newblock {Queries with Arithmetic on Incomplete Databases}.
\newblock In {\em Proceedings of the 39th ACM SIGMOD-SIGACT-SIGAI Symposium on
  Principles of Database Systems (PODS 2020)}, pages 179--189, New York, NY,
  USA, 2020. Association for Computing Machinery.
\newblock \href {https://doi.org/10.1145/3375395.3387666}
  {\path{doi:10.1145/3375395.3387666}}.

\bibitem{DaleyVere-Jones2008}
D.~J. Daley and D.~Vere-Jones.
\newblock {\em {An Introduction to the Theory of Point Processes, Volume II:
  General Theory and Structure}}.
\newblock Probability and its Applications. Springer, New York, NY, USA, 2nd
  edition, 2008.
\newblock \href {https://doi.org/10.1007/978-0-387-49835-5}
  {\path{doi:10.1007/978-0-387-49835-5}}.

\bibitem{DaleyVere-Jones2003}
{D.\,J.} Daley and D.~Vere-Jones.
\newblock {\em {An Introduction to the Theory of Point Processes. Volume I:
  Elementary Theory and Methods}}.
\newblock Probability and its Applications. Springer, New York, NY, USA, 2nd
  edition, 2003.
\newblock \href {https://doi.org/10.1007/b97277} {\path{doi:10.1007/b97277}}.

\bibitem{Dalvi+2004}
Nilesh Dalvi, Gerome Miklau, and Dan Suciu.
\newblock {Asymptotic Conditional Probabilities for Conjunctive Queries}.
\newblock In Thomas Eiter and Leonid Libkin, editors, {\em Database Theory -
  ICDT 2005}, volume 3363 of {\em Lecture Notes in Computer Science}, pages
  289--305. Springer-Verlag Berlin Heidelberg, 2004.
\newblock \href {https://doi.org/10.1007/978-3-540-30570-5\_20}
  {\path{doi:10.1007/978-3-540-30570-5\_20}}.

\bibitem{Dalvi+2009}
Nilesh Dalvi, Christopher Ré, and Dan Suciu.
\newblock {Probabilistic Databases: Diamonds in the Dirt}.
\newblock {\em Communications of the ACM}, 52(7):86–94, 2009.
\newblock \href {https://doi.org/10.1145/1538788.1538810}
  {\path{doi:10.1145/1538788.1538810}}.

\bibitem{DalviSuciu2004}
Nilesh Dalvi and Dan Suciu.
\newblock {Efficient Query Evaluation on Probabilistic Databases}.
\newblock In {\em Proceedings of the Thirtieth International Conference on Very
  Large Data Bases - Volume 30 (VLDB 2004)}, page 864–875. VLDB Endowment,
  2004.

\bibitem{DalviSuciu2007b}
Nilesh Dalvi and Dan Suciu.
\newblock {Management of Probabilistic Data: Foundations and Challenges}.
\newblock In {\em Proceedings of the Twenty-Sixth ACM SIGMOD-SIGACT-SIGART
  Symposium on Principles of Database Systems (PODS 2007)}, pages 1--12.
  Association for Computing Machinery, 2007.
\newblock \href {https://doi.org/10.1145/1265530.1265531}
  {\path{doi:10.1145/1265530.1265531}}.

\bibitem{DalviSuciu2012}
Nilesh Dalvi and Dan Suciu.
\newblock {The Dichotomy of Probabilistic Inference for Unions of Conjunctive
  Queries}.
\newblock {\em Journal of the ACM}, 59(6):1--87, 2012.
\newblock \href {https://doi.org/10.1145/2395116.2395119}
  {\path{doi:10.1145/2395116.2395119}}.

\bibitem{DashStaton2021}
Swaraj Dash and Sam Staton.
\newblock {A Monad for Probabilistic Point Processes}.
\newblock In {\em Proceedings of the 3rd Annual International Applied Category
  Theory Conference 2020 (ACT 2020)}, volume 333 of {\em Electronic Proceedings
  in Theoretical Computer Science}, pages 19--32. Open Publishing Association,
  2021.
\newblock \href {https://doi.org/10.4204/EPTCS.333.2}
  {\path{doi:10.4204/EPTCS.333.2}}.

\bibitem{DeRaedt+2016}
Luc De~Raedt, Kristian Kersting, Sriraam Natarajan, and David Poole.
\newblock {\em Statistical Relational Artificial Intelligence: Logic,
  Probability, and Computation}, volume~10 of {\em Synthesis Lectures on
  Artificial Intelligence and Machine Learning}.
\newblock Morgan \& Claypool Publishers, San Rafael, CA, USA, 2016.
\newblock \href {https://doi.org/10.2200/S00692ED1V01Y201601AIM032}
  {\path{doi:10.2200/S00692ED1V01Y201601AIM032}}.

\bibitem{DeRaedt+2007}
Luc De~Raedt, Angelika Kimmig, and Hannu Toivonen.
\newblock {ProbLog: A Probabilistic Prolog and Its Application in Link
  Discovery}.
\newblock In {\em Proceedings of the 20th International Joint Conference on
  Artificial Intelligence (IJCAI 2007)}, pages 2468--2473, San Francisco, CA,
  USA, 2007. Morgan Kaufmann Publishers Inc.
\newblock URL: \url{https://www.ijcai.org/Proceedings/07/Papers/397.pdf}.

\bibitem{DeSa+2019}
Christopher De~Sa, Ihab~F. Ilyas, Benny Kimelfeld, Christopher R\'{e}, and
  Theodoros Rekatsinas.
\newblock {A Formal Framework for Probabilistic Unclean Databases}.
\newblock In Pablo Barcelo and Marco Calautti, editors, {\em 22nd International
  Conference on Database Theory (ICDT 2019)}, volume 127 of {\em Leibniz
  International Proceedings in Informatics (LIPIcs)}, page 6:1–6:18,
  Dagstuhl, Germany, 2019. Schloss Dagstuhl–Leibniz-Zentrum für Informatik.
\newblock \href {https://doi.org/10.4230/LIPIcs.ICDT.2019.6}
  {\path{doi:10.4230/LIPIcs.ICDT.2019.6}}.

\bibitem{Deshpande+2004}
Amol Deshpande, Carlos Guestrin, Samuel~R. Madden, Joseph~M. Hellerstein, and
  Wei Hong.
\newblock {Model-Driven Data Acquisition in Sensor Networks}.
\newblock In {\em Proceedings 2004 VLDB Conference}, pages 588--599. Morgan
  Kaufmann, 2004.
\newblock \href {https://doi.org/10.1016/B978-012088469-8.50053-X}
  {\path{doi:10.1016/B978-012088469-8.50053-X}}.

\bibitem{Dong+2014}
Xin~Luna Dong, Evgeniy Gabrilovich, Geremy Heitz, Wilko Horn, Ni~Lao, Kevin
  Murphy, Thomas Strohmann, Shaohua Sun, and Wei Zhang.
\newblock {Knowledge Vault: A Web-Scale Approach to Probabilistic Knowledge
  Fusion}.
\newblock In {\em Proceedings of the 20th ACM SIGKDD International Conference
  on Knowledge Discovery and Data Mining (KDD 2014)}, pages 601--610, New York,
  NY, USA, 2014. ACM.
\newblock \href {https://doi.org/10.1145/2623330.2623623}
  {\path{doi:10.1145/2623330.2623623}}.

\bibitem{Faradjian+2002}
A.~Faradjian, J.~Gehrke, and P.~Bonnett.
\newblock {GADT: A Probability Space ADT for Representing and Querying the
  Physical World}.
\newblock In {\em Proceedings 18th International Conference on Data Engineering
  (ICDE 2002)}, pages 201--211. IEEE Comput. Soc, 2002.
\newblock URL: \url{http://ieeexplore.ieee.org/document/994710/}, \href
  {https://doi.org/10.1109/ICDE.2002.994710}
  {\path{doi:10.1109/ICDE.2002.994710}}.

\bibitem{Feller1968}
William Feller.
\newblock {\em {An Introduction to Probability Theory and Its Applications,
  Volume I}}.
\newblock John Wiley \&{} Sons, Inc., New York, NY, USA, 3rd edition, 1968.

\bibitem{Fremlin2013}
David~H. Fremlin.
\newblock {\em {Measure Theory. Volume 4, Part 1: Topological Measure Spaces}}.
\newblock Fremlin, 2013 edition, 2013.

\bibitem{FriedmanVanDenBroeck2019}
Tal Friedman and Guy Van~den Broeck.
\newblock {On Constrained Open-World Probabilistic Databases}.
\newblock In {\em Proceedings of the Twenty-Eighth International Joint
  Conference on Artificial Intelligence (IJCAI 2019)}, pages 5722--5729.
  International Joint Conferences on Artificial Intelligence Organization,
  2019.
\newblock \href {https://doi.org/10.24963/ijcai.2019/793}
  {\path{doi:10.24963/ijcai.2019/793}}.

\bibitem{FuhrRolleke1997}
Norbert Fuhr and Thomas R\"{o}lleke.
\newblock {A Probabilistic Relational Algebra for the Integration of
  Information Retrieval and Database Systems}.
\newblock {\em ACM Transactions on Information Systems}, 15(1):32--66, 1997.
\newblock \href {https://doi.org/10.1145/239041.239045}
  {\path{doi:10.1145/239041.239045}}.

\bibitem{Goodman+2008}
Noah~D. Goodman, Vikash~K. Mansinghka, Daniel Roy, Keith Bonawitz, and
  Joshua~B. Tenenbaum.
\newblock {Church: A Language for Generative Models}.
\newblock In {\em Proceedings of the Twenty-Fourth Conference on Uncertainty in
  Artificial Intelligence (UAI 2008)}, pages 220--229. AUAI Press, 2008.

\bibitem{Gordon+2014}
Andrew~D. Gordon, Thomas~A. Henzinger, Aditya~V. Nori, and Sriram~K. Rajamani.
\newblock {Probabilistic Programming}.
\newblock In {\em Future of Software Engineering Proceedings (FOSE 2014)},
  pages 167--181, New York, NY, USA, 2014. Association for Computing Machinery.
\newblock \href {https://doi.org/10.1145/2593882.2593900}
  {\path{doi:10.1145/2593882.2593900}}.

\bibitem{Gradel+1998}
Erich Gr\"{a}del, Yuri Gurevich, and Colin Hirsch.
\newblock {The Complexity of query reliability}.
\newblock In {\em Proceedings of the Seventeenth ACM SIGACT-SIGMOD-SIGART
  Symposium on Principles of Database Systems (PODS 1998)}, pages 227--234. ACM
  Press, 1998.
\newblock \href {https://doi.org/10.1145/275487.295124}
  {\path{doi:10.1145/275487.295124}}.

\bibitem{GreenTannen2006}
Todd~J. Green and Val Tannen.
\newblock {Models for Incomplete and Probabilistic Information}.
\newblock In Torsten Grust, Hagen Höpfner, Arantza Illarramendi, Stefan
  Jablonski, Marco Mesiti, Sascha Müller, Paula-Lavinia Patranjan, Kai-Uwe
  Sattler, Myra Spiliopoulou, and Jef Wijsen, editors, {\em Current Trends in
  Database Technology – EDBT 2006}, page 278–296, Berlin, Germany, 2006.
  Springer-Verlag Berlin Heidelberg.
\newblock Revised paper from the IIDB 2006 International Workshop on
  Inconsistency and Incompleteness in Databases.
\newblock \href {https://doi.org/10.1007/11896548_24}
  {\path{doi:10.1007/11896548_24}}.

\bibitem{Grohe+2020}
Martin Grohe, Benjamin~Lucien Kaminski, Joost-Pieter Katoen, and Peter Lindner.
\newblock {Generative Datalog with Continuous Distributions}.
\newblock In {\em Proceedings of the 39th ACM SIGMOD-SIGACT-SIGAI Symposium on
  Principles of Database Systems (PODS 2020)}, pages 347--360, New York, NY,
  USA, 2020. Association for Computing Machinery.
\newblock \href {https://doi.org/10.1145/3375395.3387659}
  {\path{doi:10.1145/3375395.3387659}}.

\bibitem{GroheLindner2019}
Martin Grohe and Peter Lindner.
\newblock {Probabilistic Databases with an Infinite Open-World Assumption}.
\newblock In {\em Proceedings of the 38th ACM SIGMOD-SIGACT-SIGAI Symposium on
  Principles of Database Systems (PODS 2019)}, pages 17--31, New York, NY, USA,
  2019. ACM.
\newblock Extended version available on arXiv e-prints:
  \href{https://arxiv.org/abs/1807.00607}{arXiv:1807.00607 [cs.DB]}.
\newblock \href {https://doi.org/10.1145/3294052.3319681}
  {\path{doi:10.1145/3294052.3319681}}.

\bibitem{GroheLindner2020}
Martin Grohe and Peter Lindner.
\newblock {Infinite Probabilistic Databases}.
\newblock In Carsten Lutz and Jean~Christoph Jung, editors, {\em 23rd
  International Conference on Database Theory (ICDT 2020)}, volume 155 of {\em
  Leibniz International Proceedings in Informatics (LIPIcs)}, pages
  16:1--16:20, Dagstuhl, Germany, 2020. Schloss Dagstuhl--Leibniz-Zentrum
  f{\"u}r Informatik.
\newblock Extended version available on arXiv e-prints:
  \href{https://arxiv.org/abs/1904.06766}{arXiv:1904.06766 [cs.DB]}.
\newblock \href {https://doi.org/10.4230/LIPIcs.ICDT.2020.16}
  {\path{doi:10.4230/LIPIcs.ICDT.2020.16}}.

\bibitem{GroheLindner2022}
Martin Grohe and Peter Lindner.
\newblock {Infinite Probabilistic Databases}.
\newblock {\em Logical Methods in Computer Science}, 18(1), Feb 2022.
\newblock \href {https://doi.org/10.46298/lmcs-18(1:34)2022}
  {\path{doi:10.46298/lmcs-18(1:34)2022}}.

\bibitem{Grohe+2022}
Martin Grohe, Peter Lindner, and Christoph Standke.
\newblock {Probabilistic Query Evaluation with Bag Semantics}, 2022.
\newblock URL: \url{http://arxiv.org/abs/2201.11524}, \href
  {http://arxiv.org/abs/2201.11524v2} {\path{arXiv:2201.11524v2}}.

\bibitem{Gutmann+2011}
Bernd Gutmann, Manfred Jaeger, and Luc De~Raedt.
\newblock {Extending ProbLog with Continuous Distributions}.
\newblock In Paolo Frasconi and Francesca~A. Lisi, editors, {\em Inductive
  Logic Programming (ILP 2010)}, Lecture Notes in Artificial Intelligence, page
  76–91, Heidelberg, Germany, 2011. Springer-Verlag Berlin Heidelberg.
\newblock \href {https://doi.org/10.1007/978-3-642-21295-6\_12}
  {\path{doi:10.1007/978-3-642-21295-6\_12}}.

\bibitem{ImielinskiLipski1984}
Tomasz Imieli\'{n}ski and Witold Lipski.
\newblock {Incomplete Information in Relational Databases}.
\newblock {\em Journal of the ACM}, 31(4):761--791, September 1984.
\newblock \href {https://doi.org/10.1145/1634.1886}
  {\path{doi:10.1145/1634.1886}}.

\bibitem{Jampani+2011}
Ravi Jampani, Fei Xu, Mingxi Wu, Luis Perez, Chris Jermaine, and Peter~J. Haas.
\newblock {The Monte Carlo Database System: Stochastic Analysis Close to the
  Data}.
\newblock {\em ACM Transactions on Database Systems}, 36(3):1–41, 2011.
\newblock \href {https://doi.org/10.1145/2000824.2000828}
  {\path{doi:10.1145/2000824.2000828}}.

\bibitem{JhaSuciu2012}
Abhay Jha and Dan Suciu.
\newblock {Probabilistic Databases with MarkoViews}.
\newblock {\em Proceedings of the VLDB Endowment}, 5(11):1160–1171, 2012.
\newblock \href {https://doi.org/10.14778/2350229.2350236}
  {\path{doi:10.14778/2350229.2350236}}.

\bibitem{Kallenberg2002}
Olav Kallenberg.
\newblock {\em {Foundations of Modern Probability}}.
\newblock Springer Series in Statistics. Probability and its Applications.
  Springer Science+Business Media, New York, NY, USA, 2nd edition, 2002.
\newblock \href {https://doi.org/10.1007/978-1-4757-4015-8}
  {\path{doi:10.1007/978-1-4757-4015-8}}.

\bibitem{Kallenberg2017}
Olav Kallenberg.
\newblock {\em {Random Measures, Theory and Applications}}, volume~77 of {\em
  Probability Theory and Stochastic Modelling}.
\newblock Springer International Publishing, Cham, CH, 2017.
\newblock \href {https://doi.org/10.1007/978-3-319-41598-7}
  {\path{doi:10.1007/978-3-319-41598-7}}.

\bibitem{KennedyKoch2010}
Oliver Kennedy and Christoph Koch.
\newblock {PIP: A Database System for Great and Small Expectations}.
\newblock In {\em 2010 IEEE 26th International Conference on Data Engineering
  (ICDE 2010), March 1--6, 2010, Long Beach, California, USA}, pages 157--168,
  Los Alamitos, CA, USA, 2010. IEEE Computer Society.
\newblock \href {https://doi.org/10.1109/ICDE.2010.5447879}
  {\path{doi:10.1109/ICDE.2010.5447879}}.

\bibitem{KimelfeldSenellart2013}
Benny Kimelfeld and Pierre Senellart.
\newblock {Probabilistic XML: Models and Complexity}.
\newblock In Zongmin Ma and Li~Yan, editors, {\em Advances in Probabilistic
  Databases for Uncertain Information Management}, volume 304 of {\em Studies
  in Fuzziness and Soft Computing}, pages 39--66. Springer-Verlag Berlin
  Heidelberg, Berlin, Germany, 2013.
\newblock \href {https://doi.org/10.1007/978-3-642-37509-5\_3}
  {\path{doi:10.1007/978-3-642-37509-5\_3}}.

\bibitem{Kingman1967}
J.~F.~C. Kingman.
\newblock {Completely Random Measures}.
\newblock {\em Pacific Journal of Mathematics}, 21(1):59--78, 1967.

\bibitem{Klenke2014}
Achim Klenke.
\newblock {\em Probability Theory: A Comprehensive Course}.
\newblock Universitext. Springer London, London, UK, 2014.
\newblock \href {https://doi.org/10.1007/978-1-4471-5361-0}
  {\path{doi:10.1007/978-1-4471-5361-0}}.

\bibitem{Knopp1996}
Konrad Knopp.
\newblock {\em Theorie und Anwendung der unendlichen Reihen}.
\newblock Springer-Verlag Berlin Heidelberg, Berlin, Germany, 6th edition,
  1996.
\newblock \href {https://doi.org/10.1007/978-3-642-61406-4}
  {\path{doi:10.1007/978-3-642-61406-4}}.

\bibitem{Kozen1997}
Dexter Kozen.
\newblock {\em Automata and Computability}.
\newblock Undergraduate Texts in Computer Science. Springer-Verlag, New York,
  NY, USA, 1st edition, 1997.
\newblock URL: \url{https://www.springer.com/de/book/9780387949079}, \href
  {https://doi.org/10.1007/978-1-4612-1844-9}
  {\path{doi:10.1007/978-1-4612-1844-9}}.

\bibitem{LastPenrose2017}
Günter Last and Mathew Penrose.
\newblock {\em {Lectures on the Poisson Process}}.
\newblock Cambridge University Press, Cambridge, United Kingdom, 1st edition,
  2017.
\newblock \href {https://doi.org/10.1017/9781316104477}
  {\path{doi:10.1017/9781316104477}}.

\bibitem{Libkin2018}
Leonid Libkin.
\newblock {Certain Answers Meet Zero-One Laws}.
\newblock In {\em Proceedings of the 37th ACM SIGMOD-SIGACT-SIGAI Symposium on
  Principles of Database Systems (PODS 2018)}, pages 195--207, New York, NY,
  USA, 2018. Association for Computing Machinery.
\newblock \href {https://doi.org/10.1145/3196959.3196983}
  {\path{doi:10.1145/3196959.3196983}}.

\bibitem{Milch+2005}
Brian Milch, Bhaskara Marthi, Stuart Russell, David Sontag, Daniel~L Ong, and
  Andrey Kolobov.
\newblock {BLOG: Probabilistic Models with Unknown Objects}.
\newblock In {\em Proceedings of the 19th International Joint Conference on
  Artificial Intelligence (IJCAI 2005)}, pages 1352--1359, San Francisco, CA,
  USA, 2005. Morgan Kaufmann, Inc.

\bibitem{Mitchell+2018}
T.~Mitchell, W.~Cohen, E.~Hruschka, P.~Talukdar, B.~Yang, J.~Betteridge,
  A.~Carlson, B.~Dalvi, M.~Gardner, B.~Kisiel, J.~Krishnamurthy, N.~Lao,
  K.~Mazaitis, T.~Mohamed, N.~Nakashole, E.~Platanios, A.~Ritter, M.~Samadi,
  B.~Settles, R.~Wang, D.~Wijaya, A.~Gupta, X.~Chen, A.~Saparov, M.~Greaves,
  and J.~Welling.
\newblock Never-ending learning.
\newblock {\em Communications of the ACM}, 61(5):103--115, 2018.
\newblock \href {https://doi.org/10.1145/3191513} {\path{doi:10.1145/3191513}}.

\bibitem{Pfeffer2009}
Avi Pfeffer.
\newblock {Figaro: An Object-Oriented Probabilistic Programming Language}.
\newblock Technical report, Charles River Analytics, 2009.

\bibitem{Reiter1981}
Raymond Reiter.
\newblock {On Closed World Databases}.
\newblock In Bonnie~Lynn Webber and Nils~J. Nilsson, editors, {\em Readings in
  Artificial Intelligence}, pages 119–--140. Morgan Kaufmann, 1981.
\newblock URL: \url{10.1016/B978-0-934613-03-3.50014-3}.

\bibitem{Rice1953}
H.~G. Rice.
\newblock {Classes of Recursively Enumerable Sets and Their Decision Problems}.
\newblock {\em Transactions of the American Mathematical Society},
  74(2):358--366, 1953.
\newblock \href {https://doi.org/10.1090/S0002-9947-1953-0053041-6}
  {\path{doi:10.1090/S0002-9947-1953-0053041-6}}.

\bibitem{RichardsonDomingos2006}
Matthew Richardson and Pedro Domingos.
\newblock {Markov Logic Networks}.
\newblock {\em Machine Learning}, 62(1--2):107--136, 2006.
\newblock \href {https://doi.org/10.1007/s10994-006-5833-1}
  {\path{doi:10.1007/s10994-006-5833-1}}.

\bibitem{Sato1995}
Taisuke Sato.
\newblock A statistical learning method for logic programs with distribution
  semantics.
\newblock In {\em Proceedings of the 12th International Conference on Logic
  Programming (ICLP 1995)}, pages 715--729. MIT Press, 1995.

\bibitem{Singh+2008a}
Sarvjeet Singh, Chris Mayfield, Sagar Mittal, Sunil Prabhakar, Susanne
  Hambrusch, and Rahul Shah.
\newblock {Orion 2.0: Native Support for Uncertain Data}.
\newblock In {\em Proceedings of the 2008 ACM SIGMOD International Conference
  on Management of Data (2008)}, page 1239–1242, New York, NY, USA, 2008.
  ACM.
\newblock \href {https://doi.org/10.1145/1376616.1376744}
  {\path{doi:10.1145/1376616.1376744}}.

\bibitem{Singh+2008b}
Sarvjeet Singh, Chris Mayfield, Rahul Shah, Sunil Prabhakar, Susanne Hambrusch,
  Jennifer Neville, and Reynold Cheng.
\newblock {Database Support for Probabilistic Attributes and Tuples}.
\newblock In {\em Proceedings of the 2008 IEEE 24th International Conference on
  Data Engineering (ICDE 2008)}, pages 1053--1061. IEEE, 2008.
\newblock \href {https://doi.org/10.1109/ICDE.2008.4497514}
  {\path{doi:10.1109/ICDE.2008.4497514}}.

\bibitem{SinglaDomingos2007}
Parag Singla and Pedro Domingos.
\newblock {Markov Logic in Infinite Domains}.
\newblock In {\em Proceedings of the 23rd Conference on Uncertainty in
  Artificial Intelligence (UAI 2007)}, pages 368--375. AUAI Press, 2007.

\bibitem{Spencer2001}
Joel Spencer.
\newblock {\em {The Strange Logic of Random Graphs}}.
\newblock Algorithms and Combinatorics. Springer-Verlag, Berlin, Germany, 2001.
\newblock \href {https://doi.org/10.1007/978-3-662-04538-1}
  {\path{doi:10.1007/978-3-662-04538-1}}.

\bibitem{Stoyanovich+2011}
Julia Stoyanovich, Susan Davidson, Tova Milo, and Val Tannen.
\newblock {Deriving Probabilistic Databases with Inference Ensembles}.
\newblock In {\em 2011 IEEE 27th International Conference on Data Engineering
  (ICDE 2011)}, page 303–314. IEEE, 2011.
\newblock \href {https://doi.org/10.1109/ICDE.2011.5767854}
  {\path{doi:10.1109/ICDE.2011.5767854}}.

\bibitem{Suciu+2011}
Dan Suciu, Dan Olteanu, Christopher R\'{e}, and Christoph Koch.
\newblock {\em {Probabilistic Databases}}.
\newblock Synthesis Lectures on Data Management. Morgan \&{} Claypool, San
  Rafael, CA, USA, 1st edition, 2011.
\newblock \href {https://doi.org/10.2200/S00362ED1V01Y201105DTM016}
  {\path{doi:10.2200/S00362ED1V01Y201105DTM016}}.

\bibitem{Tolpin+2015}
David Tolpin, Jan-Willem van~de Meent, and Frank Wood.
\newblock {Probabilistic Programming in Anglican}.
\newblock In Albert Bifet, Michael May, Bianca Zadrozny, Ricard Gavalda, Dino
  Pedreschi, Francesco Bonchi, Jaime Cardoso, and Myra Spiliopoulou, editors,
  {\em Machine Learning and Knowledge Discovery in Databases}, Lecture Notes in
  Computer Science, pages 308--311, Cham, Switzerland, 2015. Springer
  International Publishing.
\newblock \href {https://doi.org/10.1007/978-3-319-23461-8\_36}
  {\path{doi:10.1007/978-3-319-23461-8\_36}}.

\bibitem{VanDenBroeckSuciu2017}
Guy Van~den Broeck and Dan Suciu.
\newblock {Query Processing on Probabilistic Data: A Survey}.
\newblock {\em Foundations and Trends\textregistered{} in Databases},
  7(3--4):197--341, 2017.
\newblock \href {https://doi.org/10.1561/1900000052}
  {\path{doi:10.1561/1900000052}}.

\bibitem{vanderMeyden1998}
Ron van~der Meyden.
\newblock {Logical Approaches to Incomplete Information: A Survey}.
\newblock In Jan Chomicki and Gunter Saake, editors, {\em Logics for Databases
  and Information Systems}, The Springer International Series in Engineering
  and Computer Science, page 307–356. Kluwer Academic Publishers, Boston, MA,
  USA, 1998.
\newblock \href {https://doi.org/10.1007/978-1-4615-5643-5\_10}
  {\path{doi:10.1007/978-1-4615-5643-5\_10}}.

\bibitem{WangDomingos2008}
Jue Wang and Pedro Domingos.
\newblock Hybrid markov logic networks.
\newblock {\em Proceedings of the 23rd AAAI Conference on Artificial
  Intelligence (AAAI 2008)}, pages 1106--1111, 2008.
\newblock URL: \url{https://www.aaai.org/Papers/AAAI/2008/AAAI08-175.pdf}.

\bibitem{Wang1993}
Y.~H. Wang.
\newblock {On the Number of Successes in Independent Trials}.
\newblock {\em Statistica Sinica}, 3(2):295--312, 1993.

\bibitem{Widom2009}
Jennifer Widom.
\newblock {Trio: A System for Data, Uncertainty, and Lineage}.
\newblock In Charu~C. Aggarwal, editor, {\em Managing and Mining Uncertain
  Data}, volume~35 of {\em Advances in Database Systems}, page~34. Springer
  Science+Business, LLC, Boston, MA, USA, 2009.
\newblock \href {https://doi.org/10.1007/978-0-387-09690-2_5}
  {\path{doi:10.1007/978-0-387-09690-2_5}}.

\bibitem{Willard2004}
Stephen Willard.
\newblock {\em {General Topology}}.
\newblock Dover Publications, Inc., Mineola, NY, USA, {Dover reprint} edition,
  2004.

\bibitem{Wu+2018}
Yi~Wu, Siddharth Srivastava, Nicholas Hay, Simon Du, and Stuart Russell.
\newblock {Discrete-Continuous Mixtures in Probabilistic Programming:
  Generalized Semantics and Inference Algorithms}.
\newblock In Jennifer Dy and Andreas Krause, editors, {\em Proceedings of the
  35th International Conference on Machine Learning (ICML 2018)}, volume~80,
  pages 5343--5352. PMLR, 2018.

\bibitem{Zhang2015}
Ce~Zhang.
\newblock {\em {DeepDive: A Data Management System for Automatic Knowledge Base
  Construction}}.
\newblock PhD thesis, University of Wisconsin--Madison, 2015.

\bibitem{ZimanyiPirotte1997}
Esteban Zim\'{a}nyi and Alain Pirotte.
\newblock {Imperfect Information in Relational Databases}.
\newblock In Amihai Motro and Philippe Smets, editors, {\em Uncertainty
  Management in Information Systems}, pages 35--87. Springer Science+Business
  Media, LLC, reprint of 1st edition, 1997.
\newblock \href {https://doi.org/10.1007/978-1-4615-6245-0\_3}
  {\path{doi:10.1007/978-1-4615-6245-0\_3}}.

\end{thebibliography}

\end{document}